%
%

\documentclass[a4paper,12pt,notitlepage]{article}
\usepackage[utf8]{inputenc}
\usepackage{exscale}
\usepackage{amsmath}
\usepackage{amsthm}
\usepackage{amssymb}
\usepackage{graphicx}
\usepackage{xcolor}
\usepackage{enumerate}
\usepackage{xspace}
\usepackage{url}
\usepackage{color}
\usepackage{subfigure}
\usepackage[round]{natbib}
\usepackage{hyperref}
%
\hypersetup{  colorlinks=true,%
              citecolor=black,%
              filecolor=black,%
              linkcolor=black,%
              urlcolor=black,%
}
%
\theoremstyle{plain}
\newtheorem{theorem}{Theorem}[section]
\newtheorem*{theorem*}{Theorem}
\newtheorem{lemma}[theorem]{Lemma}
\newtheorem*{lemma*}{Lemma}
\newtheorem{proposition}[theorem]{Proposition}
\newtheorem*{proposition*}{Proposition}
\newtheorem{corollary}[theorem]{Corollary}
\newtheorem*{corollary*}{Corollary}

\newtheorem*{condition*}{Condition}
\newtheorem{definition}[theorem]{Definition}
\newtheorem*{definition*}{Definition}

\theoremstyle{remark}
\newtheorem{remark}[theorem]{Remark}
\newtheorem*{remark*}{Remark}
\newtheorem{example}[theorem]{Example}
\newtheorem*{example*}{Example}

\newcommand{\mybeginproofof}[1]{\begin{proof}[Proof of {#1}]}
\newcommand{\mybeginproof}{\begin{proof}}
\newcommand{\myendproof}{\end{proof}}

\newcommand{\mybibstyle}{\bibliographystyle{plainnat}}


\newlength{\mygraphicwidth}
\setlength{\mygraphicwidth}{\textwidth} 

\newlength{\mysubgraphicwidth}
\setlength{\mysubgraphicwidth}{.45\textwidth}

%
%

%
%

\newcommand{\abs}[1]{|{#1}|}
\newcommand{\absfl}[1]{\left|{#1}\right|}

\newcommand{\alphaX}{\alpha_X}
\newcommand{\alphaY}{\alpha_Y}

\newcommand{\Anorm}{A_{\norm{\cdot}}}

\newcommand{\argmax}{\mathop{\mathrm{arg\,max}}\displaylimits}

\newcommand{\Axione}{A_{\xi,1}}

\newcommand{\Borel}{\mathcal{B}}

\newcommand{\Cii}{C_{i,i}}
\newcommand{\Cij}{C_{i,j}}

\newcommand{\compd}{^{(d)}}

\newcommand{\compi}{^{(i)}}

\newcommand{\compj}{^{(j)}}
\newcommand{\compone}{^{(1)}}

\newcommand{\Cov}{\mathrm{Cov}}
\newcommand{\cubr}[1]{\{{#1}\}}
\newcommand{\cubrfl}[1]{\left\{{#1}\right\}}

\newcommand{\Dii}{D_{i,i}}
\newcommand{\Dirac}[1]{\delta_{{#1}}}

\newcommand{\disteq}{\mathrel{\stackrel{\mathrm{d}}{=}}}

\newcommand{\dm}{\mathrm{d}}

\newcommand{\E}{\mathrm{E}}
\newcommand{\Ecal}{\mathcal{E}}

\newcommand{\ei}{e_i}
\newcommand{\ej}{e_j}
\newcommand{\Ellipt}{\Ecal}
\newcommand{\enquote}[1]{\lq\lq{}{#1}\rq\rq{}}
%
     \newlength{\laenge}
\newcommand{\eps}{\varepsilon}

\newcommand{\FR}{F_R}

\newcommand{\fu}{f_u}
\newcommand{\fuplusv}{f_{u+v}}
\newcommand{\fxialpha}{f_{\xi,\alpha}}

\newcommand{\Galphai}{G_{\alpha,i}}
\newcommand{\Galphaone}{G_{\alpha,1}}
\newcommand{\Galphatwo}{G_{\alpha,2}}
\newcommand{\Gast}{G^\ast}

\newcommand{\Gcal}{\mathcal{G}}

\newcommand{\gammaed}{\gamma_{e_d}}
\newcommand{\gammaei}{\gamma_{\ei}}
\newcommand{\gammaej}{\gamma_{\ej}}
\newcommand{\gammaeone}{\gamma_{e_1}}

\newcommand{\gammaminusei}{\gamma_{-\ei}}
\newcommand{\gammaminusej}{\gamma_{-\ej}}
\newcommand{\gammaminuseone}{\gamma_{-e_1}}
\newcommand{\gammaxi}{\gamma_\xi}

\newcommand{\gammaxione}{\gamma_{\xione}}

\newcommand{\gammaxitwo}{\gamma_{\xitwo}}
\newcommand{\Gcalalpha}{\Gcal_{\alpha}}

\newcommand{\ginv}{^{\leftarrow}}

\newcommand{\gxialpha}{g_{\xi,\alpha}}

\newcommand{\hu}{h_u}

\newcommand{\impl}{\Rightarrow}

\newcommand{\inv}{^{-1}}
\newcommand{\Law}{\mathcal{L}}

\newcommand{\minbin}{\wedge}

\newcommand{\mylefteqn}{\hspace{2em}&\hspace{-2em}}

\newcommand{\negpart}{_{-}}
\newcommand{\norm}[1]{\|{#1}\|}
\newcommand{\normfl}[1]{\left\|{#1}\right\|}

%

%
\newcommand{\nualphai}{\nu_{\alpha,i}}
\newcommand{\nuast}{\nu^{\ast}}

\newcommand{\oneby}[1]{\frac{1}{#1}}

\newcommand{\order}{\mathrel{\preceq}}
\newcommand{\orderapl}{\mathrel{\preceq_{\mathrm{apl}}}}
\newcommand{\ordercx}{\mathrel{\preceq_{\mathrm{cx}}}}
\newcommand{\orderdcx}{\mathrel{\preceq_{\mathrm{dcx}}}}
\newcommand{\orderdecx}{\mathrel{\preceq_{\mathrm{decx}}}}
\newcommand{\orderGcalalpha}{\mathrel{\preceq_{\Gcalalpha}}}
\newcommand{\ordericx}{\mathrel{\preceq_{\mathrm{icx}}}}
\newcommand{\orderlcx}{\mathrel{\preceq_{\mathrm{lcx}}}}
\newcommand{\orderplcx}{\mathrel{\preceq_{\mathrm{plcx}}}}
\newcommand{\orderpsd}{\mathrel{\preceq_{\mathrm{psd}}}}

\newcommand{\ordersm}{\mathrel{\preceq_{\mathrm{sm}}}}
\newcommand{\orderst}{\mathrel{\preceq_{\mathrm{st}}}}
\newcommand{\ordersymmcx}{\mathrel{\preceq_{\mathrm{symmcx}}}}

\newcommand{\Phiast}{\Phi^\ast}

\newcommand{\pospart}{_{+}}
\newcommand{\powalpha}{^{\alpha}}

\newcommand{\powminusalpha}{^{-\alpha}}
\newcommand{\powminusonebyalpha}{^{-1/\alpha}}
\newcommand{\powonebyalpha}{^{1/\alpha}}

\newcommand{\Prob}{\mathrm{P}}

\newcommand{\Psiast}{\Psi^\ast}
\newcommand{\Psiastone}{\Psiast_1}
\newcommand{\Psiasttwo}{\Psiast_2}
\newcommand{\PsiastX}{\Psiast_X}
\newcommand{\PsiastY}{\Psiast_Y}

\newcommand{\PsiX}{\Psi_X}

\newcommand{\R}{\mathbb{R}} 

\newcommand{\Rd}{\R^{d}}
\newcommand{\rhoalpha}{\rho_\alpha}

\newcommand{\robr}[1]{({#1})}

\newcommand{\robrfl}[1]{\left({#1}\right)}

\newcommand{\Rplus}{\R_{+}}
\newcommand{\Rplusd}{\Rplus^{d}}

\newcommand{\Sbb}{\mathbb{S}}

\newcommand{\Sd}{\Sbb^d}
\newcommand{\Sdone}{\Sbb^d_1}
\newcommand{\Sdtwo}{\Sbb^d_2}

\newcommand{\setcomp}{^\mathrm{c}}
\newcommand{\setcut}{\cap}

\newcommand{\Simp}{\Sigma}
\newcommand{\Simpd}{\Simp^d}

\newcommand{\sqbr}[1]{[{#1}]}
\newcommand{\sqbrfl}[1]{\left[{#1}\right]}

\newcommand{\sumioned}{\sum_{i=1}^{d}}

\renewcommand{\theta}{\vartheta}

\newcommand{\toinf}{\to\infty}

\newcommand{\tr}{^{\top}}

\newcommand{\unif}{\mathrm{unif}}

\newcommand{\uzer}{u_0}
\newcommand{\vagueconv}{\stackrel{\mathrm{v}}\rightarrow}
\newcommand{\Var}{\mathrm{Var}}

\newcommand{\weakconv}{\stackrel{\mathrm{w}}{\rightarrow}}

%

%
%
\newcommand{\xii}{\xi_i}

\newcommand{\xione}{\xi_1}

\newcommand{\xitwo}{\xi_2}

%
\newcommand{\BOX}{\ensuremath\Box}

\newlength{\myskipbeforeprooflength}
\setlength{\myskipbeforeprooflength}{0.6\baselineskip}
\newlength{\myskipafterprooflength}
\setlength{\myskipafterprooflength}{0.85\baselineskip}

\newcommand{\myskipafterproof}{\vskip\myskipafterprooflength}

\newenvironment{myproof}
{{\noindent\textit{Proof.}}}%
{\origqed\myskipafterproof\gdef\origqed{\hspace*{.1pt}\hspace*{\fill}\BOX}\par}

\newenvironment{myproofx}[1]%
{{\noindent\textit{Proof {#1}.}}}%
{\origqed\myskipafterproof\gdef\origqed{\hspace*{.1pt}\hspace*{\fill}\BOX}\par}

\makeatletter
\def\@tagformdelimstart{(}%
\def\@tagformdelimend{)}%
\def\@tagformdel{%
   \gdef\@tagformdelimstart{}%
   \gdef\@tagformdelimend{}%
}
\def\@tagformset{%
   \gdef\@tagformdelimstart{(}%
   \gdef\@tagformdelimend{)}%
}
\def\tagform@#1{%
    \maketag@@@{\@tagformdelimstart\ignorespaces#1\unskip%
    \@@italiccorr\@tagformdelimend}\@tagformset}
\def\origqed{\hspace*{.1pt}\hspace*{\fill}\BOX}
\def\qed{\ifmmode%
  \@tagformdel%
  \tag{\BOX}%
  \else%
  \hspace*{.1pt}\hspace*{\fill}\BOX%
  \fi%
  \gdef\origqed{}}
\makeatother

\hyphenation{Gaussi-anity}
\hyphenation{Mar-ko-witz}
\hyphenation{com-po-nent-wise}

%
\begin{document}
\mybibstyle
\title{Ordering of multivariate probability distributions with respect to extreme portfolio losses}
%
%
%
\author{
Georg Mainik 
\footnote{RiskLab, Department of Mathematics, ETH Zurich} 
\and 
Ludger Rüschendorf
\footnote{Department of Mathematical Stochastics, University of Freiburg} 
}
\date{October 7, 2010}
\maketitle
\begin{abstract}
%
%
A new notion of stochastic ordering 
is introduced 
to compare 
multivariate stochastic risk models with respect 
to extreme portfolio losses. 
In the framework of multivariate regular variation comparison criteria 
are derived in terms of ordering conditions on the spectral measures, 
which allows for analytical or numerical verification in practical 
applications. 
%
Additional comparison criteria in terms of further stochastic orderings 
are derived. 
The application examples include 
worst case and best case scenarios,  
elliptically contoured distributions, 
and multivariate regularly varying models with Gumbel, Archimedean, 
and Galambos copulas. 
%
\end{abstract}

\section{Introduction}\label{sec:1}
This paper is dedicated to the comparison 
of multivariate probability 
distributions with respect to extreme portfolio losses. 
A new notion of stochastic ordering  
named \emph{asymptotic portfolio loss order} ($\orderapl$) is introduced.   
%
%
%
Specially designed for the ordering of stochastic risk models 
with respect to extreme portfolio losses, 
this notion allows to compare the inherent extreme portfolio risks 
associated with different model parameters such as correlations, 
other kinds of dependence coefficients, or diffusion parameters.  
%
%
\par
In a recent paper of \cite{Mainik/Rueschendorf:2010} the notion of 
\emph{extreme risk index} has been introduced in the framework of 
multivariate regular variation. This index, denoted by $\gammaxi$, 
is a functional of the vector $\xi$ of portfolio weights and of the 
characteristics of the multivariate regular variation of $X$ given 
by the tail index $\alpha$ and the spectral measure $\Psi$. 
It measures the sensitivity of the portfolio loss 
to extremal 
events and characterizes the probability distribution of extreme losses. 
In particular, it serves to determine the optimal portfolio diversification 
with respect to extreme losses. 
Within the framework of multivariate regular variation 
the notion of asymptotic portfolio loss ordering introduced in this paper 
is tightly related to model comparison in terms of 
the extreme risk index $\gammaxi$. 
%
Thus this paper can be seen as a supplement of the previous one, 
allowing to order multivariate risk models with respect to    
their extremal portfolio loss behaviour. 
\par
In Section \ref{sec:2} of the present paper we introduce the 
asymptotic portfolio loss order $\orderapl$ and highlight some relationships 
to further well-known ordering notions. 
It turns out that even strong dependence and convexity 
orders do not imply the asymptotic portfolio loss order in general. 
We present counter-examples, based on the the inversion of diversification 
effects in models with infinite loss expectations. 
Another example of particular interest discussed here is given 
by the elliptical distributions. 
%
In this model family we establish a precise criterion for the asymptotic 
portfolio loss order, which perfectly accords with the classical 
results upon other well-known order relations. 
%
Section \ref{sec:3} is devoted to multivariate regularly varying models. 
We discuss the relationship between the asymptotic portfolio loss order 
and the comparison of the extreme risk index and characterize $\orderapl$ 
in terms of a suitable ordering of the canonical spectral measures. 
These findings allow to establish sufficient conditions for 
$\orderapl$ in terms of spectral measures, which can be verified 
by analytical or numerical methods.  
In particular, we characterize the dependence structures that yield 
the best and the worst possible diversification effects for a multivariate 
regularly varying risk vector $X$ in $\Rplusd$ with tail index $\alpha$.
For $\alpha\ge1$ the best case is given by the asymptotic independence and  
the worst case is the asymptotic comonotonicity. 
The result for $\alpha\le 1$ is exactly the opposite 
(cf.\ Theorem~\ref{thm:3.8} and Corollary~\ref{cor:3.10}).
Restricting $X$ to $\Rplusd$ means that $X$ represents only the losses, whereas 
the gains are modelled separately. 
This modelling approach is particularly suitable for applications in 
insurance, operational risk, and credit risk.
If $X$ represents both losses and gains, these results remain valid if the 
extremal behaviour of the gains is weaker than that of the losses, so 
that there is no loss-gain compensation for extremal events.
In Section \ref{sec:4} we discuss the interconnections between $\orderapl$ 
or ordered canonical spectral measures and other well-known 
notions of stochastic ordering. 
Ordering of canonical spectral measures allows to conclude $\orderapl$
from the (directionally) 
convex or the supermodular order. 
It is not obvious how to obtain this implication in a general setting.
Finally, in Section~\ref{sec:5} we present  a series of examples 
with graphics illustrating the numerical results upon the ordering 
of spectral measures.
%
The relationship to spectral measures provides a useful numerical tool to 
establish $\orderapl$ in practical applications.
%
%
\section{Asymptotic portfolio loss ordering}\label{sec:2}
To compare stochastic risk models with respect to extreme portfolio losses, 
we introduce the asymptotic portfolio loss order $\orderapl$. 
This order relation is designed 
for the analysis of the asymptotic 
diversification effects and the identification of models that generate 
portfolio risks with stronger extremal behaviour. 
\par
Before stating the definition, some basic notation is needed. 
Focusing on risks, let $X$ be a \emph{random loss vector} with values in 
$\Rd$, i.e., let positive values of the components $X\compi$, $i=1,\ldots,d$,
represent losses and let negative values of $X\compi$ represent gains 
of some risky assets. 
Following the intuition of diversifying a unit capital over several assets,
we restrict the set of portfolios to the unit simplex in $\Rd$:  
\[
\Simpd := \cubrfl{\xi\in\Rplusd: \sumioned\xii=1 }
\ldotp
\]
The portfolio loss resulting from a random vector $X$ and the portfolio $\xi$ 
is given by the scalar product of $\xi$ and $X$. 
In the sequel it will be denoted by  $\xi\tr X$.
\begin{definition}
Let $X$ and $Y$ be $d$-dimensional random vectors. 
Then $X$ is called smaller than $Y$ in 
\emph{asymptotic portfolio loss order}, $X\orderapl Y$, 
if 
\begin{equation}\label{eq:2.1}
\forall\xi\in\Simpd
\quad
\limsup_{t\to\infty}  
\frac{\Prob\cubr{\xi\tr X> t}}{\Prob\cubr{\xi\tr Y\ge t}} \le 1
\ldotp
\end{equation}
Here, $\frac00$ is defined to be 1.
\end{definition}
\begin{remark}\label{rem:2.1}
\begin{enumerate}[(a)]
\item
Although designed for random vectors, $\orderapl$ is also defined for 
random variables. In this case, the portfolio set has only one element, 
$\Simp^1=\cubr{1}$.   
\item\label{item:rem:2.1.b}
It is obvious that  $\orderapl$ is invariant under componentwise rescaling. 
Let $vx$ denote the componentwise product of $v,x\in \Rd$: 
\begin{equation}\label{eq:apl.2}
vx:= (v\compi x\compi,\dots,v\compd x\compd),
\end{equation}
Then it is easy to see that $ X\orderapl Y$ implies $vX \orderapl vY$ for 
all $v\in\Rplusd$.
Hence condition~\eqref{eq:2.1} can be equivalently stated for $\xi\in\Rplusd$. 
\end{enumerate}
\end{remark}
\par
The ordering statement $X\orderapl Y$ means that for all portfolios 
$\xi\in\Simpd$ 
the portfolio loss $\xi\tr X$ is asymptotically smaller $\xi\tr Y$. 
Thus $\orderapl$ concerns only the extreme portfolio losses. 
In consequence, this order relation is weaker than the (usual) 
stochastic ordering $\orderst$ of the portfolio losses:
\begin{equation}\label{eq:2.2}
\xi\tr X \orderst \xi\tr Y \text{ for all } \xi\in\Simpd \text{ implies } X\orderapl Y.
\end{equation}
Here, for real random variables $U$, $V$ the \emph{stochastic ordering} 
$U \orderst V$ is defined by 
\begin{equation}\label{eq:2.3a}
\forall t\in\R\quad \Prob\cubr{U>t}\le\Prob\cubr{V>t}.
\end{equation}
\par
Some related, well-known stochastic orderings  
\citep[cf.][]{Mueller/Stoyan:2002,Shaked/Shanthikumar:1997} 
are collected in the following list. Remind that $f:\Rd\to \R$ 
is called \emph{supermodular} if 
\begin{equation}\label{eq:2.3b}
\forall x,y\in\Rd
\quad
f(x\wedge y)+f(x\vee y)\ge f(x)+f(y)
\ldotp
\end{equation}
\begin{definition}\label{def:2.2}
Let $X$, $Y$ be random vectors in $\Rd$. Then $X$ is said to be smaller than $Y$ in
\begin{enumerate}[(a)]
\item \emph{(increasing) convex order}, 
$X \ordercx Y$ ($X\ordericx Y$), if $\E f(X) \le \E f(Y)$ for all (increasing) convex functions $f:\Rd\mapsto \R$ such that the expectations exist; 
\item
\emph{linear convex order}, $X \orderlcx Y$, if 
$\xi\tr X \ordercx \xi\tr Y$ for all $\xi\in\Rd$;
\item
\emph{positive linear convex order}, $X \orderplcx Y$, 
if $\xi\tr X \ordercx \xi\tr Y$ for all $\xi\in\Rplusd$;
\item \emph{supermodular order} $X\ordersm Y$,  if 
$\E f(X) \le \E f(Y)$ for all supermodular functions $f:\Rd\to\R$ such
that the expectations exist;
\item
\emph{directionally convex order}, $X\orderdcx Y$, if 
$\E f(X) \le \E f(Y)$ for all directionally  convex, i.e., supermodular and componentwise convex functions  
$f:\Rd\to\R$ such that the expectations exist.
\end{enumerate}
\end{definition}
\par
The stochastic orderings listed in Definition \ref{def:2.2} are useful 
for describing the risk induced by larger diffusion (convex risk) as well as 
the risk induced by positive dependence 
(supermodular and directionally convex). 
The following implications are known to hold generally for random 
vectors $X$, $Y$ in $\Rd$: 
\begin{enumerate}[(a)]
\item $(X\ordersm Y)_{\phantom{icx}\kern-2ex} \impl %
(X\orderdcx Y)_{\phantom{l}\kern-.5ex} \impl (X\orderplcx Y)$
\item $(X\ordercx Y)_{\phantom{ism}\kern-2ex} \impl %
(X\orderlcx Y)_{\phantom{d}\kern-.5ex} \impl (X\orderplcx Y)$
\item $(X\ordericx Y)_{\phantom{sm}\kern-2ex} \impl %
(X\orderplcx Y)$
\end{enumerate}
\begin{remark}\label{rem:apl.1}
\begin{enumerate}[(a)]
\item\label{item:apl.1}
It is easy to see that the usual stochastic order $\orderst$ implies 
$\orderapl$ in the univariate case.
\item\label{item:apl.2}
In spite of being strong risk comparison orders, the order relations 
outlined in Definition~\ref{def:2.2} do not imply $\orderapl$ in general. 
For instance, it is known that the comonotonic dependence structure is 
the worst case with respect to the strong supermodular ordering $\ordersm$,
whereas it is not necessarily the worst case with respect to $\orderapl$
(cf.\ Examples~\ref{ex:6} and \ref{ex:2}). 
\end{enumerate}
\end{remark}
\par
The following proposition helps to establish sufficient criteria 
for $\orderapl$ in the univariate case. 
To obtain multivariate results, 
it can be separately applied to each portfolio loss $\xi\tr X$ for 
$\xi\in\Simpd$.   
\par
\begin{proposition}\label{prop:2.3}
Let $R_1$, $R_2\ge 0$ be real random variables and let $V$ be a real random variable independent of $R_i$, $i=1,2$. 
\begin{enumerate}[(a)]
\item
\label{item:prop2.3b} 
If $R_1\orderapl R_2$ and $V < K$ 
for some constant $K$, then 
\begin{equation}\label{eq:2.7}
R_1V\orderapl R_2V
\ldotp
\end{equation}
\item  \label{item:prop2.3a} 
If $R_1\orderst R_2$, then
%
\begin{equation}
\label{eq:2.5}
\robr{R_1V}\pospart  \orderst \robr{R_2V}\pospart 
\quad 
\text{and}
\quad
\robr{R_2V}\negpart \orderst \robr{R_1V}\negpart
\ldotp
\end{equation}
In addition, if $V$ and $R_i$ are integrable and $EV \ge 0$, then 
\begin{equation}\label{eq:2.6}
R_1V \ordericx R_2V
\ldotp
\end{equation}
Moreover, if $EV=0$, then $R_1V \ordercx R_2V$.
%
\end{enumerate}
\end{proposition}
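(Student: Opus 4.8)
The plan is to treat the three assertions separately: part~(\ref{item:prop2.3b}) by a conditioning argument that exploits the boundedness of $V$, and the two parts of~(\ref{item:prop2.3a}) by a coupling and a mixing argument, respectively.

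For part~(\ref{item:prop2.3b}) I would first dispose of the trivial case $K\le 0$: then $V<0$ almost surely, so $R_iV\le 0$ and both tail probabilities vanish for $t>0$, whence the defining ratio in~\eqref{eq:2.1} equals $1$ and~\eqref{eq:2.7} holds. For $K>0$ and $t>0$, since $R_i\ge 0$ the event $\cubr{R_iV>t}$ forces $V>0$, so conditioning on $V$ and using independence gives
\begin{equation*}
\Prob\cubr{R_iV>t}=\int_{(0,K)}\Prob\cubr{R_i>t/v}\,\Prob_V(\dm v),
\end{equation*}
and likewise for $\Prob\cubr{R_iV\ge t}$ with $\ge$ in place of $>$. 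The order $R_1\orderapl R_2$ provides, for every $\eps>0$, a threshold $s_0$ with $\Prob\cubr{R_1>s}\le(1+\eps)\Prob\cubr{R_2\ge s}$ for all $s\ge s_0$. Here the hypothesis $V<K$ is decisive: for $v\in(0,K)$ one has $t/v>t/K$, so as soon as $t>Ks_0$ the estimate applies \emph{uniformly} to every threshold $t/v$ occurring in the integral. Integrating the pointwise bound over $v$ yields $\Prob\cubr{R_1V>t}\le(1+\eps)\Prob\cubr{R_2V\ge t}$ for all such $t$; taking $\limsup_{t\to\infty}$ and then $\eps\downto 0$ gives~\eqref{eq:2.7}. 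An unbounded $V$ would push $t/v$ down to small thresholds where the asymptotic estimate need not hold, so boundedness is exactly what is needed.

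For the first assertion of part~(\ref{item:prop2.3a}) I would use that $R_i\ge 0$ forces $\robr{R_iV}\pospart=R_iV\pospart$ and $\robr{R_iV}\negpart=R_iV\negpart$, where $V\pospart,V\negpart\ge 0$. Invoking the standard coupling characterisation of $\orderst$, realise $\Rtilde_1\disteq R_1$ and $\Rtilde_2\disteq R_2$ on one space with $\Rtilde_1\le\Rtilde_2$ almost surely and independent of $V$. Multiplying $\Rtilde_1\le\Rtilde_2$ by the nonnegative factors $V\pospart$ and $V\negpart$ gives $\Rtilde_1V\pospart\le\Rtilde_2V\pospart$ and $\Rtilde_1V\negpart\le\Rtilde_2V\negpart$ pointwise, which are exactly the two stochastic dominations in~\eqref{eq:2.5}.

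The crux is the increasing-convex and convex statements. By independence and Fubini, $\E f(R_iV)=\E\psi(R_i)$ for the mixed function $\psi(r):=\E f(rV)$; since $r\mapsto f(rv)$ is convex for each $v$, $\psi$ is convex on $\Rplus$. The key step is to show $\psi$ is nondecreasing on $\Rplus$. For $0\le r_1\le r_2$ the subgradient inequality for $f$ at the point $r_1V$, together with monotonicity of the right derivative $f'_+$ (so that $f'_+(r_1V)\ge f'_+(0)$ when $V>0$ and $f'_+(r_1V)\le f'_+(0)$ when $V<0$), yields in every case the pointwise bound $f(r_2V)-f(r_1V)\ge f'_+(0)(r_2-r_1)V$; taking expectations over $V$ gives $\psi(r_2)-\psi(r_1)\ge f'_+(0)(r_2-r_1)\,\E V$. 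In the increasing convex case $f'_+(0)\ge 0$ and $\E V\ge 0$, while in the convex case $\E V=0$ annihilates the term; either way $\psi$ is nondecreasing. Finally $R_1\orderst R_2$ yields $\E\psi(R_1)\le\E\psi(R_2)$, i.e.\ $\E f(R_1V)\le\E f(R_2V)$, for every (increasing) convex $f$, which is~\eqref{eq:2.6} and the concluding $\ordercx$ claim. I expect the main obstacle to be precisely this monotonicity of $\psi$: the curvature of $f$ makes the second-order contribution automatically favourable, but the linear part carries the sign of $\E V$, which is why the hypotheses $\E V\ge 0$ and $\E V=0$ enter exactly where they do; the remaining integrability interchanges are routine once the relevant expectations are assumed to exist.
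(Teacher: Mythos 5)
Your proposal is correct. For part~(\ref{item:prop2.3b}) and for the first claim of part~(\ref{item:prop2.3a}) it follows essentially the paper's own argument: the same conditioning on $V\in(0,K)$ exploiting that $t/v>t/K$ uniformly (your $\eps$--threshold bookkeeping is a rephrasing of the paper's bound by $\sup\cubr{f(z):z>t/K}$ with $f(z)=\Prob\cubr{R_1>z}/\Prob\cubr{R_2>z}$), and the same coupling $R_1\le R_2$ for~\eqref{eq:2.5}. One caveat there: with your convention $V\negpart=\max(-V,0)\ge 0$, the pointwise bound $\Rtilde_1 V\negpart\le\Rtilde_2 V\negpart$ yields $(R_1V)\negpart\orderst(R_2V)\negpart$, whereas the displayed claim $(R_2V)\negpart\orderst(R_1V)\negpart$ is written in the convention where the negative part $\min(x,0)$ is a non-positive number — this is what the paper's proof, working with $\Prob\cubr{R_iV\le t}$ for $t\le 0$, actually establishes. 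The two formulations are equivalent under $x\mapsto -x$, so this is a convention mismatch rather than an error, but you should say which convention you use.

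Where you genuinely diverge from the paper is the $\ordericx$/$\ordercx$ part. The paper deduces~\eqref{eq:2.6} from~\eqref{eq:2.5} via the Karlin--Novikov cut criterion: the distribution functions of $R_1V$ and $R_2V$ cross at most once (at $0$), and the means are ordered since $\E\sqbr{R_iV}=\E R_i\cdot \E V$ with $\E V\ge 0$, so the cited single-crossing theorem gives the increasing convex order. You instead argue directly: writing $\E f(R_iV)=\E\psi(R_i)$ with $\psi(r)=\E f(rV)$, the subgradient inequality at $r_1V$ together with monotonicity of $f^\prime_+$ gives the pointwise bound $f(r_2V)-f(r_1V)\ge f^\prime_+(0)(r_2-r_1)V$, hence $\psi$ is nondecreasing on $\Rplus$ whenever $f^\prime_+(0)\,\E V\ge 0$, and then $R_1\orderst R_2$ finishes the proof. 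This argument is sound ($f^\prime_+(0)$ is finite for real-valued convex $f$, and the expectation interchanges are covered by the standing existence assumptions), and it buys some things the paper's route does not: it is self-contained (no external cut-criterion theorem), it does not even use~\eqref{eq:2.5} — nor, in fact, the convexity of $\psi$ you mention, since only its monotonicity is invoked — and it shows transparently where $\E V\ge 0$ versus $\E V=0$ enter. The paper's route, in exchange, is shorter and lets~\eqref{eq:2.5} feed directly into a standard criterion.
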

\par
\begin{myproof}
%
\par Part~(\ref{item:prop2.3b}). 
Since $R_1V \orderapl R_2V$ is trivial for $V \le 0$, we 
assume that $\Prob\cubr{V>0}>0$. 
Hence $V\le K$ implies for all $t>0$
\begin{align} 
\Prob\cubrfl{R_1V>t}  
&= \nonumber
\int_{(0,K)} \Prob\cubrfl{R_1>{t}/{v}} \dm\Prob^V(v) \\
&=\label{eq:2.10b}
\int_{(0,K)} f\robrfl{{t}/{v}} \Prob\cubrfl{R_2>{t}/{v}} \dm\Prob^V(v), 
\end{align}
%
where 
\[
f(z):=\frac{\Prob\cubr{R_1>z}}{\Prob\cubr{R_2>z}}
\ldotp
\]
An obvious consequence of \eqref{eq:2.10b} is the inequality 
\begin{equation}
\Prob\cubr{R_1V > t} 
\le\label{eq:2.10c}
\sup \cubrfl{ f(z): z > {t}/{K}} \cdot \Prob\cubrfl{R_2V > t}
\end{equation}
Since $R_1\orderapl R_2$ is equivalent to $\limsup_{z\toinf} f(z)\le 1$,
we obtain 
\[
\limsup_{t\toinf}\frac{\Prob\cubr{R_1V>t}}{\Prob\cubr{R_2V>t}}\le 1
\ldotp
\] 
\par
Part (\ref{item:prop2.3a}). 
%
By the well-known coupling principle for the stochastic ordering $\orderst$
we may assume without loss of generality that $R_1 \le R_2$ 
pointwise on the underlying probability space.
This implies 
\[
\Prob\cubr{R_1 V > t} \le \Prob\cubr{R_2 V > t}
,\quad t \ge 0,
\]
and, similarly,
\[
\Prob\cubr{R_1 V \le t} \le \Prob\cubr{R_2 V \le t}
,\quad t \le 0
\ldotp
\]
In consequence we obtain~\eqref{eq:2.5}.
%
\par
From the proof of \eqref{eq:2.5} 
it follows that the distribution functions 
of the products $R_iV$, $i=1,2$, satisfy the cut criterion of Karlin--Novikov 
(cf.\ \citealp{Shaked/Shanthikumar:1994}, Theorem 2.A.17 and 
\citealp{Mueller/Stoyan:2002}, Theorem 1.5.17) 
Hence we obtain
\begin{equation}\label{eq:2.9}
R_1V \ordericx R_2V
\ldotp
\end{equation}
If $EV=0$, then $E\sqbr{R_1V}=E\sqbr{R_2V}$ and therefore 
\begin{equation}\label{eq:2.10a}
R_1V\ordercx R_2V
\ldotp
\end{equation}
\end{myproof}
\begin{remark}\label{rem:2.3}
\begin{enumerate}[(a)]
\item %
Note that 
\eqref{eq:2.5}
implies (without assuming the existence of moments) 
that $\robr{R_2V}\pospart \orderdecx \robr{R_1V}\pospart$ 
where $\orderdecx$ denotes the \emph{decreasing convex order}. 
Similarly one obtains 
$\robr{R_2V}\negpart \ordericx \robr{R_1V}\negpart$
\item 
If $f(t):={\Prob\cubr{R_1>t}}/{\Prob\cubr{R_2>t}} \le C < \infty$ and 
$R_1\orderapl R_2$, then $R_1V\orderapl R_2V$. 
\item 
A related problem is the ordering of products $RV_i$ for $R\ge 0$ with  
$V_1$ and $V_2$ independent of $R$.
In the special case when $R$ is \emph{regularly varying} with \emph{tail index} 
$\alpha>0$, i.e., 
\begin{equation}\label{eq:apl.5}
\lim_{t\toinf}
\frac{\Prob\cubr{R>tx}}{\Prob\cubr{R>t}} 
=x\powminusalpha
,\quad 
x>0,
\end{equation}
exact criteria for $\orderapl$ can be obtained from Breiman's Theorem 
\citep[cf.][Proposition 7.5]{Resnick:2007}. 
If $\E \robr{V_i}\pospart^{\alpha+\eps} <\infty$ for $i=1,2$ and 
some $\eps>0$, then 
\[
\lim_{t\toinf}\frac{\Prob\cubr{RV_i>t}}{\Prob\cubr{R>t}} 
= 
E\sqbrfl{\robr{V_i}\pospart\powalpha}
\ldotp
\]
This yields 
\[
\lim_{t\toinf}
\frac{\Prob\cubr{RV_1>t}}{\Prob\cubr{RV_2>t}} 
= 
\frac
{\E\sqbrfl{\robr{V_1}\pospart\powalpha}}
{\E\sqbrfl{\robr{V_2}\pospart\powalpha}}
\ldotp
\]
%
%
\end{enumerate}
\end{remark}
An important class of stochastic models with various applications are 
\emph{elliptical distributions}, 
which are natural generalizations of multivariate 
normal distributions. 
A random vector $X\in\Rd$ is called elliptically distributed, 
if there exist $\mu\in\Rd$ and a $d\times d$ matrix $A$ such that  
$X$ has a representation of the form
\begin{equation}\label{eq:2.11}
X \disteq \mu + RAU, 
\end{equation}
where $U$ is uniformly distributed on the Euclidean unit sphere $\Sdtwo$,  
\[
\Sdtwo=\cubrfl{x\in\Rd :  \norm{x}_2 =1},
\] 
and $R$ is a non-negative random variable independent of $U$. 
By definition we have 
\begin{equation}\label{eq:2.12}
E\norm{X}_2^2 <\infty \Leftrightarrow E R^2<\infty,
\end{equation}
and in this case   
\begin{equation}\label{eq:2.13}
\Cov(X)=\Var(R) A A\tr
\ldotp 
\end{equation}
The matrix $C:= A A\tr$ is unique except for a constant factor and 
is also called the \emph{generalized covariance matrix} of $X$. 
%
%
We denote the elliptical distribution constructed 
according to~\eqref{eq:2.11} by $\Ellipt(\mu,C,\FR)$, 
where $\FR$ is the distribution of $R$.
\par
A classical stochastic ordering result going back to 
\cite{Anderson:1955} and \cite{Fefferman/Jodeit/Perlman:1972} 
\citep[cf.][p.~70]{Tong:1980} 
says that \emph{positive semidefinite ordering}
of the generalized covariance matrices $C_1 \orderpsd C_2$, defined as
\begin{equation}
\label{eq:2.13a}
\forall \xi\in\Rd \quad \xi\tr C_1 \xi \le \xi\tr C_2 \xi,
\end{equation}
implies symmetric convex ordering if  
the location parameter $\mu$ and the distribution $\FR$ of the radial 
factor are fixed:  
\begin{equation}\label{eq:2.14}
\Ellipt(\mu, C_1,\FR) \ordersymmcx \Ellipt(\mu, C_2, \FR)
\ldotp
\end{equation}
It is also known that for elliptical random vectors $X\sim \Ellipt(\mu,C,\FR)$ 
the multivariate distribution function 
$F(x):=\Prob\cubr{X_1\le x_1,\dots,X_d\le x_d}$ is increasing in $\Cij$ for $i\not=j$, where $C=(\Cij)$ \citep[see, e.g.,][Theorem 2.21]{Joe:1997}.
\par
The following result is concerned 
with the asymptotic portfolio loss ordering $\orderapl$ for elliptical 
distributions. 
%
\begin{theorem}\label{theo:2.4}
Let $X\disteq \mu_1+R_1A_1U$,  $Y\disteq\mu_2+R_2A_2U$ be elliptically distributed 
with generalized covariances $C_i:=A_iA_i\tr$. If 
\begin{equation}\label{eq:2.15}
\mu_1\le \mu_2, \enskip 
R_1\orderapl R_2,
\end{equation}
and
\begin{equation}\label{eq:2.15a}
\forall \xi\in\Simpd \quad \xi\tr C_1\xi \le \xi\tr C_2 \xi,
\end{equation}
then
\begin{equation}\label{eq:2.16}
X\orderapl Y.
\end{equation}
\end{theorem}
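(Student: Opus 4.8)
The plan is to verify the defining inequality of $\orderapl$ separately for each fixed portfolio $\xi\in\Simpd$, reducing the multivariate elliptical comparison to a one–dimensional product comparison to which Proposition~\ref{prop:2.3} applies. First I would exploit the rotational invariance of the uniform law on $\Sdtwo$: for any fixed $a\in\Rd$ one has $a\tr U\disteq\norm{a}_2 U_1$, since an orthogonal rotation carrying $a/\norm{a}_2$ to the first basis vector leaves the law of $U$ unchanged. Applying this with $a=A_i\tr\xi$, using $\norm{A_i\tr\xi}_2^2=\xi\tr C_i\xi$ together with the independence of $R_i$ and $U$, yields the scalar representations
\begin{equation*}
\xi\tr X\disteq m_1+s_1 R_1 V,\qquad \xi\tr Y\disteq m_2+s_2 R_2 V,
\end{equation*}
where $m_i:=\xi\tr\mu_i$, $s_i:=\sqrt{\xi\tr C_i\xi}$, and $V:=U_1$ is a bounded random variable, $\abs{V}\le 1$, independent of $R_1$ and of $R_2$.

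Second, I would dispose of the location and scale parameters. From $\mu_1\le\mu_2$ and $\xi\in\Rplusd$ we obtain $m_1\le m_2$, while hypothesis~\eqref{eq:2.15a} gives $s_1\le s_2$. For $t>0$ the representation yields $\Prob\cubr{\xi\tr X>t}=\Prob\cubr{s_1 R_1 V>t-m_1}$, and since $t-m_1\ge t-m_2$ the monotonicity of the tail bounds this by $\Prob\cubr{s_1 R_1 V>t-m_2}$. Writing $\tau:=t-m_2\toinf$, the whole claim reduces to
\begin{equation*}
\limsup_{\tau\toinf}\frac{\Prob\cubr{s_1 R_1 V>\tau}}{\Prob\cubr{s_2 R_2 V\ge\tau}}\le 1,
\end{equation*}
that is, to the one–dimensional ordering $s_1 R_1 V\orderapl s_2 R_2 V$.

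Third, I would establish $s_1 R_1 V\orderapl s_2 R_2 V$ in two moves. A short scaling lemma shows that $0\le s_1\le s_2$ together with $R_1\orderapl R_2$ (the second half of~\eqref{eq:2.15}) imply $s_1 R_1\orderapl s_2 R_2$: for $s_1,s_2>0$ one rewrites $\Prob\cubr{s_1 R_1>t}/\Prob\cubr{s_2 R_2\ge t}=\Prob\cubr{R_1>t/s_1}/\Prob\cubr{R_2\ge t/s_2}$, and since $t/s_1\ge t/s_2$ this is bounded by the ratio evaluated at the common argument $t/s_2$, whose limit superior is $\le 1$ by $R_1\orderapl R_2$ (the degenerate cases $s_1=0$ or $s_2=0$ being immediate). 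Since $V$ is bounded and independent of the radial parts, Proposition~\ref{prop:2.3}(\ref{item:prop2.3b}), applied to the non-negative variables $s_1 R_1\orderapl s_2 R_2$, then gives $s_1 R_1 V\orderapl s_2 R_2 V$. This closes the argument for the fixed $\xi$, and hence, $\xi$ being arbitrary, establishes $X\orderapl Y$.

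The main obstacle is the first step: recognizing that the entire elliptical comparison collapses, portfolio by portfolio, onto a single scalar variable $V=U_1$ via rotational symmetry, so that the radial order $R_1\orderapl R_2$ can be transported through the multiplicative structure by the already-proved Proposition~\ref{prop:2.3}. Once this reduction is in place, the location and scale parameters are handled by elementary monotonicity of tails; the only points requiring minor care are keeping the strict and non-strict inequalities in the $\orderapl$ ratio consistent and treating the degenerate cases $\xi\tr C_i\xi=0$.
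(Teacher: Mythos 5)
Your proof is correct and follows essentially the same route as the paper's: rotational invariance of $U$ collapses each portfolio loss to $\xi\tr X \disteq m_1 + s_1R_1V$ with $V=U^{(1)}$, and Proposition~\ref{prop:2.3}(\ref{item:prop2.3b}) then transports $R_1\orderapl R_2$ through the product structure. The only differences are presentational: where the paper says \emph{without loss of generality} $\mu_1=\mu_2=0$ and applies Proposition~\ref{prop:2.3}(\ref{item:prop2.3b}) directly with $a_1\le a_2$, you spell out the tail-monotonicity argument for the location shift and the scaling step $s_1R_1\orderapl s_2R_2$ (including the degenerate cases $s_i=0$), details the paper leaves implicit.
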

\par
\begin{myproof}
It suffices to show that $\xi\tr Y \orderapl \xi\tr Y$ for an 
arbitrary portfolio $\xi\in\Simpd$. 
Furthermore, without loss of generality we can assume $\mu_1=\mu_2=0$.
For $i=1,2$ and $\xi\in\Simpd$ denote 
\[
a_i = a_i(\xi):=  \robrfl{\xi \tr C_i \xi}^{1/2}
\] 
and 
\[
v_i = v_i(\xi) := \frac{\xi\tr A_i}{a_i}
\ldotp
\]
Then, by definition of elliptical distributions, we have
\begin{equation}\label{eq:2.17}
\xi\tr X \disteq R_1 a_1 v_1 U
\quad\text{and}\quad
\xi\tr Y \disteq R_2 a_2 v_2 U
\ldotp
\end{equation}
%
%
Since the vectors $v_i=v_i(\xi)$ have unit length by construction, 
the random variables $v_i U$ are orthogonal projections of $U\sim \unif(S_2^d)$ 
on vectors of unit length. 
Symmetry arguments yield that the distribution of $v_i U$ is independent of 
$v_i$ and that $v_iU \disteq (1,0,\ldots,0)\tr U=U\compone$. 
\par
Thus we have 
\[
\xi\tr X \disteq a_1 R_1 V
\quad\text{and}\quad
\xi\tr Y\disteq a_2 R_2 V
\] 
with $V:=U\compone$.
By assumption we have $a_1\le a_2$ and $R_1\orderapl R_2$. 
Applying Proposition \ref{prop:2.3}(\ref{item:prop2.3b}) 
we obtain $\xi\tr X\orderapl \xi\tr Y$. 
\end{myproof}
\par
\begin{remark}\label{rem:2.6}
\begin{enumerate}[(a)]
\item\label{item:rem:2.6.a}
It should be noted that condition~\eqref{eq:2.15a} is indeed weaker than 
\eqref{eq:2.13a}. 
Let $-1 < \rho_1 < \rho_2 <1$ and consider covariance matrices 
\[
C_i:=
\robrfl{
\begin{array}{cc}
1 & \rho_i\\
\rho_i & 1
\end{array}
}
,\quad 
i=1,2
\ldotp
\]
Straightforward calculations show that $C_i$ satisfy~\eqref{eq:2.15a}, but 
not~\eqref{eq:2.13a}.
\item
For subexponentially distributed 
$R_i$ the assumption $\mu_1\le \mu_2$ in \eqref{eq:2.15} can be omitted.
\end{enumerate}
\end{remark}
%
\section{Multivariate regular variation: $\orderapl$ in terms of spectral measures} 
\label{sec:3}
\par
This section is concerned with the characterization of the asymptotic 
portfolio loss order $\orderapl$ in the framework of multivariate regular 
variation. The results obtained here highlight the influence of the tail 
index $\alpha$ and the spectral measure $\Psi$ on $\orderapl$,  
with primary focus put on  dependence structures captured by $\Psi$. 
It is shown that $\orderapl$ corresponds to a family of order relations  
on the set of canonical spectral measures and that these order relations 
are intimately related to the extreme risk index $\gammaxi$ introduced 
in \citet{Mainik/Rueschendorf:2010} and \citet{Mainik:2010}. 
\par
The main result of this section is stated in Theorem~\ref{theo:3.4}, 
providing criteria for $X \orderapl Y$ in terms of componentwise ordering 
$X\compi \orderapl Y\compi$ for $i=1,\ldots,d$ 
and ordering of canonical spectral measures. 
A particular consequence of these criteria is the 
characterization of the dependence structures that 
yield the best and the worst possible diversification effects for 
random vectors in $\Rplusd$ 
(cf.\ Theorem~\ref{thm:3.8} and Corollary~\ref{cor:3.10}).
Another application concerns elliptical distributions. Combining  
Theorem~\ref{theo:3.4} with results on $\orderapl$ obtained in 
Theorem~\ref{theo:2.4}, we obtain ordering of the corresponding 
canonical spectral measures. 
\par
Recall the notions of regular variation. 
In the univariate case it can be defined separately for the lower 
and the upper tail of a random variable via~\eqref{eq:apl.5}. 
%
%
A random vector $X$ taking values in $\Rd$ is called 
\emph{multivariate regularly varying} with tail index $\alpha\in(0,\infty)$
if  there exist a sequence $a_n\toinf$ and a (non-zero) Radon measure $\nu$ on 
the Borel $\sigma$-field $\Borel\robr{[-\infty,\infty]^d\setminus\cubr{0}}$ 
such that $\nu\robr{[-\infty,\infty]^d \setminus \Rd}=0$ and, 
as $n\toinf$,
\begin{equation}
\label{eq:29}
\index{$\nu$}
n \Prob^{\,a_n\inv X} \vagueconv \nu
\text{ on }\Borel\robr{[-\infty,\infty]^d\setminus\cubr{0}},
\end{equation}
where $\vagueconv$ denotes the 
\emph{vague convergence} of Radon measures 
and $\Prob^{\,a_n\inv X}$ is the probability distribution of
$a_n\inv X$.
\par
It should be noted that random vectors with non-negative components 
yield limit measures $\nu$ that are concentrated on 
$[0,\infty]^d\setminus\cubr{0}$. 
Therefore multivariate regular variation in this special case can also 
be defined by vague convergence on $\Borel([0,\infty]^d\setminus\cubr{0})$.
\par
Many popular distribution models are multivariate regularly  varying. 
In particular, according to \citet{Hult/Lindskog:2002}, 
multivariate regular variation of an elliptical distribution 
$\Ellipt\robr{\mu,C,\FR}$ is equivalent to the regular variation of the 
radial factor $R$ and the tail index $\alpha$ is inherited from $R$.
Other popular examples are obtained by endowing regularly varying margins 
$X\compi$ with an appropriate copula 
\citet[cf.][]{Wuethrich:2003, Alink/Loewe/Wuethrich:2004, Barbe/Fougeres/Genest:2006}
\par
For a full account of technical details related to the notion of 
multivariate regular variation, vague convergence, and 
the Borel $\sigma$-fields on the punctured spaces 
$[-\infty,\infty]^d\setminus\cubr{0}$ and $[0,\infty]^d\setminus\cubr{0}$  
the reader is referred to \citet{Resnick:2007}. 
\par
It is well known that the limit measure $\nu$ obtained in~\eqref{eq:29}
is unique except for a constant factor, has a singularity in the origin
in the sense that 
$\nu\robr{(-\eps,\eps)^d}=\infty$ for any $\eps>0$,  
and exhibits the scaling property 
\begin{equation}
\label{eq:30}
\nu(tA)=t\powminusalpha\nu(A)
\end{equation} 
for all sets $A\in\Borel\robrfl{[-\infty,\infty]^d\setminus\cubr{0}}$ that
are bounded away from $0$. 
\par
It is also well known that~\eqref{eq:29} implies that the random variable 
$\norm{X}$ with an arbitrary norm $\norm{\cdot}$ on $\Rd$ is 
univariate regularly varying with tail index $\alpha$.
Moreover,
the sequence $a_n$ can always be chosen as 
\begin{equation}
\label{eq:181}
a_n:=F_{\norm{X}}\ginv(1-1/n),
\end{equation}
where $F_{\norm{X}}\ginv$ is the quantile function of 
$\norm{X}$. The resulting limit measure $\nu$ 
is normalized on the set $\Anorm:=\cubr{x\in\Rd: \norm{x}>1}$ by 
\begin{equation}
\label{eq:182}
\nu\robrfl{\Anorm}=1
\ldotp
\end{equation}
\par
Thus, after normalizing $\nu$ by~\eqref{eq:182}, 
the scaling relation~\eqref{eq:30} yields an equivalent rewriting of 
the multivariate regular variation condition~\eqref{eq:29} 
in terms of weak convergence:
\begin{equation}
\label{eq:34}
\Law\cubrfl{t\inv X\,|\,\norm{X}>t}
\weakconv
\nu|_{\Anorm}
\text{ on } 
\Borel\robrfl{\Anorm}
\end{equation}
for $t\toinf$,
where $\nu|_{\Anorm}$ is the restriction of $\nu$ to the set $\Anorm$. 
\par
Additionally to~\eqref{eq:29}
it is assumed that the limit measure $\nu$ is 
non-degen\-erate
in the 
following sense:
\begin{equation}
\label{eq:4}
\nu\robrfl{\cubrfl{x\in\Rd: \absfl{x\compi}> 1}} >0
,\quad i=1,\ldots,d
\ldotp
\end{equation}
This assumption ensures that
all asset losses $X\compi$ are relevant for the extremes of the portfolio loss 
$\xi\tr X$. If~\eqref{eq:4} is satisfied in the upper tail region, i.e., if 
\begin{equation}
\label{eq:4a}
\nu\robrfl{\cubrfl{x\in\Rd: x\compi> 1}} >0 
,\quad i=1,\ldots,d,
\end{equation} 
then  $\nu$ also characterizes the asymptotic distribution
of the componentwise maxima
$M_n:=\robr{M\compone,\ldots,M\compd}$ with 
$M\compi:=\max\cubr{X_1\compi,\ldots,X_n\compi}$
by the limit relation 
\begin{equation}
\label{eq:164}
\Prob\cubrfl{a_n\inv M_n\in[-\infty,x]} \weakconv
\exp\robrfl{-\nu\robrfl{[-\infty,\infty]^d\setminus[-\infty,x]}}
\end{equation}
for $x\in(0,\infty]^d$. 
Therefore $\nu$ is called 
\emph{exponent measure}. 
For more details concerning the asymptotic distributions of maxima
the reader is referred to~\citet{Resnick:1987} 
and \citet{de_Haan/Ferreira:2006}.
\par
Another consequence of the scaling property~\eqref{eq:30} is the  
product representation of $\nu$ in polar coordinates 
\[
(r,s):=\tau(x):=(\norm{x},\norm{x}\inv x)
\] 
with respect to an arbitrary norm $\norm{\cdot}$ on $\Rd$.
The induced 
measure $\nu^\tau:=\nu\circ\tau\inv$ necessarily satisfies
\begin{equation}
\label{eq:28}
\nu^\tau=c\cdot\rhoalpha\otimes\Psi
\end{equation}
with the constant factor 
\[
c=\nu\robrfl{\Anorm}
>0,
\]  
the measure $\rhoalpha$ on $(0,\infty]$ defined by 
\begin{equation}
\label{eq:176}
\rhoalpha((x,\infty]):=x\powminusalpha,
\quad 
x\in(0,\infty],
\end{equation} 
and a probability measure $\Psi$ on the unit sphere $\Sd_{\norm{\cdot}}$
with respect to $\norm{\cdot}$,
\[ 
\Sd_{\norm{\cdot}}:=\cubrfl{s\in\Rd : \norm{s} = 1}
\ldotp
\]
The measure $\Psi$ is called 
\emph{spectral measure} 
of $\nu$ or $X$.
Since the term \enquote{spectral measure} is already used in other areas, 
$\Psi$ is also referred to as 
\emph{angular measure}.
In the special case of $\Rplusd$-valued random vectors $X$ it
may be convenient to reduce the domain of $\Psi$ to 
$\Sd_{\norm{\cdot}}\setcut\Rplusd$. 
\par
Although the domain of the spectral measure $\Psi$ depends on the
norm $\norm{\cdot}$ underlying the polar coordinates, the 
representation~\eqref{eq:28} is norm-independent in the following sense:
if~\eqref{eq:28} holds for some norm $\norm{\cdot}$, then it also holds for 
any other norm $\norm{\cdot}_\diamond$ that is equivalent to $\norm{\cdot}$. 
The tail index $\alpha$ is the same and the spectral measure $\Psi_\diamond$ 
on the unit sphere $\Sd_\diamond$ corresponding to $\norm{\cdot}_\diamond$ 
is obtained from $\Psi$ by the following transformation:
\[
\Psi_\diamond=\Psi^T,\quad T(s):=\norm{s}_\diamond\inv s
\ldotp
\]
%
\par
Finally, it should be noted that multivariate regular variation of 
the loss vector $X$ is intimately related with the univariate regular variation 
of portfolio losses $\xi\tr X$.
As shown in \citet{Basrak/Mikosch/Davis:2002}, 
multivariate regular variation of $X$ 
implies existence of a portfolio vector $\xi_0\in\Rd$ such that $\xi_0 \tr X$ 
is regularly varying with tail index $\alpha$ and any 
portfolio loss $\xi\tr X$ satisfies
\begin{equation}
\label{eq:192}
\lim_{t\toinf}
\frac{\Prob\cubrfl{\xi\tr X >t}}{\Prob\cubrfl{\xi_0\tr X >t}} 
=c(\xi,\xi_0)
\in [0,\infty)
\ldotp
\end{equation}
This means that all portfolio losses $\xi\tr X$ are either regularly
varying with tail index $\alpha$ or asymptotically negligible 
compared to $\xi_0\tr X$. 
\par
Moreover, it is also worth a remark 
that for $\Rplusd$-valued random vectors $X$ 
the converse implication is true in the sense that~\eqref{eq:192} 
and univariate regular variation of $\xi_0\tr X$ 
imply multivariate regular variation of the random vector $X$.
This sort of Cram\'er-Wold theorem was established in 
\citet{Basrak/Mikosch/Davis:2002} and \citet{Boman/Lindskog:2009}.
\par
Under the assumption of multivariate regular variation of $X$ 
the \emph{extreme risk index} $\gamma_\xi = \gamma_\xi(X)$ 
is defined as 
%
\begin{equation}\label{eq:3.2}
\gamma_\xi(X)=\lim_{t\toinf} \frac{\Prob\cubr{\xi\tr X>t}}{\Prob\cubr{\norm{X}_1>t}}.
\end{equation}
In \citet{Mainik/Rueschendorf:2010} the random vector $X$ is restricted to 
$\Rplusd$ and the portfolio vector $\xi$ is restricted to $\Simpd$. 
The general case with $X$ in $\Rd$ and possible negative portfolio 
weights, i.e., short positions, is considered in \citet{Mainik:2010}. 
Normalizing the exponent measure $\nu$ by~\eqref{eq:182},
one obtains 
\begin{equation}\label{eq:3.1}
\gamma_\xi(X)=\nu\robrfl{\cubrfl{x\in\Rd: \xi\tr x > 1}}
\ldotp
\end{equation}
Rewriting this representation in terms of the spectral measure $\Psi$ 
and the tail index $\alpha$ yields
\begin{equation}\label{eq:apl.1}
\gammaxi
=
\int_{\Sdone}\robrfl{\xi\tr s}\pospart\powalpha \dm \Psi(s)
\ldotp
\end{equation}
Denoting the integrand by $\fxialpha$, we will write this representation 
as $\gammaxi=\Psi\fxialpha$.   
%
%
\par
The extreme risk index $\gamma_\xi(X)$ allows to compare the risk of different 
portfolios. It is easy to see that \eqref{eq:3.2} implies
\begin{equation}\label{eq:3.4}
\lim_{t\toinf} \frac{\Prob\cubr{\xione\tr X>t}}{\Prob\cubr{\xitwo\tr X>t}} = \frac{\gammaxione(X)}{\gammaxitwo(X)}.
\end{equation}
Thus, by construction, 
ordering of the extreme risk index $\gammaxi$ is related to the 
asymptotic portfolio loss order $\orderapl$. 
\par
However, designed for the comparison of different portfolio risks within one 
model, the extreme risk index $\gammaxi$ cannot be directly applied 
to the comparison of different models. 
The major problem is the standardization by $\Prob\cubr{\norm{X}_1>t}$ in 
\eqref{eq:3.2}. Indeed, since $\Prob\cubr{\norm{X}_1>t}$ also depends on the 
spectral measure $\PsiX$ of $X$, criteria for $\orderapl$ in terms of 
$\gammaxi$  demand the specification of the limit
\[
\lim_{t\toinf} 
\frac{\Prob\cubr{\norm{X}_1>t}}{\Prob\cubr{\norm{Y}_1>t}}
\ldotp
\]
\par 
Another technical issue arises from the invariance of $\orderapl$ under 
componentwise rescalings. Since the spectral measure $\Psi$ does not exhibit 
this property, ordering of spectral measures needs additional normalization 
of margins that makes it consistent with $\orderapl$. To solve these problems,
 we use an alternative representation of $\gammaxi$ in terms of the 
so-called canonical spectral measure $\Psiast$, 
which has standardized marginal weights.
\par
This representation is closely related to the asymptotic risk aggregation 
coefficient discussed by \cite{Barbe/Fougeres/Genest:2006}. 
Furthermore, the link between the canonical spectral measure and 
extreme value copulas 
allows to transfer ordering results for copulas into the $\orderapl$ 
setting. These results are presented in Section~\ref{sec:4}.
\par
To reduce the problem to the essentials,  
we start with the observation that $\orderapl$ is trivial for  
multivariate regularly varying random vectors with different 
tail indices and non-degenerate portfolio losses. 
\par
\begin{proposition}\label{prop:3.1}
Let $X$ and $Y$ be multivariate regularly varying on $\Rd$ and assume that $\gamma_\xi(Y)>0$ for all $\xi\in\Simpd$. 
\begin{enumerate}[(a)]\label{item:prop.3.1a}
\item If
\begin{equation}\label{eq:3.5}
\lim_{t\toinf} \frac{\Prob\cubr{\norm{X}_1>t}}{\Prob\cubr{\norm{Y}_1>t}} = 0,
\end{equation}
then $X \orderapl Y$.
\vspace{0.5em}
\item If $\alphaX>\alphaY$, then $X \orderapl Y$.
\end{enumerate}
\end{proposition}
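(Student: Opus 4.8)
The plan is to treat part~(a) directly from the definition~\eqref{eq:2.1} via a three-factor decomposition of the portfolio-loss ratio, and then to obtain part~(b) from part~(a) by verifying that the tail-ratio hypothesis~\eqref{eq:3.5} is forced by $\alphaX>\alphaY$. Fix an arbitrary $\xi\in\Simpd$. The starting point is that, by~\eqref{eq:3.2}, the two limits
\[
\lim_{t\toinf}\frac{\Prob\cubr{\xi\tr X>t}}{\Prob\cubr{\norm{X}_1>t}}=\gammaxi(X)
\quad\text{and}\quad
\lim_{t\toinf}\frac{\Prob\cubr{\xi\tr Y>t}}{\Prob\cubr{\norm{Y}_1>t}}=\gammaxi(Y)
\]
both exist, where $\gammaxi(X)\in[0,\infty)$ is finite --- indeed $\gammaxi(X)=\int_{\Sdone}\robrfl{\xi\tr s}\pospart\powalpha\,\dm\Psi(s)\le 1$ by~\eqref{eq:apl.1}, since $\norm{s}_1=1$ and $\norm{\xi}_\infty\le1$ for $\xi\in\Simpd$ --- and $\gammaxi(Y)>0$ by assumption. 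The latter also guarantees, together with regular variation of $\norm{Y}_1$, that $\Prob\cubr{\xi\tr Y>t}>0$ for all large $t$, so the ratios below are well defined and the $\frac00$ convention never intervenes.

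Next I would write, for $t$ large,
\[
\frac{\Prob\cubr{\xi\tr X>t}}{\Prob\cubr{\xi\tr Y\ge t}}
\le
\frac{\Prob\cubr{\xi\tr X>t}}{\Prob\cubr{\xi\tr Y> t}}
=
\frac{\Prob\cubr{\xi\tr X>t}}{\Prob\cubr{\norm{X}_1>t}}
\cdot
\frac{\Prob\cubr{\norm{X}_1>t}}{\Prob\cubr{\norm{Y}_1>t}}
\cdot
\frac{\Prob\cubr{\norm{Y}_1>t}}{\Prob\cubr{\xi\tr Y> t}},
\]
where the inequality uses $\Prob\cubr{\xi\tr Y\ge t}\ge\Prob\cubr{\xi\tr Y> t}$ to pass from $\ge$ to $>$. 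As $t\toinf$ the three factors converge to $\gammaxi(X)$, to $0$ by~\eqref{eq:3.5}, and to $1/\gammaxi(Y)$ respectively; the middle factor vanishes while the flanking factors have finite limits, so the product tends to $0$. Hence $\limsup_{t\toinf}\Prob\cubr{\xi\tr X>t}/\Prob\cubr{\xi\tr Y\ge t}=0\le 1$, and since $\xi\in\Simpd$ was arbitrary this is exactly $X\orderapl Y$.

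For part~(b) the observation is that $\alphaX>\alphaY$ already implies~\eqref{eq:3.5}, after which part~(a) applies verbatim. Multivariate regular variation of $X$ and $Y$ makes $\norm{X}_1$ and $\norm{Y}_1$ univariate regularly varying with tail indices $\alphaX$ and $\alphaY$, so writing $\Prob\cubr{\norm{X}_1>t}=t^{-\alphaX}L_X(t)$ and $\Prob\cubr{\norm{Y}_1>t}=t^{-\alphaY}L_Y(t)$ with $L_X,L_Y$ slowly varying, the ratio equals $t^{-(\alphaX-\alphaY)}L_X(t)/L_Y(t)$, which is regularly varying with strictly negative index $-(\alphaX-\alphaY)$ and therefore tends to $0$. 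Thus~\eqref{eq:3.5} holds and $X\orderapl Y$ follows from~(a).

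There is no deep obstacle here; the content is careful bookkeeping of three limits, and the only points that must not be skipped are the two I have flagged: eventual positivity of the denominator (supplied by $\gammaxi(Y)>0$, which is why this hypothesis is imposed) and finiteness of $\gammaxi(X)$ via~\eqref{eq:apl.1} (so that the first factor cannot blow up and offset the vanishing middle factor). The passage from $\ge$ to $>$ is immediate by monotonicity and costs nothing asymptotically.
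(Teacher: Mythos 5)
Your proposal is correct and follows essentially the same route as the paper: the identical three-factor decomposition of the portfolio-loss ratio for part~(a), with the limits $\gammaxi(X)$, $1/\gammaxi(Y)$, and $0$, and the same reduction of part~(b) to part~(a) via regular variation of $\norm{X}_1$ and $\norm{Y}_1$ with indices $\alphaX>\alphaY$. The extra bookkeeping you supply (finiteness of $\gammaxi(X)$, eventual positivity of the denominator, and the harmless passage from $\ge t$ to $>t$) is left implicit in the paper but does not change the argument.
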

\par
\begin{myproof}
\begin{enumerate}[(a)]
\item %
Using relation \eqref{eq:3.2} we obtain 
\begin{align*}
\mylefteqn
\limsup_{t\toinf}
\frac{\Prob\cubrfl{\xi\tr X > t}}{\Prob\cubrfl{\xi\tr Y > t}}\\
&=
\limsup_{t\toinf}
\robrfl{
\frac{\Prob\cubrfl{\xi\tr X > t}}{\Prob\cubrfl{\norm{X}_1>t}}
\cdot
\frac{\Prob\cubrfl{\norm{Y}_1>t}}{\Prob\cubrfl{\xi\tr Y > t}} 
\cdot
\frac{\Prob\cubrfl{\norm{X}_1 > t}}{\Prob\cubrfl{\norm{Y}_1>t}}
}\\
&=
\frac{\gammaxi(X)}{\gammaxi(Y)}
\cdot 
\limsup_{t\toinf}\frac{\Prob\cubrfl{\norm{X}_1 > t}}{\Prob\cubrfl{\norm{Y}_1>t}}\\
&=0
\ldotp
\end{align*}
\item %
Recall that multivariate regular variation of $X$ implies regular variation of $\norm{X}_1$ with tail index $\alphaX$. Analogously, $\norm{Y}_1$ is regularly varying with tail index $\alphaY$. Finally, $\alphaX > \alphaY$ yields \eqref{eq:3.5} and by~(\ref{item:prop.3.1a}) we obtain $X\orderapl Y$.
\qed
\end{enumerate}
\end{myproof}
\par
Thus the primary setting for studying the influence of dependence 
structures on the ordering of extreme portfolio losses is the case of 
random variables $X$ and $Y$ with equal tail indices:
\[
\alphaX=\alphaY=:\alpha
\ldotp
\] 
In the framework of multivariate regular variation, asymptotic dependence in the tail region 
is characterized by the spectral measure $\Psi$  or its canonical version $\Psiast$.  
The \emph{canonical exponent measure} $\nuast$ of $X$ is obtained from the exponent 
measure $\nu$ as 
\[
\nuast=\nu\circ T
\]
with the transformation $T:\Rd\to\Rd$ defined by
\begin{equation}
\quad
T(x)
:=\label{eq:3.7}
\robrfl{T_\alpha\robrfl{\nu\robr{B_1}\cdot x\compone},\ldots, T_\alpha\robrfl{\nu\robr{B_d}\cdot x\compd}},
\end{equation}
where 
\begin{equation} 
T_\alpha(t)
:=\label{eq:3.8}
\robrfl{t\pospart^{1/\alpha} - t\negpart^{1/\alpha}}  \text{ and }
B_i := \cubrfl{x\in\Rd: \absfl{x\compi} > 1}
\ldotp
\end{equation}
Furthermore, $\nuast$ exhibits the scaling property 
\[
\nuast(tA)=t\inv\nuast(A),
\quad t>0,
\] 
and, analogously to~\eqref{eq:28}, has a product structure in polar 
coordinates:
\begin{equation}\label{eq:apl.3}
\nuast\circ\tau\inv = \rho_1 \otimes \Psiast,
\end{equation}
The measure $\Psiast$ is the \emph{canonical spectral measure} of $X$. 
\par  
Since $\orderapl$ and $\Psiast$ are invariant under componentwise rescalings, 
the canonical spectral measure $\Psiast$ is more suitable for the 
characterization of $\orderapl$. 
The following lemma provides a representation of the extreme risk index 
$\gammaxi$ in terms of $\Psiast$.  Note that the formulation makes use of 
the componentwise product notation~\eqref{eq:apl.2}.
\par
\begin{proposition}
\label{prop:3.2}
Let $X$ be multivariate regularly varying on $\Rd$ with tail index 
$\alpha\in(0,\infty)$. 
If $X$ satisfies the non-degeneracy condition~\eqref{eq:4},  
then 
\begin{equation}\label{eq:3.9}
\gammaxi(X)=\int_{\Sdone}\gxialpha\robrfl{v s} \, \dm \Psiast(s),
\end{equation}
where $\Psiast$ denotes the canonical spectral measure of $X$, 
the rescaling vector $v=\robr{v\compone,\ldots,v\compd}$ 
is defined by 
\begin{equation}\label{eq:3.10}
v\compi:=\robr{\gammaei(X)+\gammaminusei(X)},
\end{equation}  
and the function $\gxialpha:\Rd\to\R$ is defined as 
\begin{equation}\label{eq:3.11}
\gxialpha(x):=%
\robrfl{\sum_{i=1}^d\xi\compi\cdot\robrfl{\robrfl{x\compi}\pospart^{1/\alpha} - \robrfl{x\compi}\negpart^{1/\alpha}}}\pospart\powalpha
\ldotp
\end{equation}
\end{proposition}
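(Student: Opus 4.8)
The plan is to change variables from the original exponent measure $\nu$ to its canonical version $\nuast$, and then to integrate in polar coordinates. First I would start from the representation~\eqref{eq:3.1} of the extreme risk index, which reads
\begin{equation*}
\gammaxi(X)=\nu\robrfl{\cubrfl{x\in\Rd:\xi\tr x>1}}\ldotp
\end{equation*}
The key preliminary remark is that the rescaling vector~\eqref{eq:3.10} is exactly the vector of marginal masses appearing in the transformation~\eqref{eq:3.7}: evaluating~\eqref{eq:3.1} at $\xi=\ei$ and $\xi=-\ei$ gives $\gammaei(X)=\nu(\cubr{x\compi>1})$ and $\gammaminusei(X)=\nu(\cubr{x\compi<-1})$, so that $v\compi=\gammaei(X)+\gammaminusei(X)=\nu(B_i)$. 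The non-degeneracy condition~\eqref{eq:4} guarantees $\nu(B_i)>0$ for each $i$, so that $T$ is a bijection of $\Rd$ and the rescaling by $v$ is well defined.

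Next I would carry out the change of variables. Since $\nuast=\nu\circ T$, i.e.\ $\nuast(A)=\nu(T(A))$ (consistently with the scaling $\nuast(tA)=t\inv\nuast(A)$), inverting gives $\nu(C)=\nuast(T\inv(C))$ for every Borel set $C$. Applying this to $C=\cubr{x:\xi\tr x>1}$ and using $T\inv(C)=\cubr{y:\xi\tr T(y)>1}$ turns the display above into $\gammaxi(X)=\nuast(\cubr{y:\xi\tr T(y)>1})$. By~\eqref{eq:3.7} and~\eqref{eq:3.8} the components of $T$ are $T(y)\compi=T_\alpha(v\compi y\compi)$, hence $\xi\tr T(y)=\sum_{i=1}^d\xi\compi T_\alpha(v\compi y\compi)$ is precisely the inner sum of~\eqref{eq:3.11} evaluated at the componentwise product $vy$. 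Consequently $\gxialpha(vy)=\robrfl{\robr{\xi\tr T(y)}\pospart}\powalpha$, and since $t>1$ is equivalent to $(t\pospart)\powalpha>1$, the defining set equals $\cubr{y:\gxialpha(vy)>1}$, so that
\begin{equation*}
\gammaxi(X)=\nuast\robrfl{\cubrfl{y\in\Rd:\gxialpha(vy)>1}}\ldotp
\end{equation*}

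Finally I would integrate in polar coordinates $(r,s)=\tau(y)=(\norm{y}_1,\norm{y}_1\inv y)$. By the product representation~\eqref{eq:apl.3} the $\nuast$-measure of this set equals the integral over $s\in\Sdone$ of the $\rho_1$-measure of its radial section, and using $vy=r(vs)$ gives
\begin{equation*}
\gammaxi(X)=\int_{\Sdone}\rho_1\robrfl{\cubrfl{r\in(0,\infty]:\gxialpha(r\,vs)>1}}\,\dm\Psiast(s)\ldotp
\end{equation*}
The decisive point is that $\gxialpha$ is positively homogeneous of degree one: $T_\alpha$ is homogeneous of degree $1/\alpha$, so the inner sum in~\eqref{eq:3.11} scales by $r^{1/\alpha}$ and the outer $\alpha$-th power restores a single factor, whence $\gxialpha(r\,vs)=r\,\gxialpha(vs)$. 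The radial section is therefore $(\gxialpha(vs)\inv,\infty]$ (empty when $\gxialpha(vs)=0$), and its $\rho_1$-measure equals $\gxialpha(vs)$ by~\eqref{eq:176} with $\alpha=1$; this is precisely~\eqref{eq:3.9}.

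The main obstacle is conceptual rather than computational: one must fix the correct direction of the standardizing map $T$, so that $\nu$ is recovered from $\nuast$ through $T\inv$, and then verify that the homogeneity degree of $\gxialpha$ is exactly one, matching the tail index $1$ of $\nuast$. It is this matching of degrees that makes the radial integral against $\rho_1$ collapse to the value $\gxialpha(vs)$ with no residual power of $r$.
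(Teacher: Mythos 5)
Your proposal is correct and follows essentially the same route as the paper's proof: change of variables from $\nu$ to $\nuast$ via $T$ (using $v\compi=\nu(B_i)$, guaranteed positive by~\eqref{eq:4}), the polar product representation $\rho_1\otimes\Psiast$, and a homogeneity argument to integrate out the radial component. The only difference is cosmetic ordering — the paper keeps the indicator $1\cubr{\xi\tr T(rs)>1}$ inside the double integral and exploits $T(rs)=r^{1/\alpha}T(s)$, identifying $\gxialpha(vs)$ only at the end, whereas you rewrite the set as $\cubr{y:\gxialpha(vy)>1}$ first and use degree-one homogeneity of $\gxialpha$; the computations are identical.
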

\par
\begin{myproof}
Denote $\Axione:=\{x\in\Rd: \xi\tr x\ge 1\}$. Then, by definition of $\nuast$,
\begin{align}
\gammaxi(X)
&=\nonumber
\nu(\Axione)\\
&=\nonumber
\nuast\robr{T\inv(\Axione)}\\
&=\nonumber
\nuast\cubrfl{x\in\Rd: T(x) \in \Axione}\\
&=\label{eq:3.12}
\int_{\Sdone}\int_{(0,\infty)} 
1\cubrfl{\xi\tr T(rs) > 1} \, \dm \rho_1(r) \, \dm \Psiast(s)
\ldotp
\end{align}
It is easy to see that~\eqref{eq:3.8} implies 
$T_\alpha(rt)=r^{1/\alpha}T_\alpha(t)$ for $r>0$ and $t\in\R$. 
Consequently, \eqref{eq:3.7} yields
\begin{equation}
\label{eq:3.13}
T(rx) = r^{1/\alpha} T(x)
\end{equation}
for $r>0$ and $x\in\Rd$. 
Applying~\eqref{eq:3.13} to~\eqref{eq:3.12}, one obtains  
\begin{align}
\gammaxi(X)
&=\nonumber
\int_{\Sdone}\int_{(0,\infty)} 
1\cubrfl{r^{1/\alpha} \xi\tr T(s) > 1} \, \dm \rho_1(r) \, \dm \Psiast(s)\\
&=\nonumber
\int_{\Sdone}\int_{(0,\infty)} 
1\cubrfl{\xi\tr T(s) > 0}
1\cubrfl{r > \robrfl{\xi\tr T(s)}\powminusalpha} \, \dm \rho_1(r) \, \dm \Psiast(s)\\
&=\nonumber
\int_{\Sdone}
1\cubrfl{\xi\tr T(s) > 0} \robrfl{\xi\tr T(s)}\powalpha \, \dm \Psiast(s)\\
&=\label{eq:3.14}
\int_{\Sdone}
\robrfl{\xi\tr T(s)}\powalpha\pospart \, \dm \Psiast(s)
\ldotp
\end{align}
Finally, consider the sets $B_i$ defined in~\eqref{eq:3.8}. 
It is easy to see that 
\begin{equation*}
\nu(B_i) 
= 
\gammaei(X)+\gammaminusei(X) = v\compi
\ldotp 
\end{equation*}
Hence 
\begin{align*}
\robrfl{\xi\tr T(s)}\pospart\powalpha 
&= 
\robrfl{
\sum_{i=1}^d\xi\compi\cdot\robrfl{T_\alpha\robrfl{v\compi s\compi}}
}  \pospart\powalpha\\
&=
\gxialpha\robrfl{vs}
\ldotp
\qed
\end{align*}
\end{myproof}

As already mentioned above, $\orderapl$ and $\Psiast$ are invariant under 
rescaling of components. 
Consequently, characterization of $\orderapl$ can be reduced to the case when the marginal weights $v\compi=\gammaei(X)+\gammaminusei(X)$ in~\eqref{eq:3.9} are standardized by
\begin{equation}
\label{eq:3.15}
\forall i,j\in\cubr{1,\ldots,d}
\quad 
\lim_{t\toinf} 
\frac{\Prob\cubr{\abs{X\compi}> t}}{\Prob\cubr{\abs{X\compj}>t}}
=1
\ldotp
\end{equation}
This condition will be referred to as the 
\emph{balanced tails condition}. 
The following result shows that this condition significantly simplifies the representation~\eqref{eq:3.9}. 
\par
\begin{proposition}
\label{prop:3.3}
Suppose that $X$ is  multivariate regularly varying on $\Rd$ with tail index
$\alpha\in(0,\infty)$. 
%
\begin{enumerate}[(a)]
\item 
\label{item:L39.3}%
If $X$ has balanced tails in the sense of~\eqref{eq:3.15}, then 
\begin{equation}
\label{eq:3.17}
\frac{\gammaxi(X)}{\gammaeone(X) + \gammaminuseone(X)} = \Psiast \gxialpha
\ldotp
\end{equation}
\item
\label{item:L39.1}%
The non-degeneracy condition~\eqref{eq:4} is equivalent to 
the existence of a vector $w\in(0,\infty)^d$ 
such that $wX$ has balanced tails.  
\item
\label{item:L39.2}%
The extreme risk index $\gammaxi$ of the rescaled vector $wX$ obtained 
in part~(\ref{item:L39.1})  satisfies
\begin{equation}
\label{eq:3.18}
\frac{\gammaxi(wX)}{\gammaeone(wX) + \gammaminuseone(wX)} = \PsiastX \gxialpha
\ldotp
\end{equation}
\end{enumerate}
\end{proposition}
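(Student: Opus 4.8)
The plan is to run all three parts through Proposition~\ref{prop:3.2}, exploiting that the integrand $\gxialpha$ defined in~\eqref{eq:3.11} is positively homogeneous of degree one and that the canonical spectral measure forgets componentwise rescaling. \emph{Part (a).} I would first record, as in the proof of Proposition~\ref{prop:3.2}, that the marginal weight in~\eqref{eq:3.9} is $v\compi = \gammaei(X)+\gammaminusei(X) = \nu(B_i)$, and that under the normalization~\eqref{eq:182} this equals $\lim_{t\toinf}\Prob\cubrfl{\absfl{X\compi}>t}\big/\Prob\cubrfl{\norm{X}_1>t}$. The balanced tails condition~\eqref{eq:3.15} then forces every ratio $v\compi/v\compj$ to equal $1$, so the $v\compi$ share one common value $c:=v\compone=\gammaeone(X)+\gammaminuseone(X)$; this value is strictly positive because the $\ell^1$-tail cannot be strictly heavier than every marginal tail. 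Since $\gxialpha$ is homogeneous of degree one (immediate from~\eqref{eq:3.11} together with $(\lambda t)\pospart^{1/\alpha}=\lambda^{1/\alpha}t\pospart^{1/\alpha}$ for $\lambda>0$), we have $\gxialpha(vs)=c\,\gxialpha(s)$, so substituting into~\eqref{eq:3.9} and dividing by $c$ yields~\eqref{eq:3.17}.

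\emph{Part (b).} For the forward implication I would assume~\eqref{eq:4}, i.e.\ $v\compi=\nu(B_i)>0$ for all $i$, which also makes each margin $\absfl{X\compi}$ regularly varying with index $\alpha$. Regular variation gives $\Prob\cubrfl{\absfl{w\compi X\compi}>t}=\Prob\cubrfl{\absfl{X\compi}>t/w\compi}\sim w\compi^{\alpha}\Prob\cubrfl{\absfl{X\compi}>t}$, so the marginal tail ratio of $wX$ tends to $w\compi^{\alpha}v\compi\big/(w\compj^{\alpha}v\compj)$; choosing $w\compi:=v\compi^{-1/\alpha}$ makes every such limit equal to $1$, i.e.\ $wX$ has balanced tails. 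For the converse, if $wX$ has balanced tails then its equal marginal weights are all positive (again by the $\ell^1$ domination argument), and the scaling relation~\eqref{eq:30} gives $\nu_{wX}(B_i)$ proportional to $w\compi^{\alpha}\nu(B_i)$, whence $\nu(B_i)>0$ for every $i$, which is~\eqref{eq:4}.

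\emph{Part (c).} By part (b) the vector $wX$ has balanced tails, so part (a) applied to $wX$ gives $\gammaxi(wX)\big/(\gammaeone(wX)+\gammaminuseone(wX))=\Psiast_{wX}\gxialpha$, where $\Psiast_{wX}$ is the canonical spectral measure of $wX$. It then remains to identify $\Psiast_{wX}=\PsiastX$. I would verify that the standardizing transformation~\eqref{eq:3.7} exactly cancels the rescaling: using $T_\alpha(\lambda t)=\lambda^{1/\alpha}T_\alpha(t)$ for $\lambda>0$ and $\nu_{wX}(B_i)\propto w\compi^{\alpha}\nu(B_i)$, a short computation shows that the canonical exponent measures of $X$ and of $wX$ coincide, hence so do their spectral measures; alternatively one may invoke the invariance of $\Psiast$ under componentwise rescaling noted in the text. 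Combining the two displays gives~\eqref{eq:3.18}.

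The main obstacle is precisely this identification $\Psiast_{wX}=\PsiastX$ in part (c): it is the conceptual content that makes the canonical normalization the correct device for $\orderapl$, whereas everything else reduces to degree-one homogeneity and elementary one-dimensional regular variation. A minor supporting point, used in part (a) and in the converse of part (b), is the standard fact that the tail of $\norm{\cdot}_1$ is governed by the heaviest margin, so a common marginal weight cannot vanish.
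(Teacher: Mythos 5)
Your proposal is correct and follows essentially the same route as the paper: balanced tails force all marginal weights $v\compi$ in the representation of Proposition~\ref{prop:3.2} to coincide, degree-one homogeneity of $\gxialpha$ then gives (a), the choice $w\compi=(\nu(B_i))^{-1/\alpha}$ gives the forward half of (b), and (c) is part (a) applied to $wX$ together with the invariance of $\Psiast$ under componentwise rescaling. The only departures are cosmetic and both sound: in the converse of (b) you use the scaling relation $\nu_{wX}(B_i)\propto (w\compi)^{\alpha}\nu(B_i)$ plus positivity of the common marginal weight, where the paper instead computes the marginal tail ratios of $wX$ directly, and in (c) you sketch a verification of the rescaling invariance that the paper simply cites as known.
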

\par
\begin{myproof}
Part~(\ref{item:L39.3}). 
Consider the integrand $\gxialpha(vs)$ in the representation~\eqref{eq:3.9}:
\[
\gxialpha(vs)
=
\robrfl{\sum_{i=1}^{d} \xi\compi \cdot 
\robrfl{\robrfl{v\compi s\compi}\pospart\powonebyalpha - \robrfl{v\compi s\compi}\negpart\powonebyalpha}
}\pospart\powalpha
\ldotp
\]
The balanced tails condition~\eqref{eq:3.15} implies that $X$ is 
non-degenerate in the sense of~\eqref{eq:4}. 
Furthermore, all weights $v\compi$ in the representation~\eqref{eq:3.9} 
are equal: 
\begin{align*}
1
&=
\lim_{t\toinf}
\frac{\Prob\cubrfl{\absfl{X\compi}>t} / \Prob\cubrfl{\normfl{X}_1 >t}} 
{\Prob\cubrfl{\absfl{X\compj}>t} / \Prob\cubrfl{\normfl{X}_1 >t}}
=
\frac{\gammaei(X) + \gammaminusei(X)}{\gammaej(X) + \gammaminusej(X)}\\
&=
\frac{v\compi}{v\compj}
,\quad i,j\in\cubr{1,\ldots,d}
\ldotp
\end{align*}
Hence $\gxialpha(vs)$ simplifies to   
\begin{align*}
\gxialpha(vs)
&=
v\compone \gxialpha(s)\\
&=
\robrfl{\gammaeone(X) + \gammaminuseone(X)} \gxialpha(s)
\ldotp
\end{align*} 
\par
Part~(\ref{item:L39.1}). 
Suppose that $X$ satisfies~\eqref{eq:4}. Then the sets $B_i$ defined in~\eqref{eq:3.8} satisfy $\nu(B_i)>0$ for $i=1,\ldots,d$. Consequently, the random variables $\abs{X\compi}$ are regularly varying with tail index $\alpha$. Denoting 
\begin{equation}
\label{eq:3.19}
w\compi:=\robr{\nu(B_i)}\powminusonebyalpha,
\end{equation}
one obtains 
\begin{align*}
\lim_{t\toinf}\frac{\Prob\cubrfl{\absfl{w\compi X\compi} > t}}{\Prob\cubrfl{\norm{X}_1>t}} 
&=
\lim_{t\toinf}
\robrfl{
\frac{\Prob\cubrfl{\absfl{X\compi} > t/w\compi}}{\Prob\cubrfl{\absfl{X\compi} > t}}
\cdot
\frac{\Prob\cubrfl{\absfl{X\compi} > t}}{\Prob\cubrfl{\norm{X}_1>t}} 
}\\
&= \robrfl{w\compi}\powalpha\cdot\nu(B_i)\\
&= 1
\end{align*}
for $i=1,\ldots,d$. Hence, for any $i,j\in\cubr{1,\ldots,d}$, 
\begin{align*}
\lim_{t\toinf}
\frac{\Prob\cubrfl{\absfl{w\compi X\compi} > t}}{\Prob\cubrfl{\absfl{w\compj X\compj} > t}}
&=
1
\ldotp
\end{align*}
\par
To prove the inverse implication, suppose that $Z:=wX$ has balanced tails 
for some $w\in(0,\infty)^d$. Then the the exponent measure $\nu$ of $X$
satisfies 
\begin{align*}
\frac{\nu(B_i)}{\nu(B_1)}
&=
\lim_{t\toinf}\frac{\Prob\cubrfl{\absfl{X\compi}>t}}{\Prob\cubrfl{\absfl{X\compone} > t}}\\
&=
\lim_{t\toinf}\frac{\Prob\cubrfl{\absfl{Z\compi}>w\compi t}}{\Prob\cubrfl{\absfl{Z\compone} > w\compone t}}\\
&=
\lim_{t\toinf}\robrfl{
\frac{\Prob\cubrfl{\absfl{Z\compi}> w\compi t}}{\Prob\cubrfl{\absfl{Z\compi} >t}}
\cdot
\frac{\Prob\cubrfl{\absfl{Z\compone} >t}}{\Prob\cubrfl{\absfl{Z\compone} > w\compone t}}
\cdot
\frac{\Prob\cubrfl{\absfl{Z\compi} >t}}{\Prob\cubrfl{\absfl{Z\compone} >t}}
}\\
&=
\robrfl{\frac{w\compi}{w\compone}}\powminusalpha
\in(0,\infty)
,\quad i\in\cubrfl{1,\ldots,d}
\ldotp
\end{align*}
Since multivariate regular variation of $X$ implies $\nu(B_j)>0$ for at least 
one index $j\in\cubr{1,\ldots,d}$, this yields $\nu(B_i)>0$ for all $i$.
\par

Part~(\ref{item:L39.2}). This is an immediate consequence of 
part (\ref{item:L39.3}) and the invariance of canonical spectral measures 
under componentwise rescaling.
\end{myproof}
\par
Representation~\eqref{eq:3.17} suggests that ordering of 
the normalized extreme risk indices $\gammaxi/(\gammaeone + \gammaminuseone)$ 
in the balanced tails setting can be considered 
as an \emph{integral order relation} 
for canonical spectral measures with respect to the function class
\begin{equation}\label{eq:3.20}
\Gcalalpha:=\cubrfl{\gxialpha:\xi\in\Simpd}
\ldotp
\end{equation}
This justifies the following definition.
\par
\begin{definition}
\label{def:3.4}
Let $\Psiast$ and $\Phiast$ be canonical spectral measures on $\Sdone$
and let $\alpha>0$. 
Then the order relation $\Psiast \orderGcalalpha \Phiast$ is defined by
\begin{equation}\label{eq:3.21}
\forall g\in\Gcalalpha
\quad
\Psiast g \le\Phiast g
\ldotp
\end{equation}
\end{definition}
\par
\begin{remark}
\label{rem:3.1}
\begin{enumerate}[(a)]
\item
\label{item:r14.1}%
For $\alpha=1$ and spectral measures on $\Simpd$ the extreme risk index 
$\gammaxi(X)$ is linear in $\xi$ \citep[cf.][Lemma~3.2]{Mainik/Rueschendorf:2010}.
Consequently, $\orderGcalalpha$ is indifferent in this case, 
i.e., 
any $\Psiast$ and $\Phiast$ on $\Borel\robr{\Simpd}$ satisfy 
\begin{equation}
\label{eq:3.22}
\Psiast \order_{\Gcal,1} \Phiast 
\quad\text{and}\quad
\Phiast \order_{\Gcal,1} \Psiast
\ldotp
\end{equation}
\item
\label{item:r14.2}%
The order relation $\orderGcalalpha$ is mixing invariant  
in the sense that 
uniform ordering of two parametric families 
$\cubr{\Psiast_\theta:\theta\in\Theta}$ and 
$\cubr{\Phiast_\theta:\theta \in\Theta}$,  
\[
\forall \theta\in\Theta
\quad
\Psiast_\theta\orderGcalalpha\Phiast_\theta
,
\]
implies 
\[
\int_\Theta\Psiast_\theta \, \dm \mu(\theta) 
\orderGcalalpha 
\int_\Theta\Phiast_\theta \, \dm \mu(\theta)
\]
for any probability measure $\mu$ on $\Theta$. 
\end{enumerate}
\end{remark}
\par
%
%
The following theorem states that $\orderapl$ is in a certain sense  
equivalent to the ordering of canonical spectral measures 
and allows to reduce the verification of $\orderapl$ to the verification 
of $\orderGcalalpha$. 
Some exemplary applications are given in  Section~\ref{sec:5}.    
Furthermore, given explicit representations of spectral measures or their 
canonical versions,  
this result allows to verify $\orderapl$ numerically, which is very 
useful in practice. 
\par
\begin{theorem}
\label{theo:3.4}
Let $X$ and $Y$ be multivariate regularly varying random vectors on $\Rd$ with tail index $\alpha\in(0,\infty)$ and canonical  spectral measures $\PsiastX$ and $\PsiastY$. Further, suppose that $X$ and $Y$ satisfy the balanced tails condition~\eqref{eq:3.15}. 
\begin{enumerate}[(a)]
\item
\label{item:t4.1}%
If $\absfl{X\compone} \orderapl \absfl{Y\compone}$,
then $\PsiastX \orderGcalalpha \PsiastY$ implies $X \orderapl Y$.
\vspace{0.5em}
\item
\label{item:t4.2}%
If 
$\absfl{X\compone} \orderapl \absfl{Y\compone}$ 
and 
$\absfl{Y\compone} \orderapl \absfl{X\compone}$,
then 
$\PsiastX \orderGcalalpha \PsiastY$ is equivalent to  $X \orderapl Y$.
\end{enumerate}
\end{theorem}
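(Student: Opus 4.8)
The plan is to fix an arbitrary portfolio $\xi\in\Simpd$ and compute the defining ratio of $\orderapl$ explicitly, reducing it to the integrals $\PsiastX\gxialpha$, $\PsiastY\gxialpha$ and to the marginal comparison of $\absfl{X\compone}$ with $\absfl{Y\compone}$. Assuming for the moment that $\gammaxi(Y)>0$, I would first split
\[
\frac{\Prob\cubr{\xi\tr X>t}}{\Prob\cubr{\xi\tr Y\ge t}}
=
\frac{\Prob\cubr{\xi\tr X>t}/\Prob\cubr{\norm{X}_1>t}}
     {\Prob\cubr{\xi\tr Y\ge t}/\Prob\cubr{\norm{Y}_1>t}}
\cdot
\frac{\Prob\cubr{\norm{X}_1>t}}{\Prob\cubr{\norm{Y}_1>t}},
\]
and let $t\toinf$. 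By the definition~\eqref{eq:3.2} of the extreme risk index the two marginal ratios in the first factor converge to $\gammaxi(X)$ and $\gammaxi(Y)$ respectively, the distinction between $>t$ and $\ge t$ being immaterial since all quantities are regularly varying.

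The second step treats the factor $\Prob\cubr{\norm{X}_1>t}/\Prob\cubr{\norm{Y}_1>t}$. Here I would use the balanced tails condition and the normalisation $\nu(\Anorm)=1$ (with $\norm{\cdot}=\norm{\cdot}_1$) to obtain
\[
\lim_{t\toinf}\frac{\Prob\cubr{\absfl{X\compone}>t}}{\Prob\cubr{\norm{X}_1>t}}
=\nu_X(B_1)=\gammaeone(X)+\gammaminuseone(X),
\]
and the analogous limit for $Y$, so that
\[
\frac{\Prob\cubr{\norm{X}_1>t}}{\Prob\cubr{\norm{Y}_1>t}}
=
\frac{\nu_Y(B_1)}{\nu_X(B_1)}\cdot
\frac{\Prob\cubr{\absfl{X\compone}>t}}{\Prob\cubr{\absfl{Y\compone}>t}}\,(1+o(1)).
\]
Combining the two steps with Proposition~\ref{prop:3.3}(\ref{item:L39.3}), which gives $\gammaxi(X)/\nu_X(B_1)=\PsiastX\gxialpha$ and $\gammaxi(Y)/\nu_Y(B_1)=\PsiastY\gxialpha$, and taking $\limsup_{t\toinf}$ (every factor except the marginal ratio converges to a positive finite constant), the algebra collapses to the master identity
\[
\limsup_{t\toinf}\frac{\Prob\cubr{\xi\tr X>t}}{\Prob\cubr{\xi\tr Y\ge t}}
=
\frac{\PsiastX\gxialpha}{\PsiastY\gxialpha}\cdot
\limsup_{t\toinf}\frac{\Prob\cubr{\absfl{X\compone}>t}}{\Prob\cubr{\absfl{Y\compone}>t}}.
\]

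Both parts then follow at once. For part~(\ref{item:t4.1}), the hypothesis $\absfl{X\compone}\orderapl\absfl{Y\compone}$ bounds the marginal $\limsup$ by $1$, while $\PsiastX\orderGcalalpha\PsiastY$ gives $\PsiastX\gxialpha\le\PsiastY\gxialpha$, so the right-hand side is $\le1$ for every $\xi$ and hence $X\orderapl Y$. For part~(\ref{item:t4.2}), the two marginal relations force $\lim_{t\toinf}\Prob\cubr{\absfl{X\compone}>t}/\Prob\cubr{\absfl{Y\compone}>t}=1$, so the master identity reduces to $\limsup_{t\toinf}\Prob\cubr{\xi\tr X>t}/\Prob\cubr{\xi\tr Y\ge t}=\PsiastX\gxialpha/\PsiastY\gxialpha$; consequently $X\orderapl Y$ holds if and only if $\PsiastX\gxialpha\le\PsiastY\gxialpha$ for all $\xi\in\Simpd$, which is precisely $\PsiastX\orderGcalalpha\PsiastY$.

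The delicate point, and the step I expect to cost the most care, is the degenerate case $\gammaxi(Y)=0$, where the division above is illegal and $\Prob\cubr{\xi\tr Y\ge t}$ decays faster than $\Prob\cubr{\norm{Y}_1>t}$. In that situation the leading-order regular variation no longer pins down the tail of $\xi\tr Y$, so the ratio is governed by second-order effects beyond the reach of $\Psiast$. I would therefore carry out the argument under the portfolio non-degeneracy $\gammaxi(Y)>0$ for all $\xi\in\Simpd$, exactly as in Proposition~\ref{prop:3.1}; this holds automatically in the $\Rplusd$-valued setting driving the applications, since for $X\ge0$ and $\xi\in\Simpd$ the non-degeneracy condition~\eqref{eq:4} already yields $\gammaxi(X)>0$. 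The only genuinely problematic scenario is sign cancellation in $\Rd$, which I would exclude by this hypothesis, noting that under $\PsiastX\orderGcalalpha\PsiastY$ a vanishing $\gammaxi(Y)$ forces $\gammaxi(X)=0$ as well, so that the two models degenerate on exactly the same portfolios.
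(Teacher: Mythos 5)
Your proposal is correct and takes essentially the same route as the paper's own proof: the same normalization of both portfolio tails by the marginal tails $\Prob\cubr{\absfl{X\compone}>t}$ and $\Prob\cubr{\absfl{Y\compone}>t}$, Proposition~\ref{prop:3.3}(\ref{item:L39.3}) to identify the resulting limits as $\PsiastX\gxialpha$ and $\PsiastY\gxialpha$, and the identical master identity (the paper's display~\eqref{eq:3.24}), from which both parts are read off exactly as you do. Your caveat about portfolios with $\gammaxi(Y)=0$ is legitimate rather than a divergence: the paper's proof silently assumes $\PsiastY\gxialpha>0$ when it passes to the limit of $\Prob\cubr{\absfl{Y\compone}>t}/\Prob\cubr{\xi\tr Y>t}$, so your explicit non-degeneracy hypothesis (automatic for $\Rplusd$-valued vectors, where sign cancellation cannot occur) makes rigorous a point the paper glosses over without otherwise changing its argument.
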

\par
\begin{myproof}
(\ref{item:t4.1}) 
Since $X$ has balanced tails, Proposition \ref{prop:3.3}(\ref{item:L39.3}) yields
\begin{align*}
\lim_{t\toinf}
\frac{\Prob\cubrfl{\xi\tr X >t}}{\Prob\cubrfl{\absfl{X\compone}>t}}
&=
\lim_{t\toinf}
\robrfl{
\frac{\Prob\cubrfl{\xi\tr X >t}}{\Prob\cubrfl{\norm{X}_1>t}}
\cdot
\frac{\Prob\cubrfl{\norm{X}_1>t}}{\Prob\cubrfl{\absfl{X\compone}>t}}
}\\
&=
\frac{\gammaxi(X)}{\gammaeone(X) + \gammaminuseone(X)}\\
&=
\PsiastX \gxialpha
\ldotp
\end{align*}
Analogously one obtains
\[
\lim_{t\toinf}
\frac{\Prob\cubrfl{\xi\tr Y >t}}{\Prob\cubrfl{\absfl{Y\compone}>t}}
=
\PsiastY \gxialpha
\ldotp
\]
Moreover, $\PsiastX \orderGcalalpha \PsiastY$ implies 
\begin{equation}
\label{eq:3.23}
\frac{\PsiastX\gxialpha}{\PsiastY\gxialpha}
\le
1
\ldotp
\end{equation}
Consequently, 
\begin{align}
\mylefteqn\nonumber
\limsup_{t\toinf}
\frac{\Prob\cubrfl{\xi\tr X >t}}{\Prob\cubrfl{\xi\tr Y >t}}\\
&=\nonumber
\limsup_{t\toinf}\robrfl{
\frac{\Prob\cubrfl{\xi\tr X >t}}{\Prob\cubrfl{\absfl{X\compone}>t}}
\cdot
\frac{\Prob\cubrfl{\absfl{Y\compone}>t}}{\Prob\cubrfl{\xi\tr Y >t}}
\cdot
\frac{\Prob\cubrfl{\absfl{X\compone}>t}}{\Prob\cubrfl{\absfl{Y\compone}>t}}
}\\
&=\label{eq:3.24}
\frac{\PsiastX\gxialpha}{\PsiastY\gxialpha}
\cdot
\limsup_{t\toinf}\frac{\Prob\cubrfl{\absfl{X\compi}>t}}{\Prob\cubrfl{\absfl{Y\compi}>t}}\\
&\le\nonumber
1
\end{align}
due to~\eqref{eq:3.23} and $\abs{X\compi} \orderapl \abs{Y\compi}$.
\par 
\medskip
\noindent %
(\ref{item:t4.2})
By part~(\ref{item:t4.1}), it suffices to show that 
$X \orderapl Y$ implies $\PsiastX \orderGcalalpha \PsiastY$. 
By assumption $\abs{X^{(1)}}$ and $\abs{Y^{(1)}}$ have asymptotically equivalent tails,
\[
\lim_{t\toinf}
\frac{\Prob\cubrfl{\absfl{X\compone} >t}}{\Prob\cubrfl{\absfl{Y\compone} >t}} 
= 
1
\ldotp
\]
Thus~\eqref{eq:3.24} yields
\[
\frac{\PsiastX\gxialpha}{\PsiastY\gxialpha}
=
\limsup_{t\toinf}
\frac{\Prob\cubrfl{\xi\tr X >t}}{\Prob\cubrfl{\xi\tr Y >t}}
\]
and $X \orderapl Y$ implies $\PsiastX \orderGcalalpha \PsiastY$.
\end{myproof}
\par
The following result answers the question for dependence structures 
corresponding to the best and the worst possible diversification effects 
for multivariate regularly varying random vectors in $\Rplusd$.
According to Theorem~\ref{theo:3.4}, it suffices to find the upper and 
the lower elements with respect to $\orderGcalalpha$ in the set 
of all canonical spectral measures on $\Simpd$. 
It turns out that for $\alpha > 1$  
the best diversification effects are obtained in case of asymptotic 
independence, i.e., the $\orderGcalalpha$-maximal element is given by   
\begin{equation}\label{eq:apl.8}
\Psiast_0 := \sumioned \Dirac{\ei},
\end{equation}
whereas the worst diversification effects are obtained in case of 
the asymptotic comonotonicity, represented by 
\begin{equation}\label{eq:apl.9}
\Psiast_1 := d \cdot \Dirac{(1/d,\ldots,1/d)}
\ldotp
\end{equation}
For $\alpha < 1$ the situation is inverse. 
\begin{theorem}\label{thm:3.8}
Let $\Psiast$ be an arbitrary canonical spectral measure on $\Simpd$ and let 
$\Psiast_0$ and $\Psiast_1$ be defined according to~\eqref{eq:apl.8} 
and~\eqref{eq:apl.9}. Then
\begin{enumerate}[(a)]
\item\label{item:thm:3.8.a}
$\Psiast_0  \orderGcalalpha \Psiast \orderGcalalpha \Psiast_1$ 
for $\alpha \ge 1$.
\vspace{0.5em} 
\item\label{item:thm:3.8.b}
$\Psiast_1  \orderGcalalpha \Psiast \orderGcalalpha \Psiast_0$ 
for $\alpha \in (0,1]$. 
\end{enumerate}
\end{theorem}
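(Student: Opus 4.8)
The plan is to unwind Definition~\ref{def:3.4}: establishing the chain $\Psiast_0 \orderGcalalpha \Psiast \orderGcalalpha \Psiast_1$ amounts to proving $\Psiast_0 g \le \Psiast g \le \Psiast_1 g$ for every $g=\gxialpha$ in the class $\Gcalalpha$ of~\eqref{eq:3.20}, i.e.\ for every $\xi\in\Simpd$. On $\Simpd$ all coordinates $s\compi$ are non-negative, so the function~\eqref{eq:3.11} collapses to
\[
\gxialpha(s)=\robrfl{\sumioned \xi\compi (s\compi)\powonebyalpha}\powalpha,
\quad s\in\Simpd
\ldotp
\]
The one structural fact I would rely on is that every canonical spectral measure on $\Simpd$ has standardized marginal weights,
\[
\int_{\Simpd} s\compi \, \dm\Psiast(s)=1,
\quad i=1,\ldots,d,
\]
which is just the marginal normalization $\nuast\robr{\cubr{y\compi>1}}=1$ rewritten in polar coordinates via~\eqref{eq:apl.3} and $\rho_1((x,\infty])=x\inv$ from~\eqref{eq:176}; both $\Psiast_0$ and $\Psiast_1$ visibly satisfy it. Using this I would first read off the two extremes: $\Psiast_0=\sumioned\Dirac{\ei}$ gives $\Psiast_0\gxialpha=\sumioned \gxialpha(\ei)=\sumioned \xi\compi\powalpha$, while $\Psiast_1=d\,\Dirac{(1/d,\ldots,1/d)}$ gives $\Psiast_1\gxialpha=d\cdot(1/d)\cdot 1=1$.

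The heart of the argument is to sandwich $\gxialpha$ pointwise on $\Simpd$ between two \emph{linear} functions whose $\Psiast$-integrals are fixed by the marginal normalization. For $\alpha\ge 1$ I would prove
\[
\sumioned \xi\compi\powalpha\, s\compi
\;\le\;
\gxialpha(s)
\;\le\;
\sumioned \xi\compi s\compi,
\quad s\in\Simpd
\ldotp
\]
The lower bound is the elementary inequality $\robr{\sum_i a_i}\powalpha\ge\sum_i a_i\powalpha$ (valid for $\alpha\ge 1$, $a_i\ge 0$) applied to $a_i:=\xi\compi(s\compi)\powonebyalpha$; the upper bound is Jensen's inequality for the concave map $t\mapsto t\powonebyalpha$ with weights $\xi\compi$ (which sum to $1$ as $\xi\in\Simpd$), followed by raising to the power $\alpha$. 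Integrating against $\Psiast$ and invoking $\int s\compi\,\dm\Psiast=1$ converts these into $\sumioned\xi\compi\powalpha\le\Psiast\gxialpha\le\sumioned\xi\compi=1$, that is $\Psiast_0\gxialpha\le\Psiast\gxialpha\le\Psiast_1\gxialpha$ for every $\xi\in\Simpd$; this is exactly $\Psiast_0\orderGcalalpha\Psiast\orderGcalalpha\Psiast_1$ and proves~(\ref{item:thm:3.8.a}). For $\alpha\in(0,1]$ the map $t\mapsto t\powonebyalpha$ is convex and $t\mapsto t\powalpha$ is subadditive, so both pointwise inequalities reverse and I obtain $\Psiast_1\orderGcalalpha\Psiast\orderGcalalpha\Psiast_0$, which is~(\ref{item:thm:3.8.b}).

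The only real ingenuity is guessing the two linear bounds; they are forced by the demand that the integrals collapse under $\int s\compi\,\dm\Psiast=1$, and a helpful check is that the lower minorant touches $\gxialpha$ at each vertex $\ei$ (explaining why $\Psiast_0$ is extremal) whereas the upper majorant touches it at the barycentre $(1/d,\ldots,1/d)$ (explaining why $\Psiast_1$ is extremal). The main thing to watch is the direction of every inequality as $\alpha$ passes through $1$: at $\alpha=1$ all of them become equalities, both $\sumioned\xi\compi\powalpha$ and the upper value equal $1$, so~(a) and~(b) hold simultaneously and $\orderGcalalpha$ is indifferent, in agreement with Remark~\ref{rem:3.1}(\ref{item:r14.1}).
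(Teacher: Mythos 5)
Your proof is correct, but it takes a genuinely different route from the paper's for half of the argument. The paper proves the comparison with $\Psiast_1$ probabilistically: it represents $\Psiast\gxialpha=\gammaxi(X)/\gammaeone(X)$ for a multivariate regularly varying vector $X$ with balanced tails, invokes the convexity (resp.\ concavity) of $\xi\mapsto\gammaxi$ for $\alpha\ge1$ (resp.\ $\alpha\le1$) cited from \citet[Lemma~3.2]{Mainik/Rueschendorf:2010}, and combines this with the vertex values $\Psiast g_{\ei,\alpha}=1$ to conclude $\Psiast\gxialpha\le 1=\Psiast_1\gxialpha$. You instead obtain this bound by a pointwise Jensen inequality in $s$, namely
\[
\gxialpha(s)=\robrfl{\sumioned \xi\compi \robr{s\compi}\powonebyalpha}\powalpha
\le \sumioned \xi\compi s\compi \quad (\alpha\ge 1),
\]
and then integrate using the marginal normalization $\int_{\Simpd}s\compi\,\dm\Psiast(s)=1$; this replaces the external convexity lemma and the probabilistic detour through an MRV vector with an elementary, self-contained computation. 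For the comparison with $\Psiast_0$ the two proofs essentially coincide: your superadditivity inequality $\robr{\sum_i a_i}\powalpha\ge\sum_i a_i\powalpha$ is exactly the paper's argument with $z_i:=a_i/\sum_j a_j$, $\sum_i z_i\powalpha\le\sum_i z_i=1$, written without the normalization. What your route buys is independence from the cited lemma and a transparent explanation of extremality (the lower linear minorant touches $\gxialpha$ at the vertices $\ei$, the upper majorant at the barycentre); what the paper's route buys is the link to the convexity structure of $\xi\mapsto\gammaxi$, which carries the portfolio-diversification interpretation and is reused elsewhere. Your appeal to $\int_{\Simpd}s\compi\,\dm\Psiast(s)=1$ is legitimate: it is precisely the standardization $\nuast\robr{\cubr{x\compi>1}}=1$ of the canonical exponent measure rewritten via~\eqref{eq:apl.3} and~\eqref{eq:176}, and the paper itself uses it in the same way (both in $\Psiast g_{\ei,\alpha}=1$ and in~\eqref{eq:apl.7}).
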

\begin{proof}
Let $X$ be multivariate regularly varying on $\Rplusd$ with canonical spectral 
measure $\Psiast$. Without loss of generality we can assume that $X$ 
satisfies the balanced tails condition~\eqref{eq:3.15}. 
Then, according to~\eqref{eq:3.17}, we have 
\begin{equation}
\label{eq:apl.6}
\Psiast\gxialpha=\frac{\gammaxi(X)}{\gammaeone(X)}
\ldotp
\end{equation}
Furthermore, 
we have $\Psiast g_{\ei,\alpha} = 1$ for $i=1,\ldots,d$.  
Recall that the mapping $\xi\mapsto\gammaxi$ is convex for $\alpha \ge 1$ 
\citep[cf.][Lemma~3.2]{Mainik/Rueschendorf:2010}. 
Due to~\eqref{eq:apl.6} this behaviour is inherited by the mapping 
$\xi\mapsto\Psiast\gxialpha$. 
Thus for $\alpha \ge 1$  
we have $\Psiast\gxialpha \le 1 = \Psiast_1\gxialpha$ for all $\xi\in\Simpd$, 
which exactly means $\Psiast \orderGcalalpha \Psiast_1$ for $\alpha \ge 1$. 
\par
To complete the proof of part~(\ref{item:thm:3.8.a}), note that 
the normalization of canonical spectral measures yields 
\begin{equation}\label{eq:apl.7}
\forall\xi\in\Simpd
\quad
\Psiast_0\gxialpha 
= 
\sumioned\robrfl{\xi\compi}\powalpha
=
\int_{\Simpd} \sumioned \robrfl{\xi\compi}\powalpha s \compi  
\,\Psiast(\dm s)
\end{equation}
Comparing the integrand on the right side of \eqref{eq:apl.7} with 
the function $\gxialpha(s)=\robr{\xi\tr s\powonebyalpha}\powalpha$, 
we see that 
\[
\sumioned \robrfl{\xi\compi}\powalpha s\compi
=
\gxialpha(s)
\cdot
\sumioned z_i\powalpha
\]
with
\[
z_i := \frac{\xi\compi \cdot\robrfl{s\compi}\powonebyalpha} 
{\xi\tr s\powonebyalpha}
\ldotp
\]
Thus it suffices to demonstrate that $\sumioned z_i\powalpha \le 1$, 
which follows from $z_i\in[0,1]$, $z_i\powalpha \le z_i$ for $\alpha \ge 1$, and 
$\sumioned z_i=1$.
\par
The inverse result for $\alpha\in(0,1]$ stated in~(\ref{item:thm:3.8.b})  
follows from 
the concavity of the mapping $\xi\mapsto\Psiast\gxialpha$ 
and the inequality $z_i\powalpha \ge z_i$.  
\end{proof}
\par
Due to Theorem~\ref{theo:3.4}, an analogue of the foregoing 
result for $\orderapl$ is straightforward.
\begin{corollary}\label{cor:3.10}
Let $X$ be multivariate regularly varying in $\Rplusd$ with tail index 
$\alpha\in(0,\infty)$ and identically 
distributed margins $X\compi\sim F$, $i=1,\ldots,d$.
Further, let $Y$ be a random vector with independent margins 
$Y \compi\sim F$, and let $Z$ be a random vector with totally dependent 
margins $Z\compi=Z\compone$ $\Prob$-a.s.\ and $Z\compone\sim F$. 
Then
\begin{enumerate}[(a)]
\item
$Y \orderapl X \orderapl Z$ for $\alpha \ge 1$
\item
$Z \orderapl X \orderapl Y$ for $\alpha \in (0,1]$.
\end{enumerate}
\end{corollary}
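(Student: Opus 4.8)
The plan is to reduce the entire statement to the ordering of canonical spectral measures established in Theorem~\ref{thm:3.8}, using Theorem~\ref{theo:3.4} as the bridge. First I would note that $X$, $Y$, and $Z$ all take values in $\Rplusd$ and share the same marginal law $F$. Identical margins make the balanced tails condition~\eqref{eq:3.15} automatic, since every ratio of marginal tails is constantly $1$. For the same reason $\absfl{X\compone}$, $\absfl{Y\compone}$, and $\absfl{Z\compone}$ all have the distribution $F$, so for any two vectors $U,W\in\{X,Y,Z\}$ the limsup of the tail ratio of $\absfl{U\compone}$ over $\absfl{W\compone}$ equals $1$, giving $\absfl{U\compone}\orderapl\absfl{W\compone}$ in both directions. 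This is precisely the hypothesis of Theorem~\ref{theo:3.4}(\ref{item:t4.2}), so for each relevant pair the relation $\orderapl$ between the vectors is equivalent to $\orderGcalalpha$ between their canonical spectral measures.

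The key step is then to identify $\PsiastY$ and $\Psiast_Z$ with the extremal measures $\Psiast_0$ and $\Psiast_1$ of~\eqref{eq:apl.8} and~\eqref{eq:apl.9}. For $Y$, independence of identically distributed regularly varying margins forces the exponent measure onto the coordinate axes: a simultaneous exceedance in two distinct components has a tail of order $t\powminusalpha\cdot t\powminusalpha$, strictly smaller than the single-component order $t\powminusalpha$, so off-axis sets bounded away from $0$ carry no mass. Equal margins distribute this mass equally, and after the canonical transformation~\eqref{eq:3.7} together with the marginal normalization one obtains $\PsiastY=\Psiast_0=\sumioned\Dirac{\ei}$. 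For $Z$, the identity $Z\compi=Z\compone$ confines the vector, hence its exponent measure, to the diagonal ray, so the angular part is a single atom at $(1/d,\ldots,1/d)\in\Simpd$; the canonical normalization $\Psiast g_{\ei,\alpha}=1$ then pins its total mass to $d$, yielding $\Psiast_Z=\Psiast_1=d\cdot\Dirac{(1/d,\ldots,1/d)}$.

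With these two identifications the conclusion follows at once from Theorem~\ref{thm:3.8}. For $\alpha\ge1$, part~(\ref{item:thm:3.8.a}) gives $\Psiast_0\orderGcalalpha\PsiastX\orderGcalalpha\Psiast_1$, i.e.\ $\PsiastY\orderGcalalpha\PsiastX\orderGcalalpha\Psiast_Z$; feeding the pairs $(Y,X)$ and $(X,Z)$ into the equivalence of Theorem~\ref{theo:3.4}(\ref{item:t4.2}) converts this into $Y\orderapl X\orderapl Z$. For $\alpha\in(0,1]$, part~(\ref{item:thm:3.8.b}) reverses the chain to $\Psiast_Z\orderGcalalpha\PsiastX\orderGcalalpha\PsiastY$, which translates into $Z\orderapl X\orderapl Y$.

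I expect the only real work to sit in the second paragraph, namely the rigorous verification that \iid regularly varying margins generate the canonical measure $\Psiast_0$ and that comonotonic margins generate $\Psiast_1$. The former hinges on the vanishing of off-axis exponent mass under independence, the latter on the degeneracy of the angular distribution; in both cases one must confirm that the canonical marginal normalization reproduces exactly the total masses $d$ prescribed in~\eqref{eq:apl.8} and~\eqref{eq:apl.9}. The remaining steps are a direct application of the two quoted theorems.
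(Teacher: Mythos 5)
Your proposal is correct and follows essentially the same route as the paper, which disposes of this corollary by remarking that it is a straightforward consequence of Theorem~\ref{theo:3.4} combined with Theorem~\ref{thm:3.8}: identical margins give the balanced tails condition and mutual componentwise $\orderapl$ for free, independence and comonotonicity yield the extremal canonical spectral measures $\Psiast_0$ and $\Psiast_1$, and the spectral ordering of Theorem~\ref{thm:3.8} then transfers to $\orderapl$. Your identification of the spectral measures of $Y$ and $Z$ (axes mass for independence, diagonal atom for total dependence) is exactly the standard fact the paper relies on implicitly, so there is no gap.
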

\par
\begin{remark}
The strict assumptions of Corollary~\ref{cor:3.10} are chosen for clearness and simplicity.
The independence of $Y\compi$ and the total dependence of $Z\compi$ 
are needed only in the tail region, i.e., it suffices for $Y$ and $Z$ 
to be multivariate regularly varying with canonical spectral measures 
$\Psiast_0$ and $\Psiast_1$, respectively.
Furthermore, the assumption of identically distributed margins 
can be replaced by equivalent tails:
\[
1=
\lim_{t\toinf}\frac{\Prob\cubrfl{Y\compi >t}}{\Prob\cubrfl{X\compi >t}}
=
\lim_{t\toinf}\frac{\Prob\cubrfl{Z\compi >t}}{\Prob\cubrfl{X\compi >t}}
,\quad i=1,\ldots,d
\ldotp
\]
Finally, the non-negativity of $X\compi$, $Y\compi$, and $Z\compi$  
is needed only in the asymptotic sense. The ordering results remain true 
if the spectral measure of $X$ is restricted to the unit simplex $\Simpd$.  
\end{remark}
\par
Combining Theorem~\ref{theo:3.4} with Theorem~\ref{theo:2.4}, one obtains an 
ordering result for the canonical spectral measures of multivariate regularly 
varying elliptical distributions. 
The notation $\Psiast=\Psiast(\alpha,C)$ is justified by the fact that 
spectral measures of elliptical distributions depend only on  the tail 
index $\alpha$ and the generalized covariance matrix $C$. 
An explicit representation of spectral densities for bivariate elliptical 
distributions was obtained by \citet{Hult/Lindskog:2002}. 
Alternative representations that are valid for all dimensions $d\ge2$ 
are given in \citet{Mainik:2010}, Lemma~2.8.
\par
\begin{proposition}
\label{prop:3.7}
Let $C$ and $D$ be $d$-dimensional covariance matrices satisfying 
\begin{equation}
\label{eq:3.25}
\Cii = \Dii > 0
,\quad
i = 1,\ldots,d,
\end{equation}
and 
\begin{equation}
\label{eq:3.26}
\forall \xi\in\Simpd 
\quad
\xi\tr C \xi \le \xi\tr D \xi
\ldotp
\end{equation}
Then 
\[
\forall \alpha>0 
\quad
\Psiast\robr{\alpha,C} \orderGcalalpha \Psiast\robr{\alpha,D}
\ldotp
\]
\end{proposition}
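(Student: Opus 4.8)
The plan is to realize the two abstract canonical spectral measures $\Psiast(\alpha,C)$ and $\Psiast(\alpha,D)$ as the canonical spectral measures of explicit multivariate regularly varying elliptical vectors, to establish $\orderapl$ between these vectors by means of Theorem~\ref{theo:2.4}, and then to convert $\orderapl$ back into $\orderGcalalpha$ through the equivalence in Theorem~\ref{theo:3.4}(\ref{item:t4.2}). In this way the present proposition becomes essentially a composition of the two cited theorems.

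First I would fix an arbitrary radial distribution $\FR$ whose radial factor $R$ is regularly varying with tail index $\alpha$, and set $X\disteq RA_CU$ and $Y\disteq RA_DU$, where $A_CA_C\tr=C$, $A_DA_D\tr=D$, $U\sim\unif(\Sdtwo)$, and the location parameter is $0$ in both cases (same $R$, same $U$). By \citet{Hult/Lindskog:2002} both $X$ and $Y$ are then multivariate regularly varying with tail index $\alpha$, and since the spectral measure of an elliptical law depends only on $(\alpha,C)$, their canonical spectral measures are precisely $\PsiastX=\Psiast(\alpha,C)$ and $\PsiastY=\Psiast(\alpha,D)$.

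The obstacle is that Theorem~\ref{theo:3.4} requires the balanced tails condition~\eqref{eq:3.15}, which need not hold here: the marginal tail of $\absfl{X\compi}$ scales like $(\Cii)^{\alpha/2}$, and the diagonal entries $\Cii=\Dii$ are only assumed equal to each other, not constant in $i$. I would remove this obstacle by componentwise rescaling. Put $w\compi:=(\Cii)^{-1/2}=(\Dii)^{-1/2}$ and $W:=\mathrm{diag}(w\compone,\ldots,w\compd)$, and consider $\widetilde X:=wX$, $\widetilde Y:=wY$. These are again elliptical, with generalized covariances $\widetilde C=WCW$ and $\widetilde D=WDW$, and now $\widetilde C_{i,i}=\widetilde D_{i,i}=1$ for every $i$, so $\widetilde X$ and $\widetilde Y$ have balanced tails. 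Crucially, the hypothesis~\eqref{eq:3.26} is homogeneous of degree two in $\xi$, hence it extends from $\Simpd$ to all of $\Rplusd$; therefore, for every $\xi\in\Simpd$, $\xi\tr\widetilde C\xi=(W\xi)\tr C(W\xi)\le(W\xi)\tr D(W\xi)=\xi\tr\widetilde D\xi$, because $W\xi\in\Rplusd$. Thus $\widetilde C$ and $\widetilde D$ satisfy condition~\eqref{eq:2.15a}.

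With this in hand the two theorems slot together. Applying Theorem~\ref{theo:2.4} to $\widetilde X$ and $\widetilde Y$ — with $\mu_1=\mu_2=0$, the common radial factor giving $R\orderapl R$ trivially, and~\eqref{eq:2.15a} just verified — yields $\widetilde X\orderapl\widetilde Y$. Since $\widetilde C_{1,1}=\widetilde D_{1,1}=1$ and both vectors share the same $R$, the first margins coincide in law, so $\absfl{\widetilde X\compone}\orderapl\absfl{\widetilde Y\compone}$ and conversely; hence the equivalence in Theorem~\ref{theo:3.4}(\ref{item:t4.2}) applies and turns $\widetilde X\orderapl\widetilde Y$ into $\Psiast_{\widetilde X}\orderGcalalpha\Psiast_{\widetilde Y}$. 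Finally, invoking the invariance of canonical spectral measures under componentwise rescaling (as used in Proposition~\ref{prop:3.3}(\ref{item:L39.2})), one has $\Psiast_{\widetilde X}=\PsiastX=\Psiast(\alpha,C)$ and $\Psiast_{\widetilde Y}=\PsiastY=\Psiast(\alpha,D)$, which is the assertion. The only genuinely delicate points are the choice of rescaling that enforces balanced tails and the verification that~\eqref{eq:3.26} survives it; once these are settled the rest is a direct application of Theorems~\ref{theo:2.4} and~\ref{theo:3.4}.
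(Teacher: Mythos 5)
Your proposal is correct and takes essentially the same route as the paper: realize $\Psiast(\alpha,C)$ and $\Psiast(\alpha,D)$ as canonical spectral measures of elliptical vectors sharing a regularly varying radial factor $R$, obtain $\orderapl$ from Theorem~\ref{theo:2.4}, normalize margins by $w\compi=\Cii^{-1/2}$ to enforce balanced tails with identically distributed first components, and convert back to $\orderGcalalpha$ via Theorem~\ref{theo:3.4}(\ref{item:t4.2}) together with the rescaling invariance of canonical spectral measures. The only (immaterial) difference is the order of operations: the paper applies Theorem~\ref{theo:2.4} to the unscaled vectors and then invokes the invariance of $\orderapl$ under componentwise rescaling, whereas you rescale first and instead verify, via the degree-two homogeneity of~\eqref{eq:3.26} and $W\xi\in\Rplusd$, that condition~\eqref{eq:2.15a} survives the rescaling.
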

\par
\begin{myproof}
Fix $\alpha\in(0,\infty)$ and consider random vectors 
\[
X\disteq R A U
,\quad
Y\disteq R  B  U, 
\]
where $A$ and $B$ are square roots of the matrices $C$ and $D$ 
in~\eqref{eq:3.26}, i.e., 
\[ 
C=A  A\tr
,\quad 
D=B B\tr,
\]
and $R$ is an arbitrary regularly varying non-negative 
random variable with tail index $\alpha$.
\par
As a consequence of Theorem \ref{theo:2.4} one obtains $X \orderapl Y$. 
Furthermore, invariance of $\orderapl$ under componentwise rescaling 
yields $wX \orderapl wY$ for $w=\robr{w\compone,\ldots,w\compd}$ with   
\[
w\compi:={\Cii}^{-1/2}={\Dii}^{-1/2}
,\quad 
i=1,\ldots,d
\ldotp
\]
Moreover, as a particular consequence of arguments 
underlying \eqref{eq:2.17}, one obtains 
\[
w\compi X\compi \disteq w\compj Y\compj
,\quad
i,j\in\cubr{1,\ldots,d}
\ldotp
\]
Hence the random vectors $wX$ and $wY$ satisfy the balanced tails condition \eqref{eq:3.15}, whereas their components are mutually ordered with respect to $\orderapl$.
Finally, Theorem~\ref{theo:3.4}(\ref{item:t4.2}) and invariance of canonical spectral measures under componentwise rescalings yield
\begin{equation*}
\Psiast(\alpha,C) = \Psiast_{wX} 
\orderGcalalpha 
\Psiast_{wY} = \Psiast(\alpha,D)
\ldotp
%
\qed
%
\end{equation*}
\end{myproof}
\par
The subsequent result extends Theorem~\ref{theo:3.4} to random vectors that do not have balanced tails.
\par
\begin{theorem}
\label{theo:3.8}
Let $X$ and $Y$ be multivariate regularly varying random vectors on $\Rd$ 
with tail index $\alpha\in(0,\infty)$  
and canonical spectral measures $\PsiastX$ and $\PsiastY$. 
Further, assume that $\abs{X\compi} \orderapl \abs{Y\compi}$ with 
\begin{equation}
\label{eq:3.27}
\lambda_i
:=
\lim_{t\toinf}
\frac{\Prob\cubrfl{\absfl{X\compi}>t}}{\Prob\cubrfl{\absfl{Y\compi}>t}} 
\in(0,1]
\end{equation}
for $i=1,\ldots,d$  and that the vector $v=\robr{v\compone,\ldots,v\compd}$ defined by
\begin{equation}
\label{eq:3.28}
v\compi:=\lambda_i^{-1/\alpha}
\end{equation} 
satisfies 
\begin{equation}
\label{eq:3.29}
X \orderapl v X
\quad\text{or}\quad
v\inv Y \orderapl Y
\ldotp
\end{equation}
Then $\PsiastX \orderGcalalpha \PsiastY$ implies $X \orderapl Y$.
\end{theorem}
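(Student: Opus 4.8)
The plan is to reduce the claim to the balanced-tails case settled in Theorem~\ref{theo:3.4} by aligning the marginal tails through a componentwise rescaling, and then to transport the resulting order back to $X$ and $Y$ via the chaining assumption~\eqref{eq:3.29}. I would treat the two alternatives in~\eqref{eq:3.29} separately; they are symmetric, so I describe the case $X\orderapl vX$. First note that $\lambda_i\in(0,1]$ forces each $\absfl{X\compi}$ and $\absfl{Y\compi}$ to be univariate regularly varying of index $\alpha$, so $X$ and $Y$ both satisfy the non-degeneracy condition~\eqref{eq:4} and Proposition~\ref{prop:3.3}(\ref{item:L39.1}) is available for each.

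The crucial observation is that $vX$ matches $Y$ tailwise in every coordinate. Indeed, regular variation of $\absfl{X\compi}$ and the choice $v\compi=\lambda_i\powminusonebyalpha$ give
\[
\lim_{t\toinf}\frac{\Prob\cubrfl{\absfl{v\compi X\compi}>t}}{\Prob\cubrfl{\absfl{Y\compi}>t}}
=\robrfl{v\compi}\powalpha\lambda_i
=1
,\quad i=1,\ldots,d
\ldotp
\]
Using Proposition~\ref{prop:3.3}(\ref{item:L39.1}) I would pick $w\in(0,\infty)^d$ for which $wY$ has balanced tails. Since $v\compi X\compi$ and $Y\compi$ are tail-equivalent, so are $w\compi v\compi X\compi$ and $w\compi Y\compi$; consequently $wvX$ inherits the balanced tails of $wY$. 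Moreover, invariance of canonical spectral measures under componentwise rescaling yields $\Psiast_{wvX}=\PsiastX$ and $\Psiast_{wY}=\PsiastY$.

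Now Theorem~\ref{theo:3.4}(\ref{item:t4.1}) applies to the balanced pair $wvX$, $wY$: their first coordinates are tail-equivalent, hence $\absfl{w\compone v\compone X\compone}\orderapl\absfl{w\compone Y\compone}$, while $\PsiastX\orderGcalalpha\PsiastY$ is precisely $\Psiast_{wvX}\orderGcalalpha\Psiast_{wY}$. This gives $wvX\orderapl wY$, and applying the rescaling invariance of Remark~\ref{rem:2.1}(\ref{item:rem:2.1.b}) with weight $w\inv$ removes $w$, so that $vX\orderapl Y$. Together with the assumption $X\orderapl vX$ this yields $X\orderapl Y$; the symmetric alternative matches $X$ with $v\inv Y$ and produces $X\orderapl v\inv Y\orderapl Y$. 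The point deserving the most care is the final chaining: $\orderapl$ is not a priori transitive because~\eqref{eq:2.1} involves a $\limsup$, but here $X$, $vX$ and $Y$ are all multivariate regularly varying with the same index $\alpha$, so by~\eqref{eq:3.2} every portfolio-loss tail ratio is a genuine limit (a quotient of extreme risk indices). I would record this regular-variation argument explicitly to license composing the two $\orderapl$ relations, since it is exactly what makes~\eqref{eq:3.29} do its job.
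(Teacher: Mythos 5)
Your proof is correct and takes essentially the same route as the paper's: pick $w$ via Proposition~\ref{prop:3.3}(\ref{item:L39.1}) so that $wY$ has balanced tails, verify by the same tail-equivalence computation that $vwX$ is then also balanced, use invariance of canonical spectral measures to identify $\Psiast_{vwX}=\PsiastX$ and $\Psiast_{wY}=\PsiastY$, apply Theorem~\ref{theo:3.4}(\ref{item:t4.1}) to get $vwX\orderapl wY$, and finally chain through assumption~\eqref{eq:3.29}. The only (welcome) addition is your explicit justification of why the two $\orderapl$ relations may be composed --- namely that under multivariate regular variation with a common index all portfolio tail ratios are genuine limits --- a point the paper's proof passes over silently when it combines~\eqref{eq:3.30}--\eqref{eq:3.32}.
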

\par
\begin{myproof}
According to Proposition~\ref{prop:3.3}(\ref{item:L39.1}), there exists $w\in\Rplusd$ 
such that $wY$ satisfies the balanced tails condition~\eqref{eq:3.15}. 
Furthermore, the tails of the random vector 
\[
vwX:=\robrfl{v\compone w\compone X\compone ,\ldots, v\compd w\compd X\compd}
\]
with $v$ defined in~\eqref{eq:3.27} are also balanced. Indeed, it is easy to see that 
\[
\lim_{t\toinf}
\frac{\Prob\cubrfl{\absfl{w\compi Y\compi} >t}}{\Prob\cubrfl{\absfl{Y\compi} >t}} 
=
\lim_{t\toinf}
\frac{\Prob\cubrfl{\absfl{v\compi w\compi X\compi} >t}}{\Prob\cubrfl{\absfl{v\compi X\compi} >t}} 
=
\robrfl{w\compi}\powalpha
\]
for $i=1,\ldots,d$. Analogously one obtains 
\[
\lim_{t\toinf}
\frac{\Prob\cubrfl{\absfl{v\compi X\compi} >t}}{\Prob\cubrfl{\absfl{X\compi} >t}} 
=
\robrfl{v\compi}\powalpha = \lambda_i\inv
\]
and, as a result,
\begin{align*}
\mylefteqn
\lim_{t\toinf} 
\frac{\Prob\cubrfl{\absfl{v\compi w\compi X\compi} >t}}{\Prob\cubrfl{\absfl{w\compi Y\compi} > t}}\\
&=
\lim_{t\toinf} 
\frac{\Prob\cubrfl{\absfl{v\compi X\compi} >t}}{\Prob\cubrfl{\absfl{Y\compi} > t}}\\
&=
\lim_{t\toinf}\robrfl{
\frac{\Prob\cubrfl{\absfl{v\compi X\compi} >t}}{\Prob\cubrfl{\absfl{X\compi} > t}}
\cdot
\frac{\Prob\cubrfl{\absfl{X\compi} > t}}{\Prob\cubrfl{\absfl{Y\compi} > t}}
}\\
&=
\lambda_i\inv 
\cdot
\lim_{t\toinf}\frac{\Prob\cubrfl{\absfl{X\compi} > t}}{\Prob\cubrfl{\absfl{Y\compi} > t}}\\
&=
1
\end{align*}
for $i=1,\ldots,d$. Hence the balanced tails condition for $wY$ implies that the tails of $vwX$ are also balanced.
\par
Furthermore, invariance of canonical spectral measures under 
componentwise rescaling yields
\[
\Psiast_{vwX} = \PsiastX \orderGcalalpha \PsiastY = \Psiast_{wY}
\ldotp
\]
Thus, applying Theorem~\ref{theo:3.4}(\ref{item:t4.1}), one obtains
\begin{equation}
\label{eq:3.30}
vwX \orderapl wY
\ldotp
\end{equation}
Since $v\compi=\lambda_i\powminusonebyalpha > 0$ for $i=1,\ldots,d$, 
condition~\eqref{eq:3.30} is equivalent to 
\begin{equation}
\label{eq:3.31}
wX \orderapl v\inv wY
\ldotp
\end{equation}
Moreover, assumption~\eqref{eq:3.29} implies
\begin{equation}
\label{eq:3.32}
wX \orderapl vwX
\quad\text{or}\quad 
v^{-1}wY\orderapl wY.
\end{equation}
%
%
Combining this ordering statement 
with \eqref{eq:3.30} and~\eqref{eq:3.31}, 
one obtains 
\[
wX \orderapl wY
\ldotp
\]
Finally, 
invariance of $\orderapl$ with respect to componentwise rescaling yields 
$X \orderapl Y$.
\end{myproof}
\par
In the special case of random vectors in $\Rplusd$ Theorem~\ref{theo:3.8} 
can be simplified to the following result. 
\par
\begin{corollary}
\label{cor:3.9}  
Let $X$ and $Y$ be multivariate regularly varying random vectors on $\Rplusd$ 
with tail index $\alpha\in(0,\infty)$ and canonical 
spectral measures $\PsiastX$ and $\PsiastY$. 
Further, suppose that 
\begin{equation}
\label{eq:3.33}
\lambda_i
:=
\limsup_{t\toinf}
\frac{\Prob\cubrfl{\absfl{X\compi}>t}}{\Prob\cubrfl{\absfl{Y\compi}>t}} \in (0,1], 
\quad
i = 1,\ldots,d
\ldotp
\end{equation}
Then $\PsiastX \orderGcalalpha \PsiastY$ implies $X \orderapl Y$.
\end{corollary}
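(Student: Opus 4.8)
The plan is to deduce Corollary~\ref{cor:3.9} from Theorem~\ref{theo:3.8}, observing that in the non-negative case the marginal hypotheses and the auxiliary condition~\eqref{eq:3.29} are supplied for free, with the only real work lying in the passage from $\lim$ to $\limsup$. First I would record the marginal orderings. Since $X$ and $Y$ are multivariate regularly varying with canonical spectral measures, the non-degeneracy condition holds and each $\absfl{X\compi}$, $\absfl{Y\compi}$ is regularly varying with index $\alpha$; for such tails the closed and open tail events are asymptotically equivalent, so $\lambda_i\le 1$ in~\eqref{eq:3.33} is precisely $\absfl{X\compi}\orderapl\absfl{Y\compi}$, while $\lambda_i>0$ keeps the $i$-th margins comparable and makes $v\compi:=\lambda_i\powminusonebyalpha$ finite. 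These are exactly the marginal requirements of Theorem~\ref{theo:3.8}.

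Next I would check that condition~\eqref{eq:3.29} holds automatically on $\Rplusd$. With $v\compi:=\lambda_i\powminusonebyalpha$ as in~\eqref{eq:3.28}, the bound $\lambda_i\in(0,1]$ forces $v\compi\ge 1$, so the componentwise product satisfies $vX\ge X$ pointwise. For every $\xi\in\Simpd$ this gives the inclusion $\cubrfl{\xi\tr X>t}\subseteq\cubrfl{\xi\tr(vX)>t}$, hence $\xi\tr X\orderst\xi\tr(vX)$; by~\eqref{eq:2.2} we obtain $X\orderapl vX$, the first alternative in~\eqref{eq:3.29}. Thus the one hypothesis of Theorem~\ref{theo:3.8} that is not immediate comes for free in the non-negative setting.

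The main obstacle I anticipate is that~\eqref{eq:3.33} defines $\lambda_i$ by $\limsup$, whereas Theorem~\ref{theo:3.8} is stated with $\lim$. The resolution I would pursue is to note that, writing $\Prob\cubrfl{\absfl{X\compi}>t}\sim(\gammaei(X)+\gammaminusei(X))\Prob\cubrfl{\norm{X}_1>t}$ and similarly for $Y$, each $\lambda_i$ equals the ratio $(\gammaei(X)+\gammaminusei(X))/(\gammaei(Y)+\gammaminusei(Y))$ multiplied by the single $i$-independent quantity $\limsup_{t\toinf}\Prob\cubrfl{\norm{X}_1>t}/\Prob\cubrfl{\norm{Y}_1>t}$. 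Consequently, after choosing $w$ to balance $Y$ via Proposition~\ref{prop:3.3}(\ref{item:L39.1}), the rescaled vector $vwX$ is genuinely balanced — the potentially oscillating factor is $i$-independent and cancels in the within-vector tail ratios — and one has $\absfl{v\compone w\compone X\compone}\orderapl\absfl{w\compone Y\compone}$. Theorem~\ref{theo:3.4}(\ref{item:t4.1}) then yields $vwX\orderapl wY$, and combining this with $X\orderapl vX$ and the invariance of $\orderapl$ under componentwise rescaling, exactly as in the proof of Theorem~\ref{theo:3.8}, gives $X\orderapl Y$.
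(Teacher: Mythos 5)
Your proposal is correct, and its skeleton coincides with the paper's own proof: the paper likewise reduces Corollary~\ref{cor:3.9} to Theorem~\ref{theo:3.8} by observing that $\lambda_i\in(0,1]$ forces $v\compi=\lambda_i\powminusonebyalpha\ge1$, hence $v-(1,\ldots,1)\in\Rplusd$ and, since $X$ takes values in $\Rplusd$, $X\orderapl X+\robr{v-(1,\ldots,1)}X=vX$ (the paper also records $v\inv Y\orderapl Y$, though, as you note, one alternative of~\eqref{eq:3.29} suffices). Where you go beyond the paper is the $\limsup$ versus $\lim$ issue, and this is a genuine point, not pedantry: hypothesis~\eqref{eq:3.33} posits only a $\limsup$, whereas Theorem~\ref{theo:3.8} as stated requires the limits in~\eqref{eq:3.27} to exist, and for two distinct regularly varying vectors with the same index the cross-vector ratio of marginal tails need not converge, since the slowly varying factors of $X$ and $Y$ are unrelated. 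The paper's one-line proof silently passes over this mismatch; your factorization
\[
\lambda_i
=
\frac{\gammaei(X)+\gammaminusei(X)}{\gammaei(Y)+\gammaminusei(Y)}
\cdot
\limsup_{t\toinf}\frac{\Prob\cubrfl{\norm{X}_1>t}}{\Prob\cubrfl{\norm{Y}_1>t}}
\]
repairs it: the $i$-dependent factor is an honest limit, while the possibly oscillating factor is $i$-independent and cancels in the within-vector tail ratios, so $vwX$ is genuinely balanced, and the marginal comparison $\absfl{(vwX)\compone}\orderapl\absfl{(wY)\compone}$ holds with $\limsup$ equal to $1$. Theorem~\ref{theo:3.4}(\ref{item:t4.1}), whose marginal hypothesis is itself only a $\limsup$ condition, then applies, and the final chaining $wX\orderapl vwX\orderapl wY$ plus rescaling invariance is exactly the paper's closing step in Theorem~\ref{theo:3.8}. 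In short, your argument establishes the corollary as literally stated, which the paper's bare citation of Theorem~\ref{theo:3.8} does not quite deliver; the only implicit assumption you share with the paper is non-degeneracy, which both of you read into the existence of the canonical spectral measures.
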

\par
\begin{myproof}
Assumption~\eqref{eq:3.33} yields that the rescaling vector $v$ 
defined in \eqref{eq:3.28} is an element of $[1,\infty)^d$. 
Thus $v-(1,\ldots,1)\in\Rplusd$ and, since $X$ takes values in $\Rplusd$, 
we have 
\[
X \orderapl X + \robr{v - (1,\ldots,1)}X = vX
\ldotp
\]
Similar arguments yield $v\inv Y\orderapl Y$. 
Hence condition~\eqref{eq:3.29} of Theorem~\ref{theo:3.8} is satisfied.
\end{myproof}
\par
The final result of this section is due to the indifference of 
$\orderGcalalpha$ for $\alpha=1$ mentioned  in 
Remark~\ref{rem:3.1}(\ref{item:r14.1}).
This special property of spectral measures on $\Simpd$ allows to reduce 
$\orderapl$ to the ordering of components. 
It should be noted that this result cannot be extended to the 
general case of spectral measures on $\Sdone$. 
\par
\begin{lemma}
\label{lem:3.10}
Let $X$ and $Y$ be multivariate regularly varying on $\Rplusd$ with tail 
index $\alpha=1$. 
Further, suppose that $Y$ satisfies the non-degeneracy 
condition~\eqref{eq:4} and that 
$X\compi \orderapl Y\compi$ for $i=1,\ldots,d$. Then $X \orderapl Y$.   
\end{lemma}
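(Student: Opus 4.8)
The plan is to fix an arbitrary portfolio $\xi\in\Simpd$ and show directly that $\limsup_{t\toinf}\Prob\cubrfl{\xi\tr X>t}/\Prob\cubrfl{\xi\tr Y\ge t}\le 1$. Since $\gammaxi(Y)>0$ will force $\xi\tr Y$ to be univariate regularly varying with index $1$, the denominator may be replaced by $\Prob\cubrfl{\xi\tr Y>t}$ (the two differ only by atoms, which are asymptotically negligible for a regularly varying tail). The decisive structural input is that for $\alpha=1$ on $\Rplusd$ the extreme risk index is \emph{linear} in the portfolio,
\[
\gammaxi(Z)=\sumioned\xi\compi\,\gammaei(Z)
,\quad Z\in\cubr{X,Y},
\]
which is exactly the analytic content of the indifference of $\orderGcalalpha$ recorded in Remark~\ref{rem:3.1}(\ref{item:r14.1}). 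The non-degeneracy of $Y$ gives $\gammaei(Y)>0$ for every $i$, hence $\gammaxi(Y)>0$ for all $\xi\in\Simpd$, whereas multivariate regular variation of $X$ on $\Rplusd$ forces $\gammaei(X)>0$ for at least one $i$, since otherwise $\nu_X$ would vanish on $\cubr{\norm{x}_\infty>1}$ and be trivial.

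First I would split the target ratio into three factors,
\[
\frac{\Prob\cubrfl{\xi\tr X>t}}{\Prob\cubrfl{\xi\tr Y>t}}
=
\frac{\Prob\cubrfl{\xi\tr X>t}}{\Prob\cubrfl{\norm{X}_1>t}}
\cdot
\frac{\Prob\cubrfl{\norm{Y}_1>t}}{\Prob\cubrfl{\xi\tr Y>t}}
\cdot
\frac{\Prob\cubrfl{\norm{X}_1>t}}{\Prob\cubrfl{\norm{Y}_1>t}}
\ldotp
\]
By~\eqref{eq:3.2} the first two factors converge to $\gammaxi(X)$ and $1/\gammaxi(Y)$; writing $c(t)$ for the last factor and $L:=\limsup_{t\toinf}c(t)$, this yields
\[
\limsup_{t\toinf}
\frac{\Prob\cubrfl{\xi\tr X>t}}{\Prob\cubrfl{\xi\tr Y>t}}
=
\frac{\gammaxi(X)}{\gammaxi(Y)}\,L
=
\frac{\sumioned\xi\compi\gammaei(X)}{\sumioned\xi\compi\gammaei(Y)}\,L
\ldotp
\]

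The second step is to extract componentwise bounds from the hypothesis. For each index $i$ with $\gammaei(X)>0$, the same factorisation applied to the margins together with $X\compi\orderapl Y\compi$ gives $\gammaei(X)\,L\le\gammaei(Y)$; in particular $L<\infty$, because at least one $\gammaei(X)$ is positive. For indices with $\gammaei(X)=0$ the inequality $\gammaei(X)\,L\le\gammaei(Y)$ holds trivially. Multiplying each inequality by $\xi\compi\ge 0$ and summing over $i$ produces $\robrfl{\sumioned\xi\compi\gammaei(X)}L\le\sumioned\xi\compi\gammaei(Y)$, so the displayed $\limsup$ is at most $1$; as $\xi\in\Simpd$ was arbitrary, $X\orderapl Y$ follows.

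The hard part will be the normalisation factor $c(t)$, a quotient of two regularly varying tails of index $1$ that need not converge. The key observation making the proof work is that its single $\limsup$ value $L$ is shared by all components, so the per-component estimates $\gammaei(X)\,L\le\gammaei(Y)$ can be aggregated by linearity — and this is precisely why the statement is confined to $\alpha=1$. A related subtlety I would flag is that some components of $X$ may be asymptotically negligible, i.e.\ $\gammaei(X)=0$ (equivalently $\lambda_i=0$), so that Corollary~\ref{cor:3.9} cannot be invoked directly; the argument above absorbs such components automatically, since they contribute nothing to $\gammaxi(X)$ while $\gammaxi(Y)$ stays strictly positive.
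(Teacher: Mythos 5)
Your proof is correct, and its route differs genuinely from the paper's. The paper starts with a reduction: by Proposition~\ref{prop:3.3}(\ref{item:L39.1}) and rescaling invariance of $\orderapl$ it assumes without loss of generality that $Y$ has balanced tails, so that $\gammaeone(Y)=\ldots=\gammaed(Y)$ and $\gammaxi(Y)=\gammaeone(Y)$; it then normalizes the tail ratio by the single components $X\compj$ and $Y\compone$, where $j$ maximizes $\gammaei(X)$, and concludes from $\gammaxi(X)\le\gammaej(X)$ together with $\lambda_j\le 1$. You skip that reduction entirely: you normalize both models by their $\norm{\cdot}_1$-tails, work with the possibly non-convergent cross-model ratio through its limit superior $L$, extract the componentwise inequalities $\gammaei(X)\,L\le\gammaei(Y)$ from the hypothesis (each component compared against its own $\gammaei(Y)$, which is precisely why no balancing is required), and aggregate them with the weights $\xi\compi$ using linearity of $\xi\mapsto\gammaxi$ at $\alpha=1$. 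Both proofs hinge on this same linearity (Remark~\ref{rem:3.1}(\ref{item:r14.1})) and on the three-factor splitting of tail ratios, but yours is more self-contained --- no appeal to Proposition~\ref{prop:3.3}(\ref{item:L39.1}) or to rescaling invariance --- whereas the paper's balanced-tails normalization buys a shorter final computation in which the single scalar $\lambda_j$ is bounded by $1$ directly from the hypothesis. You also treat explicitly two degeneracies that the paper's argmax bookkeeping absorbs silently: components with $\gammaei(X)=0$ (where, as you note, Corollary~\ref{cor:3.9} is not applicable because the corresponding $\lambda_i$ vanishes), and the finiteness of $L$, which indeed must be secured from some $\gammaei(X)>0$ before the product formula for the limit superior can be invoked in the case $\gammaxi(X)=0$; your ordering of steps (finiteness first, then aggregation) handles both soundly.
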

\par
\begin{myproof}
According to Proposition \ref{prop:3.3}(\ref{item:L39.1}), 
there exists $w\in(0,\infty)^d$ such that $wY$ satisfies the balanced tails 
condition~\eqref{eq:3.15}. 
Furthermore, due to the invariance of $\orderapl$ under componentwise 
rescaling, $X \orderapl Y$ is equivalent to $wX \orderapl wY$.  
\par
Thus it can be assumed without loss of generality that $Y$ has balanced tails. 
This yields 
\[
\lambda_i
:=
\limsup_{t\toinf}\frac{\Prob\cubrfl{X\compi >t}}{\Prob\cubrfl{Y\compi >t}}
=
\limsup_{t\toinf}\frac{\Prob\cubrfl{X\compi >t}}{\Prob\cubrfl{Y\compone >t}}
,\quad
i=1,\ldots,d
\ldotp
\]
Hence the  assumption $X\compi \orderapl Y\compi$ for $i=1,\ldots,d$ 
implies $\lambda_i\in[0,1]$ for all $i$.
Moreover, the balanced tails condition for $Y$ yields
\begin{equation}
\label{eq:3.34}
\gammaeone(Y)=\ldots=\gammaed(Y)
\ldotp
\end{equation}
\par

Now consider the random vector $X$ and denote 
\[
j:=\argmax_{i\in\cubr{1,\ldots,d}} \gammaei(X)
\ldotp
\]
Recall that $\gammaei(X)=\nu_X\robr{\cubr{x\in\Rplusd: x\compi>1}}$ 
with $\nu_X$ 
denoting the exponent 
measure of $X$ and that $\nu_X$ is non-zero. This yields $\gammaej(X)>0$ 
even if $X$ does not satisfy the non-degeneracy condition~\eqref{eq:4}. 
Moreover, for $\alpha=1$, the mapping 
$\xi\mapsto\gammaxi(X)$ is linear. This implies
\begin{equation}
\label{eq:3.35}
\gammaxi(X)
=
\sumioned \xi\compi \cdot  \gammaei(X)
\le 
\gammaej(X)
,\quad 
\xi\in\Simpd
\end{equation}
%
and \eqref{eq:3.34} yields
\begin{equation}
\label{eq:3.36}
\gammaxi(Y) = \sumioned \xi\compi \cdot \gammaei(Y) = \gammaeone(Y)
,\quad
\xi\in\Simpd
\ldotp
\end{equation}
Hence
\begin{align*}
\mylefteqn
\limsup_{t\toinf}
\frac{\Prob\cubrfl{\xi\tr X > t}}{\Prob\cubrfl{\xi\tr Y >t}}\\
&=
\limsup_{t\toinf}\robrfl{
\frac{\Prob\cubrfl{\xi\tr X > t}}{\Prob\cubrfl{X\compj > t}}
\cdot
\frac{\Prob\cubrfl{X\compj > t}}{\Prob\cubrfl{Y\compone > t}} 
\cdot
\frac{\Prob{\cubrfl{Y\compone > t}}}{\Prob\cubrfl{\xi\tr Y >t}}
}\\
&=
\frac{\gammaxi(X)}{\gammaej(X)}
\cdot
\lambda_j
\cdot
\frac{\gammaeone(Y)}{\gammaxi(Y)} \\
&\le
1
\end{align*}
due to $\lambda_j\le 1$, \eqref{eq:3.35}, and \eqref{eq:3.36}.
\end{myproof}
%
%
%
%
%
\section{Relations to convex and supermodular orders}\label{sec:4}
As mentioned in Remark~\ref{rem:apl.1}(\ref{item:apl.2}), 
dependence orders $\ordersm$, $\orderdcx$ 
and convexity orders $\ordercx$, $\ordericx$, $\orderplcx$ 
do not imply $\orderapl$ in general. 
However, it turns out that the relationship between $\orderapl$ and
the ordering of canonical spectral measures by $\orderGcalalpha$ allows 
to draw conclusions of this type  
in the special case of multivariate regularly varying models.  
The core result of this section is stated in Theorem~\ref{thm:5}. 
It entails 
a collection of sufficient 
criteria for $\orderapl$ in terms of convex and supermodular order relations,   
with particular interest paid to the 
inversion of diversification effects for $\alpha<1$.   
An application to copula based models is given in Proposition~\ref{prop:4.4}.
%
\par 
This approach was applied by \citet{Embrechts/Neslehova/Wuethrich:2009} 
to the ordering of risks for the portfolio vector 
$\xi=(1,\ldots,1)$ and for a specific family of multivariate 
regularly varying models with identically distributed, non-negative margins 
$X\compi$ (cf. Example~\ref{ex:2} in Section~\ref{sec:5}). 
\par
The next theorem is the core element of this section. 
It generalizes the arguments of \citet{Embrechts/Neslehova/Wuethrich:2009} 
to multivariate regularly varying random vectors in $\Rd$ with balanced 
tails and tail index $\alpha\ne 1$. 
The case $\alpha=1$ is not included for two reasons.
First, this case is partly trivial due to the indifference of $\orderGcalalpha$ 
for spectral measures on $\Simpd$ (cf.\ Remark~\ref{rem:3.1}(\ref{item:r14.1})).
Second, Karamata's theorem  used in the proof of the integrable case $\alpha>1$ does not yield the desired result for random variables with tail index $\alpha=1$. 
\par
\begin{theorem}
\label{thm:5}
Let $X$ and $Y$ be multivariate regularly varying on $\Rd$ with identical  
tail index $\alpha\ne 1$. Further, assume that $X$ and $Y$ satisfy 
the balanced tails condition~\eqref{eq:3.15}.  
\begin{enumerate}[(a)]
\item
\label{item:t5.1}
For $\alpha>1$ let
\begin{equation}
\label{eq:309}
\limsup_{t\toinf} 
\frac{\Prob\cubrfl{\absfl{X\compone}>t}}{\Prob\cubrfl{\absfl{Y\compone}>t}}
= 1
\end{equation}
and let there exist $u_0>0$ such that with $\hu(t):=\robr{t-u}\pospart$ 
\begin{equation}
\label{eq:292a}
\forall u \ge u_0 \ \forall \xi\in\Simpd 
\quad
\E  \hu\robrfl{\xi\tr X} 
\le 
\E  \hu\robrfl{\xi\tr Y}
\ldotp 
\end{equation}
Then $\PsiastX \orderGcalalpha \PsiastY$. 
\vspace{0.5em} 
\item
\label{item:t5.2}
For $\alpha<1$ suppose that $\abs{X\compone}$ and $\abs{Y\compone}$ are 
equivalent with respect to $\orderapl$, i.e.,
\begin{equation}
\label{eq:310a}
\absfl{X\compone} \orderapl \absfl{Y\compone}
\quad\text{and}\quad 
\absfl{Y\compone} \orderapl \absfl{X\compone}, 
\end{equation}
and let there exist $\uzer >0$ such that with $\fu(t):=-(t \minbin u)$, 
\begin{equation}
\label{eq:292b}
\forall u \ge \uzer \ \forall \xi\in\Simpd 
\quad
\E \fu\robrfl{\robrfl{\xi\tr X}\pospart} 
\le 
\E \fu \robrfl{\robrfl{\xi\tr Y}\pospart}
\ldotp 
\end{equation}
Then $\PsiastY \orderGcalalpha \PsiastX$. 
\end{enumerate}
\end{theorem}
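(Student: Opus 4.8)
The plan is to translate the integral inequalities \eqref{eq:292a} and \eqref{eq:292b} into asymptotic statements about the point tails $\Prob\cubrfl{\xi\tr X > t}$ and $\Prob\cubrfl{\xi\tr Y > t}$, and then to read off the ordering of the canonical spectral measures from the tail representation established in the proof of Theorem~\ref{theo:3.4}. The starting observation is that the test functions $\hu$ and $\fu$ encode integrated tails: for a real random variable $Z$ one has $\E\hu(Z)=\int_u^\infty \Prob\cubrfl{Z>s}\,ds$ and $\E\fu(Z\pospart)=-\int_0^u\Prob\cubrfl{Z>s}\,ds$. Thus \eqref{eq:292a} says that the upper integrated tail of $\xi\tr X$ is dominated by that of $\xi\tr Y$ beyond $u_0$, while \eqref{eq:292b} says that the truncated mean of $\robrfl{\xi\tr X}\pospart$ dominates that of $\robrfl{\xi\tr Y}\pospart$.

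The bridge between these integrated quantities and the point tails is Karamata's theorem. Fix $\xi\in\Simpd$ and abbreviate $a(t):=\Prob\cubrfl{\absfl{X\compone}>t}$ and $b(t):=\Prob\cubrfl{\absfl{Y\compone}>t}$. Whenever $\PsiastX\gxialpha>0$, the balanced-tails computation from Theorem~\ref{theo:3.4} gives $\Prob\cubrfl{\xi\tr X>t}\sim\PsiastX\gxialpha\cdot a(t)$, which is regularly varying with index $-\alpha$, and similarly for $Y$. For $\alpha>1$ Karamata's theorem yields $\int_t^\infty\Prob\cubrfl{\xi\tr X>s}\,ds\sim\frac{t}{\alpha-1}\PsiastX\gxialpha\cdot a(t)$, while for $\alpha<1$ it yields $\int_0^u\Prob\cubrfl{\xi\tr X>s}\,ds\sim\frac{u}{1-\alpha}\PsiastX\gxialpha\cdot a(u)$. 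Dividing the integrated-tail asymptotics for $X$ and $Y$, the regularly varying factor $\frac{t}{\alpha-1}$ (resp.\ $\frac{u}{1-\alpha}$) cancels, leaving the ratio $\frac{\PsiastX\gxialpha}{\PsiastY\gxialpha}\cdot\frac{a}{b}$.

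From here the two parts proceed symmetrically. In part~(a), writing $I_X(t):=\int_t^\infty\Prob\cubrfl{\xi\tr X>s}\,ds$ and $I_Y(t)$ analogously, the inequality \eqref{eq:292a} forces $\limsup_{t\toinf} I_X(t)/I_Y(t)\le1$; since $\frac{\PsiastX\gxialpha}{\PsiastY\gxialpha}$ is a positive constant it factors out of the limsup, and the hypothesis \eqref{eq:309} that $\limsup_{t\toinf}a(t)/b(t)=1$ leaves $\PsiastX\gxialpha\le\PsiastY\gxialpha$. As $\xi\in\Simpd$ is arbitrary, this is exactly $\PsiastX\orderGcalalpha\PsiastY$. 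In part~(b) the same manipulation applied to the truncated means $J_X(u):=\int_0^u\Prob\cubrfl{\xi\tr X>s}\,ds$ and $J_Y(u)$ gives $\limsup_{u\toinf}J_Y(u)/J_X(u)\le1$; here the two-sided equivalence \eqref{eq:310a} provides the genuine limit $a(u)/b(u)\to1$, and one obtains the reversed inequality $\PsiastY\gxialpha\le\PsiastX\gxialpha$, i.e.\ $\PsiastY\orderGcalalpha\PsiastX$. The sign flip is precisely the inversion of diversification effects: Karamata contributes the factor $(\alpha-1)\inv$ with the integral over $(t,\infty)$ for $\alpha>1$, but the factor $(1-\alpha)\inv$ with the integral over $(0,u)$ for $\alpha<1$, reversing which model carries the heavier accumulated tail.

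The step I expect to require the most care is the passage through Karamata's theorem for directions $\xi$ where $\xi\tr X$ or $\xi\tr Y$ fails to be genuinely regularly varying, i.e.\ where $\PsiastX\gxialpha$ or $\PsiastY\gxialpha$ vanishes. When $\PsiastX\gxialpha=0$ the desired inequality $\PsiastX\gxialpha\le\PsiastY\gxialpha$ is trivial because $\gxialpha\ge0$, so only the case $\PsiastX\gxialpha>0$ needs analysis; one then checks that the integral hypothesis itself excludes $\PsiastY\gxialpha=0$ (otherwise the dominating integrated tail of $X$ would violate \eqref{eq:292a}), keeping the ratio well defined. The remaining bookkeeping—verifying the elementary identities for $\E\hu$ and $\E\fu$, confirming the boundedness of $a/b$ needed to factor the constant through the limsup, and noting that the balanced tails condition~\eqref{eq:3.15} is what makes $a(t)$ and $b(t)$ the common marginal normalisers appearing in \eqref{eq:309} and \eqref{eq:310a}—is routine.
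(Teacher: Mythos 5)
Your proposal is correct. Part~(\ref{item:t5.1}) is essentially the paper's own proof: both turn $\E\hu\robr{\xi\tr X}$ into the integrated tail $\int_{(u,\infty)}\Prob\cubr{\xi\tr X>s}\,\dm s$, invoke the tail representation $\Prob\cubr{\xi\tr X>t}\sim\PsiastX\gxialpha\cdot\Prob\cubr{\absfl{X\compone}>t}$ coming from Proposition~\ref{prop:3.3}(\ref{item:L39.3}), apply Karamata's theorem, and factor the constant ratio through the $\limsup$ using~\eqref{eq:309}. In part~(\ref{item:t5.2}), however, you take a genuinely different route. You apply the other half of Karamata's theorem (regular variation index $-\alpha>-1$) to the truncated integrals, $\int_{(0,u)}\Prob\cubr{\xi\tr X>s}\,\dm s\sim\frac{u}{1-\alpha}\,\PsiastX\gxialpha\,\Prob\cubr{\absfl{X\compone}>u}$, and read the conclusion directly off~\eqref{eq:292b} together with the two-sided equivalence~\eqref{eq:310a}. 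The paper instead argues by contradiction and avoids Karamata for $\alpha<1$ altogether: if $\PsiastY\gxialpha>\PsiastX\gxialpha$ for some $\xi$, the tail difference normalized by $\Prob\cubr{\absfl{X\compone}>t}$ is eventually bounded above by $-\eps$, and since $\alpha<1$ forces $\E\absfl{X\compone}=\infty$, integrating this deficit over $(u,u+v)$ and letting $v\toinf$ drives the difference of truncated means below any constant, contradicting~\eqref{eq:292b}. Your version buys symmetry: the phase change at $\alpha=1$ appears transparently as the switch between the two halves of Karamata's theorem, with constants $(\alpha-1)\inv$ versus $(1-\alpha)\inv$. The paper's version is more economical for $\alpha<1$, using only non-integrability of the margin and never requiring $\xi\tr X$ itself to be regularly varying. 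That last point matters precisely in the degenerate directions where $\PsiastX\gxialpha$ or $\PsiastY\gxialpha$ vanishes and Karamata cannot be applied to the portfolio tail: you handle these explicitly (the trivial case is harmless, and the mixed case is excluded by the integral hypotheses themselves), which is a detail the paper's proof of part~(\ref{item:t5.1}) in fact glosses over as well, while its contradiction argument for part~(\ref{item:t5.2}) is immune to it since it only uses the existence, not the positivity, of the limits defining $\PsiastX\gxialpha$ and $\PsiastY\gxialpha$.
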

\par
The proof will be given after some conclusions and remarks. 
In particular, it should be noted that the relation between 
$\orderGcalalpha$ and $\orderapl$ established in Theorem~\ref{theo:3.4} 
immediately yields the following result.
\begin{corollary}
\label{cor:8}
\begin{enumerate}[(a)]  
\item 
\label{item:c8.1}%
If random vectors $X$ and $Y$ satisfy conditions of Theorem~\ref{thm:5}(\ref{item:t5.1}), 
then $X \orderapl Y$;
\item
\label{item:c8.2}%
If $X$ and $Y$ satisfy conditions of Theorem~\ref{thm:5}(\ref{item:t5.2}), 
then $Y \orderapl X$.
\end{enumerate} 
\end{corollary}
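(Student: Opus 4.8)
The plan is to obtain both parts by simply composing the two transfer results already at our disposal: Theorem~\ref{thm:5}, which converts the integral (stop-loss or truncation) inequalities on portfolio losses into an ordering $\orderGcalalpha$ of the canonical spectral measures, and Theorem~\ref{theo:3.4}(\ref{item:t4.1}), which converts such a spectral ordering back into $\orderapl$ provided the first margins are comparable in the univariate $\orderapl$ sense. The only genuine task is to check that the hypotheses assumed in Theorem~\ref{thm:5} already deliver the marginal $\orderapl$ relation required to apply Theorem~\ref{theo:3.4}, and to keep track of the direction of the (non-symmetric) relation $\orderGcalalpha$.

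For part~(\ref{item:c8.1}), I would first invoke Theorem~\ref{thm:5}(\ref{item:t5.1}) to obtain $\PsiastX \orderGcalalpha \PsiastY$. Next I would note that condition~\eqref{eq:309} gives $\limsup_{t\toinf} \Prob\cubrfl{\absfl{X\compone}>t}/\Prob\cubrfl{\absfl{Y\compone}>t} = 1$, so this limit superior is in particular $\le 1$; since in the multivariate regularly varying setting $\Prob\cubrfl{\absfl{Y\compone}\ge t}$ and $\Prob\cubrfl{\absfl{Y\compone}>t}$ are tail-equivalent, this is exactly $\absfl{X\compone}\orderapl\absfl{Y\compone}$. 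With both the marginal ordering and $\PsiastX\orderGcalalpha\PsiastY$ in hand, Theorem~\ref{theo:3.4}(\ref{item:t4.1}) yields $X \orderapl Y$.

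Part~(\ref{item:c8.2}) is the mirror image. Theorem~\ref{thm:5}(\ref{item:t5.2}) now produces the reversed spectral ordering $\PsiastY \orderGcalalpha \PsiastX$, while the equivalence hypothesis~\eqref{eq:310a} contains, among its two statements, the relation $\absfl{Y\compone}\orderapl\absfl{X\compone}$. I would then apply Theorem~\ref{theo:3.4}(\ref{item:t4.1}) with the roles of $X$ and $Y$ interchanged, i.e.\ to the pair $(Y,X)$, and conclude $Y \orderapl X$.

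The statement is essentially bookkeeping once Theorems~\ref{thm:5} and~\ref{theo:3.4} are available, so there is no real analytic obstacle. The one point demanding care is that $\orderGcalalpha$ is not symmetric, so the spectral-measure ordering and the marginal $\orderapl$ ordering must be aligned before invoking Theorem~\ref{theo:3.4}: in part~(\ref{item:c8.1}) both point from $X$ to $Y$, whereas in part~(\ref{item:c8.2}) both point from $Y$ to $X$. Verifying this matching of directions---rather than any estimate---is the only thing that could go wrong.
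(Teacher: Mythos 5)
Your proposal is correct and follows exactly the route the paper intends: the paper states this corollary as an immediate consequence of combining Theorem~\ref{thm:5} (integral conditions $\Rightarrow$ spectral ordering) with Theorem~\ref{theo:3.4}(\ref{item:t4.1}) (spectral ordering plus marginal $\orderapl$ $\Rightarrow$ multivariate $\orderapl$), with the roles of $X$ and $Y$ swapped in part~(b). Your extra care in extracting the marginal relation $\absfl{X\compone}\orderapl\absfl{Y\compone}$ from~\eqref{eq:309} via tail-equivalence of $\Prob\cubr{\cdot\ge t}$ and $\Prob\cubr{\cdot>t}$ under regular variation, and in checking the direction of the non-symmetric relation $\orderGcalalpha$, is exactly the bookkeeping the paper leaves implicit.
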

\par
It should also be noted that 
conditions \eqref{eq:292a} and \eqref{eq:292b} are asymptotic forms of 
the increasing convex ordering $\xi\tr X \ordericx \xi\tr Y$ 
and the decreasing convex ordering $\xi\tr X \orderdecx \xi\tr Y$, 
respectively. 
The consequences can be outlined as follows.  
\begin{remark}
\label{rem:12}
\begin{enumerate}[(a)]
\item 
The following criteria are sufficient for~\eqref{eq:292a} 
and \eqref{eq:292b} to hold: 
\begin{enumerate}[(i)]
\vspace{0.5em}
\item 
$\robr{\xi\tr X}\pospart \ordercx \robr{\xi\tr Y}\pospart$ 
for all $\xi\in\Simpd$,
\item
$X$ and $Y$ are restricted to $\Rplusd$ and 
$X \order Y$ with $\order$ denoting either 
$\orderplcx$, $\orderlcx$,  $\ordercx$, $\orderdcx$, or $\ordersm$.
\end{enumerate}
\vspace{0.5em} 
\item
Additionally, condition~\eqref{eq:292a} follows from  
$X \order Y$ with $\order$ denoting either 
$\orderplcx$, $\orderlcx$,  $\ordercx$, $\orderdcx$, or $\ordersm$.
\end{enumerate}
\end{remark}
Finally, a comment should be made upon   
convex ordering of non-integrable random variables and  
diversification for $\alpha<1$.
The so-called \emph{phase change} at $\alpha=1$, 
i.e., the inversion of diversification effects 
taking place when the tail index $\alpha$ crosses this critical value, 
demonstrates that the implications of convex ordering are essentially different 
for integrable and non-integrable random variables.
Indeed, it is easy to see that if a random variable $Z$ on $\R$ 
satisfies $\E \sqbr{Z\pospart} = \E \sqbr{Z\negpart}=\infty$,
then the only integrable convex functions of $Z$ are the constant ones.
Moreover, if $Z$ is restricted to $\Rplus$ and $\E Z =\infty$,
then any integrable convex function of $Z$ is necessarily non-increasing. 
%
%
\vspace{0.5em}
\par
\begin{myproofx}{of Theorem~\ref{thm:5}}(\ref{item:t5.1})
Consider the expectations in~\eqref{eq:292a}.
It is easy to see that for $u>0$ 
\begin{align*}
\oneby{u}\E \hu\robrfl{\xi\tr X} 
&=
\oneby{u}\int_{(u,\infty)}\Prob\cubrfl{\xi\tr X > t} \dm t\\
&=
\int_{(1,\infty)} \Prob\cubrfl{\xi\tr X > tu} \dm t
\end{align*}
and, as a consequence, 
\[
\frac{u\inv\E \hu\robrfl{\xi\tr X}}{\Prob\cubrfl{\absfl{X\compone}>u}}
=
\frac{\Prob\cubrfl{\xi\tr X > u}}{\Prob\cubrfl{\absfl{X\compone}>u}}
\int_{(1,\infty)} 
\frac{\Prob\cubrfl{\xi\tr X > tu}}{\Prob\cubrfl{\xi\tr X > u}}
\, \dm t
\ldotp
\]
Moreover, Proposition~\ref{prop:3.3}(\ref{item:L39.3}) implies
\begin{equation}
\label{eq:313}
\lim_{u\toinf} 
\frac{\Prob\cubrfl{\xi\tr X > u}}{\Prob\cubrfl{\absfl{X\compone}>u}}
=
\frac{\gammaxi(X)}{\gammaeone(X)+\gammaminuseone(X)} 
=
\PsiastX\gxialpha
\end{equation}
and Karamata's theorem 
\citep[cf.][Theorem B.1.5]{de_Haan/Ferreira:2006}  
yields
\[
\lim_{u\toinf} 
\int_{(1,\infty)} 
\frac{\Prob\cubrfl{\xi\tr X > tu}}{\Prob\cubrfl{\xi\tr X > u}}
\, \dm t
=
\int_{(1,\infty)}
t\powminusalpha \dm t
=
\oneby{\alpha-1}
\ldotp
\]
As a result one obtains
\begin{equation*}
\lim_{u\toinf}
\frac{u\inv \E \hu\robrfl{\xi\tr X}}{\Prob\cubrfl{\absfl{X\compone}>u}}
=
\oneby{\alpha-1}\PsiastX \gxialpha
\end{equation*}
and, analogously,
\begin{equation*}
\lim_{u\toinf}
\frac{u\inv\E \hu\robrfl{\xi\tr Y}}{\Prob\cubrfl{\absfl{Y\compone}>u}}
=
\oneby{\alpha-1}\PsiastY \gxialpha
\ldotp
\end{equation*}
Hence \eqref{eq:292a} and \eqref{eq:309} yield
\begin{align*}
1
&\ge
\limsup_{u\toinf}\frac{u\inv\E\hu\robrfl{\xi\tr X}}{u\inv\E \hu\robrfl{\xi\tr Y}}\\
&=
\limsup_{u\toinf}\robrfl{
\frac{u\inv \E\hu\robrfl{\xi\tr X}}{\Prob\cubrfl{\absfl{X\compone}>u}}
\cdot
\frac{\Prob\cubrfl{\absfl{Y\compone}>u}}{u\inv \E \hu\robrfl{\xi\tr Y}}
\cdot
\frac{\Prob\cubrfl{\absfl{X\compone}>u}}{\Prob\cubrfl{\absfl{Y\compone}>u}}
}\\
&=
\frac{\PsiastX\gxialpha}{\PsiastY\gxialpha}
\end{align*}
for all $\xi\in\Simpd$, which exactly means 
$\PsiastX \orderGcalalpha \PsiastY$. 
\par
\medskip
(\ref{item:t5.2}) 
Note that~\eqref{eq:310a} implies 
\begin{equation}
\label{eq:310}
\lim_{t\toinf} 
\frac{\Prob\cubrfl{\absfl{X\compone}>t}}{\Prob\cubrfl{\absfl{Y\compone}>t}}
= 1
\end{equation}
and that~\eqref{eq:292b} yields
\begin{equation}
\label{eq:311}
\forall u>\uzer \ \forall v\ge 0
\quad
\E \fuplusv\robrfl{\xi\tr X}  - \E \fuplusv\robrfl{\xi\tr Y}  \le 0
\ldotp
\end{equation}
Furthermore, it is easy to see that any random variable $Z$ in $\Rplus$ 
satisfies
\begin{align*}
\E\sqbrfl{Z\minbin u} 
&= 
\int_{(0,\infty)} \robrfl{t \minbin u} \dm \Prob^Z(t)\\
&=
\int_{(0,\infty)}\int_{(0,\infty)} 1\cubr{s<t} \cdot 1\cubr{s<u} \, \dm s \, \dm \Prob^Z(t)\\
&=
\int_{(0,\infty)}1\cubr{s<u} \int_{(0,\infty)} 1\cubr{s<t} \, \dm \Prob^Z(t) \, \dm s\\
&=
\int_{(0,u)}\Prob\cubrfl{Z>s} \dm s
\ldotp
\end{align*}
This implies
\[
\E \fuplusv(Z) = \E \fu(Z) - \int_{(u,u+v)} \Prob\cubrfl{Z>t} \dm t
\ldotp
\]
Consequently, \eqref{eq:311} yields
\begin{equation}
\label{eq:312}
\forall u\ge \uzer\ \forall v>0
\quad 
\E \fu\robrfl{\xi\tr X} - \E\fu\robrfl{\xi\tr Y} 
\le 
I(u,v)
\end{equation}
where 
\begin{align*}
I(u,v)
&:=
\int_{(u, u+v)}
\robrfl{\Prob\cubrfl{\xi\tr X >t } - \Prob\cubrfl{\xi\tr Y > t}} 
\, \dm t\\
&\phantom{:}=
\int_{(u,u+v)} 
\phi(t) \cdot \Prob\cubrfl{\absfl{X\compone}>t}
\,\dm t
\end{align*}
with  
\[
\phi(t)
:=
\frac{\Prob\cubrfl{\xi\tr X > t} - \Prob\cubrfl{\xi\tr Y > t}}
{\Prob\cubrfl{\absfl{X\compone}>t}}
\ldotp
\]
Moreover, \eqref{eq:310}, \eqref{eq:313}, and an 
analogue of~\eqref{eq:313} for $Y$ yield 
\begin{align}
\phi(t)
&=\nonumber
\frac{\Prob\cubrfl{\xi\tr X > t}}{\Prob\cubrfl{\absfl{X\compone}>t}}
-
\frac{\Prob\cubrfl{\xi\tr Y > t}}{\Prob\cubrfl{\absfl{Y\compone}>t}}
\cdot
\frac{\Prob\cubrfl{\absfl{Y\compone}>t}}{\Prob\cubrfl{\absfl{X\compone}>t}}\\
&\to\label{eq:314}
\PsiastX\gxialpha - \PsiastY\gxialpha
,\quad
t\toinf
\ldotp
\end{align}
\par
Now suppose that $\PsiastY \orderGcalalpha \PsiastX$ is not satisfied, 
i.e., there exists $\xi\in\Simpd$ such that 
$\PsiastY\gxialpha > \PsiastX\gxialpha$.
Then~\eqref{eq:314} yields $\phi(t) \le - \eps$ 
for some $\eps>0$ and sufficiently large $t$. 
This implies  
\begin{equation} 
\label{eq:315}
I(u,v)
\le 
-\eps 
\int_{(u, u+v)}
\Prob\cubrfl{\absfl{X\compone}> t} \, \dm t
\end{equation}
for sufficiently large $u$ and all $v\ge0$. 
Moreover, regular variation of $\absfl{X\compone}$ with tail index 
$\alpha<1$ implies $\E\absfl{X\compone}=\infty$. 
Consequently, the integral on the right side of~\eqref{eq:315} tends to 
infinity for $v\toinf$:
\[
\forall u>0
\quad
\lim_{v\toinf}
\int_{(u, u+v)}
\Prob\cubrfl{\absfl{X\compone}> t} \, \dm t
=
\infty
\ldotp
\]
Hence, choosing $u$ and $v$ sufficiently large, one can achieve 
$I(u,v)<c$ for any $c\in\R$. In particular, $u$ and $v$ can be 
chosen such that   
\[
I(u,v) < \E \fu\robrfl{\xi\tr X} - \E\fu\robrfl{\xi\tr Y}, 
\]
which contradicts~\eqref{eq:312}. Thus  
$\PsiastY\gxialpha > \PsiastX\gxialpha$ cannot be true 
and therefore it necessarily holds that $\PsiastY \orderGcalalpha \PsiastX$.
\end{myproofx}
\par
%
Now let us return to the ordering criterion in terms of the supermodular 
order $\ordersm$ stated in Remark~\ref{rem:12}. The invariance of $\ordersm$ 
under non-decreasing component transformations 
allows to transfer these criteria to copula models. 
Furthermore, since we are interested in the ordering of the asymptotic 
dependence structures represented by the canonical spectral measures, 
$\Psiastone$ and $\Psiasttwo$, we can take 
any copulas that yield $\Psiastone$ and $\Psiasttwo$ as asymptotic 
dependence structures. 
\par
A natural choice is given by the \emph{extreme value copulas}, defined 
as the copulas of \emph{simple max-stable distributions} corresponding to 
$\Psiast_i$, i.e., the distributions 
\begin{equation}\label{eq:apl.4}
\Gast_i(x):=\exp\robrfl{-\nuast_i\robrfl{-[\infty,x]\setcomp}}
,\quad
x\in\Rplusd
\end{equation}
where $\nuast_i$ is the canonical exponent associated with $\Psiast_i$ 
via~\eqref{eq:apl.3}. For further details on max-stable and simple max-stable 
distributions we refer to \citet{Resnick:1987}.
Since extreme value copulas and canonical spectral measures can be 
considered as  
alternative parametrizations of the same asymptotic dependence structures, 
we obtain the following result. 
%
\par
\begin{proposition}
\label{prop:4.4}
Let $\Psiastone$ and $\Psiasttwo$ be canonical spectral measures on $\Simpd$. 
Further, for $i=1,2$, let $C_i$ denote the copula of the 
simple max-stable distribution $\Gast_i$ induced by $\Psiast_i$ 
according to~\eqref{eq:apl.4} and~\eqref{eq:apl.3}. 
Then $C_1 \ordersm C_2$ implies
\begin{enumerate}[(a)]
\item
$\Psiastone \orderGcalalpha \Psiasttwo$ for $\alpha\in(1,\infty)$;
\vspace{0.5em} 
\item
$\Psiasttwo \orderGcalalpha \Psiastone$ for $\alpha\in(0,1)$.
\end{enumerate}
\end{proposition}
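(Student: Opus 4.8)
The plan is to realize the two prescribed dependence structures by concrete multivariate regularly varying vectors and then feed them into the chain Remark~\ref{rem:12} $\too$ Theorem~\ref{thm:5}. First I would fix $\alpha$ and construct random vectors $X$ and $Y$ on $\Rplusd$ whose margins are identically distributed and regularly varying with tail index $\alpha$, and whose copulas are $C_1$ and $C_2$, respectively. Since $C_i$ is by construction the copula of the simple max-stable law $\Gast_i$ attached to $\Psiast_i$ via~\eqref{eq:apl.4} and~\eqref{eq:apl.3}, endowing regularly varying margins with $C_i$ produces a vector whose asymptotic dependence structure is exactly $\Psiast_i$; that is, $X$ and $Y$ are multivariate regularly varying with canonical spectral measures $\Psiastone$ and $\Psiasttwo$. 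Because the two vectors share the same margins, they automatically satisfy the balanced tails condition~\eqref{eq:3.15}, relation~\eqref{eq:309}, and the $\orderapl$-equivalence~\eqref{eq:310a}.

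The second step transfers the copula ordering to the vectors. The supermodular order $\ordersm$ is invariant under non-decreasing componentwise transformations, so applying the common non-decreasing marginal quantile maps to $C_1 \ordersm C_2$ yields $X \ordersm Y$ for vectors in $\Rplusd$ carrying the prescribed margins. By the criterion (ii) in Remark~\ref{rem:12}(a), this supermodular ordering on $\Rplusd$ is sufficient for both asymptotic convex conditions~\eqref{eq:292a} and~\eqref{eq:292b}.

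It then remains to invoke Theorem~\ref{thm:5}. For $\alpha \in (1,\infty)$ the identical margins give~\eqref{eq:309} and the previous step gives~\eqref{eq:292a}, so Theorem~\ref{thm:5}(\ref{item:t5.1}) delivers $\PsiastX \orderGcalalpha \PsiastY$, that is, $\Psiastone \orderGcalalpha \Psiasttwo$, which is part~(a). For $\alpha \in (0,1)$ the identical margins give~\eqref{eq:310a} and the previous step gives~\eqref{eq:292b}, so Theorem~\ref{thm:5}(\ref{item:t5.2}) delivers $\PsiastY \orderGcalalpha \PsiastX$, that is, $\Psiasttwo \orderGcalalpha \Psiastone$, which is part~(b).

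The main obstacle I expect lies in the first step: making rigorous the claim that attaching regularly varying margins of index $\alpha$ to the extreme value copula $C_i$ yields a multivariate regularly varying vector whose canonical spectral measure is precisely $\Psiast_i$. This amounts to relating the upper tail dependence of $C_i$ (equivalently, the canonical exponent measure $\nuast_i$ of $\Gast_i$) to the limit measure~\eqref{eq:29} of the constructed vector, and then checking that the standardized margins make~\eqref{eq:3.15} hold so that Proposition~\ref{prop:3.3} identifies the canonical spectral measure as $\Psiast_i$. Once this identification and the balanced tails are in place, the supermodular invariance and the application of Theorem~\ref{thm:5} are routine.
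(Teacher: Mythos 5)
Your overall strategy coincides with the paper's: build multivariate regularly varying representatives of $\Psiastone$ and $\Psiasttwo$ with identical margins, transfer $C_1 \ordersm C_2$ to these vectors via the invariance of $\ordersm$ under increasing marginal transformations, and conclude through Remark~\ref{rem:12} and Theorem~\ref{thm:5}. Everything after your first step is correct and is exactly what the paper does: with identical margins, the balanced tails condition~\eqref{eq:3.15}, relation~\eqref{eq:309}, and the equivalence~\eqref{eq:310a} all hold trivially, and Remark~\ref{rem:12} then feeds Theorem~\ref{thm:5}(a) and (b).

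The first step, however, is a genuine gap as written, and you have located it yourself. Coupling \emph{arbitrary} identically distributed regularly varying margins by the extreme value copula $C_i$ and asserting that the result is multivariate regularly varying with canonical spectral measure exactly $\Psiast_i$ is a nontrivial copula-to-tail transfer theorem; it is nowhere proved in the paper (only mentioned informally with references such as \citet{Barbe/Fougeres/Genest:2006}), so invoking it leaves the proof incomplete. The missing idea is the paper's choice of margins, which makes the constructed vectors themselves max-stable: the paper sets $\Galphai := \Gast_i \circ T\inv$ with $T(x) = \robrfl{\robrfl{x\compone}\powonebyalpha,\ldots,\robrfl{x\compd}\powonebyalpha}$. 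Since $T$ acts by strictly increasing componentwise maps, $\Galphai$ has the same copula $C_i$ as $\Gast_i$; since $\nualphai := \nuast_i\circ T\inv$ inherits the scaling property with index $-\alpha$, the law $\Galphai$ is max-stable with exponent measure $\nualphai$ and identical \Frechet-type margins. For max-stable laws with identical heavy-tailed margins, multivariate regular variation with limit measure equal to the exponent measure is classical \citep[cf.][]{Resnick:1987}, and by construction the canonical spectral measure of $\nualphai$ is $\Psiast_i$. If you specialize your ``arbitrary regularly varying margins'' to exactly these margins, your construction becomes the paper's and the obstacle you flagged disappears; with arbitrary margins, your argument needs the external transfer theorem stated and verified, including the check that the induced limit measure standardizes to $\nuast_i$.
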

\par
\begin{myproof}
Let $\nuast_i$ denote the canonical exponent measures corresponding
to $\Psiast_i$ and $\Gast_i$.
It is easy to see that the transformed measures 
\[
\nualphai:=\nuast_i\circ T\inv
,\quad i=1,2,
\]
with $\alpha>0$ and the transformation $T$ defined as
\[
T : x \mapsto 
\robrfl{\robrfl{x\compi}\powonebyalpha,\ldots,\robrfl{x\compd}\powonebyalpha}
,
\quad
x\in\Rplusd, 
\]
exhibit the scaling property with index $-\alpha$:
\begin{align*}
\nualphai\robr{tA} 
&=
t\powminusalpha \nualphai(A)
,\quad
A\in\Borel\robr{\Rplusd\setminus\cubr{0}}
\ldotp
\end{align*}
Hence the transformed distributions 
\begin{equation}
\label{eq:320}
\Galphai(x)
:=
\Gast_i\circ T\inv(x) 
= 
\exp\robrfl{-\nualphai\robrfl{[0,x]^c}}
\end{equation}
are max-stable with exponent measures $\nualphai$.
\par
It is well known that max-stable distributions with identical heavy-tailed margins are multivariate regularly varying \citep[cf.][]{Resnick:1987}.
Moreover, the limit measure $\nu$ in the multivariate regular variation condition can be chosen equal to the exponential measure associated with the property of max-stability.  
Consequently, the probability distributions $\Galphai$ for $i=1,2$ and $\alpha>0$ are multivariate regularly varying with tail index $\alpha$ and canonical spectral measures $\Psiast_i$. 
\par
Furthermore, it is easy to see that $X\sim\Galphaone$ and $Y\sim\Galphatwo$ 
have identical margins: 
\[
X\compi \disteq Y\compj
,\quad
i,j\in\cubr{1,\ldots,d}
\ldotp
\]
Moreover, due to the invariance of $\ordersm$ under non-decreasing marginal 
transformations, $C_1 \ordersm C_2$ implies 
\[
\Galphaone \ordersm \Galphatwo
\]
for all $\alpha>0$. 
Thus an application of the ordering criteria from Remark~\ref{rem:12} 
to $X\sim\Galphaone$ and $Y\sim\Galphatwo$ completes the proof.
\end{myproof}
%
%
\section{Examples}
\label{sec:5}
This section concludes the paper by a series of examples with parametric models illustrating the results from the foregoing sections.
Examples~\ref{ex:6} and \ref{ex:2} demonstrate application of Proposition~\ref{prop:4.4} to copula based models and the phenomenon of phase change for random vectors in $\Rplusd$.
The fact that the phase change does not necessarily occur in the general case is demonstrated by multivariate Student-t distributions in Example~\ref{ex:3}.
\par
\begin{example}
\label{ex:6}
Recall the family of Gumbel copulas given by
\begin{equation}
C_\theta(u)
:=
\exp\robrfl{- \robrfl{\sumioned\robrfl{-\log u\compi}^\theta}^{1/\theta}}
,\quad
\theta\in[1,\infty)
\ldotp
\end{equation}
Gumbel copulas are extreme value copulas, i.e., they are copulas of simple max-stable distributions. 
According to \citet{Hu/Wei:2002}, Gumbel copulas with dependence parameter $\theta\in[1,\infty)$ are ordered by $\ordersm$:
\begin{equation}
\label{eq:322}
\forall \theta_1,\theta_2\in[1,\infty) 
\quad
\theta_1 \le \theta_2
\impl
C_{\theta_1} \ordersm C_{\theta_2}
\ldotp
\end{equation}
Consequently, Proposition \ref{prop:4.4} applies to the family of 
canonical spectral measures $\Psiast_\theta$ corresponding to 
the Gumbel copulas $C_\theta$.
Thus $1\le\theta_1\le\theta_2<\infty$ implies 
$\Psiast_{\theta_1} \orderGcalalpha \Psiast_{\theta_2}$ for $\alpha>1$ 
and there is a phase change when $\alpha$ crosses the value $1$, i.e., 
for $\alpha\in(0,1)$ there holds 
$\Psiast_{\theta_2} \orderGcalalpha \Psiast_{\theta_1}$.
\par
Applying Theorem~\ref{theo:3.4}, one obtains ordering with respect to $\orderapl$ for random vectors $X$ and $Y$ on $\Rplusd$ that are multivariate regularly varying with canonical spectral measures 
of Gumbel type and have balanced tails ordered by $\orderapl$. 
In particular, 
this is the case if $X$ and $Y$ have identical 
regularly varying marginal distributions and 
Archimedean copulas that satisfy appropriate 
regularity conditions 
\citep[cf.][]{Genest/Rivest:1989, Barbe/Fougeres/Genest:2006}.
\par 
Moreover, it is also worth a remark that  
multivariate regularly varying random vectors with Archimedean copulas 
can only induce extreme value copulas of Gumbel type 
\citep[cf.][]{Genest/Rivest:1989}.  
\par
Figure~\ref{figure:32} illustrates the resulting diversification effects 
in the bivariate case, 
including indifference to portfolio diversification for $\alpha=1$ and 
the phase change occurring when $\alpha$ crosses this critical value. 
The graphics show the function 
$\xi\compone\mapsto\Psiast_\theta\,\gxialpha$ for selected values of 
$\theta$ and $\alpha$. 
Due to $X\in\Rplusd$, representation 
$\Psiast_\theta\,\gxialpha=\gammaxi/(\gammaeone + \gammaminuseone)$ simplifies to 
$\Psiast_\theta\,\gxialpha=\gammaxi/\gammaeone$
and therefore 
\[
\Psiast_\theta \, g_{e_1,\alpha} = \Psiast_\theta \, g_{e_2,\alpha} = 1
\ldotp
\]
\par
\begin{figure}
\centering
\subfigure[Varying $\alpha$ for $\theta=1.4$]
{\includegraphics[width=\mysubgraphicwidth]{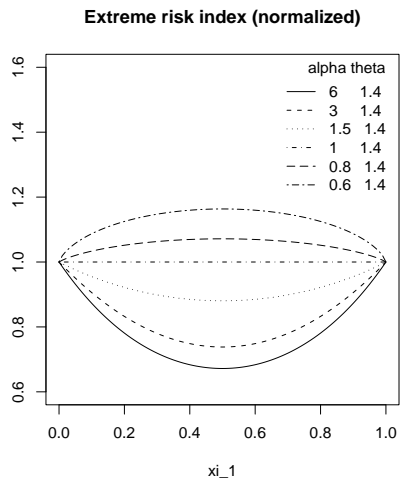}}
\subfigure[Varying $\alpha$ for $\theta=2$]
{\includegraphics[width=\mysubgraphicwidth]{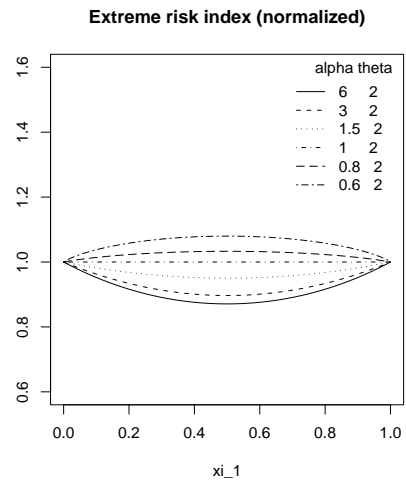}}
\\
\subfigure[Varying $\theta$ for $\alpha>1$]
{\includegraphics[width=\mysubgraphicwidth]{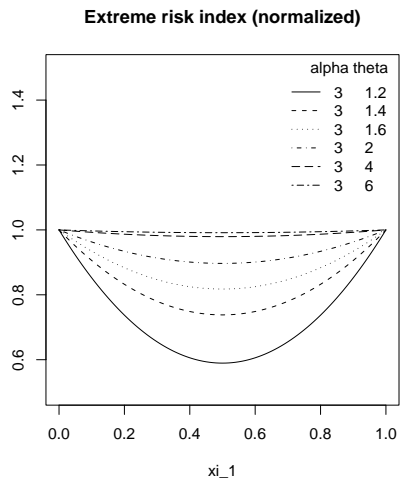}}
\subfigure[Varying $\theta$ for $\alpha<1$]
{\includegraphics[width=\mysubgraphicwidth]{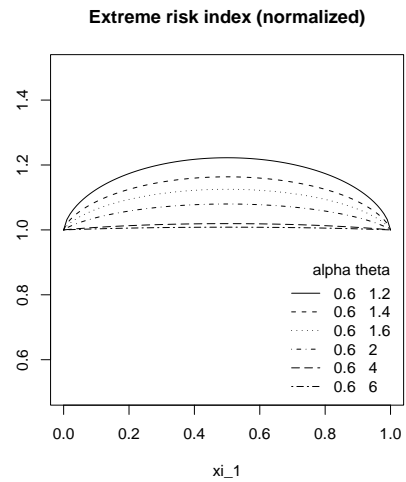}}
%
\caption{Bivariate Gumbel copulas: Diversification effects represented by functions $\xi\compone\mapsto\Psiast_\theta \,\gxialpha$ for selected values of $\theta$ and $\alpha$.}
\label{figure:32}
\end{figure}
\end{example}
As already mentioned above, Theorem~\ref{thm:5} generalizes some 
arguments from \citet{Embrechts/Neslehova/Wuethrich:2009}. 
The next example concerns Galambos copulas as addressed in 
that original publication. 
%
\par
\begin{example}
\label{ex:2}
Another family of extreme value copulas that are ordered by $\ordersm$ 
is the family of $d$-dimensional 
\emph{Galambos copulas} 
with parameter $\theta\in(0,\infty)$:
\begin{equation}
C_\theta(u)
:=
\exp\robrfl{\sum_{I\subset\cubr{1,\ldots,d}} 
(-1)^{\abs{I}}\robrfl{\sum_{i\in I} \robrfl{-\log u\compi}^{-\theta}}^{-1/\theta}
}
\ldotp
\end{equation}

According to \citet{Hu/Wei:2002}, 
$\theta_1 \le \theta_2$ implies $C_{\theta_1} \ordersm C_{\theta_2}$.
Thus Proposition~\ref{prop:4.4} yields ordering of the corresponding canonical spectral measures $\Psiast_\theta$ with respect to $\orderGcalalpha$. 
Similarly to the case of Gumbel copulas, $\theta_1\le\theta_2$ implies $\Psiast_{\theta_1} \orderGcalalpha \Psiast_{\theta_2}$ for $\alpha>1$  
and $\Psiast_{\theta_2} \orderGcalalpha \Psiast_{\theta_1}$ for $\alpha\in(0,1)$. 
%
%
\par
Finally, it should be noted that Galambos copulas correspond to the 
canonical exponent measures of random vectors $X$ in $\Rplusd$ with 
identically distributed regularly varying margins $X\compi$ and 
dependence structure of $-X$ given by an Archimedean copula with a 
regularly varying generator $\phi(1-1/t)$. Models of this type were 
discussed in recent studies of aggregation effects for extreme risks %
\citep[cf.][]{Alink/Loewe/Wuethrich:2004, %
Alink/Loewe/Wuethrich:2005, %
Embrechts/Neslehova/Chavez-Demoulin:2006, %
Barbe/Fougeres/Genest:2006, %
Embrechts/Lambrigger/Wuethrich:2008, %
Embrechts/Neslehova/Wuethrich:2009}.
\end{example}
\par
The final example illustrates results established in Proposition~\ref{prop:3.7} and Theorem~\ref{theo:2.4}. In particular, it shows that elliptical distributions do not exhibit a phase change at $\alpha=1$. 
\par
\begin{example}
\label{ex:3}
Recall multivariate Student-t distributions and 
consider the case with equal degrees of freedom, i.e., 
\begin{equation}
X\disteq \mu_X + R  A_X  U,
\quad
Y\disteq \mu_Y + R A_Y U,
\end{equation}
where $R\disteq\abs{Z}$ for a Student-t distributed random variable $Z$ with degrees of freedom equal to $\alpha\in(0,\infty)$. 
Further, let the generalized covariance matrices $C_X=C(\rho_X)$ and $C_Y=C(\rho_Y)$ be defined as 
\begin{equation}
\label{eq:146}
C(\rho):=
\left(
\begin{array}{cc}
1 &\rho\\
\rho & 1
\end{array}
\right)
\end{equation} 
and assume that $\rho_X \le \rho_Y$.
\par
As already mentioned in Remark~\ref{rem:2.6}(\ref{item:rem:2.6.a}), 
$C_X$ and $C_Y$ satisfy condition~\eqref{eq:3.26} and 
Proposition~\ref{prop:3.7} yields $X \orderapl Y$. Moreover, 
Proposition~\ref{prop:3.7} implies a uniform ordering of diversification 
effects in the sense that 
\[
\PsiastX=\Psiast_{\alpha,\rho_X} \orderGcalalpha \Psiast_{\alpha,\rho_Y}=\PsiastY
\]
for all $\alpha\in(0,\infty)$.
\par
Figure~\ref{figure:7} shows functions $\xi\compone \mapsto \Psiast_{\alpha,\rho}\,\gxialpha$ for selected parameter values $\rho$ and $\alpha$ that illustrate the ordering of asymptotic portfolio losses by $\rho$ and the missing phase change at $\alpha=1$. The indifference to portfolio diversification for $\alpha=1$ is also absent. 
Moreover, symmetry of elliptical distributions implies $\gammaminuseone = \gammaeone$ and, as a result,
\[
\Psiast_{\alpha,\rho}\, g_{e_1,\alpha} 
= 
\Psiast_{\alpha,\rho}\, g_{e_2,\alpha}  
=
1/2
\ldotp
\] 
Thus the standardization of the plots in Figure~\ref{figure:7} is different from that in Figure~\ref{figure:32}. 
\par
\begin{figure}
\centering
\subfigure[Varying $\alpha$ for $\rho>0$]
{\includegraphics[width=\mysubgraphicwidth]{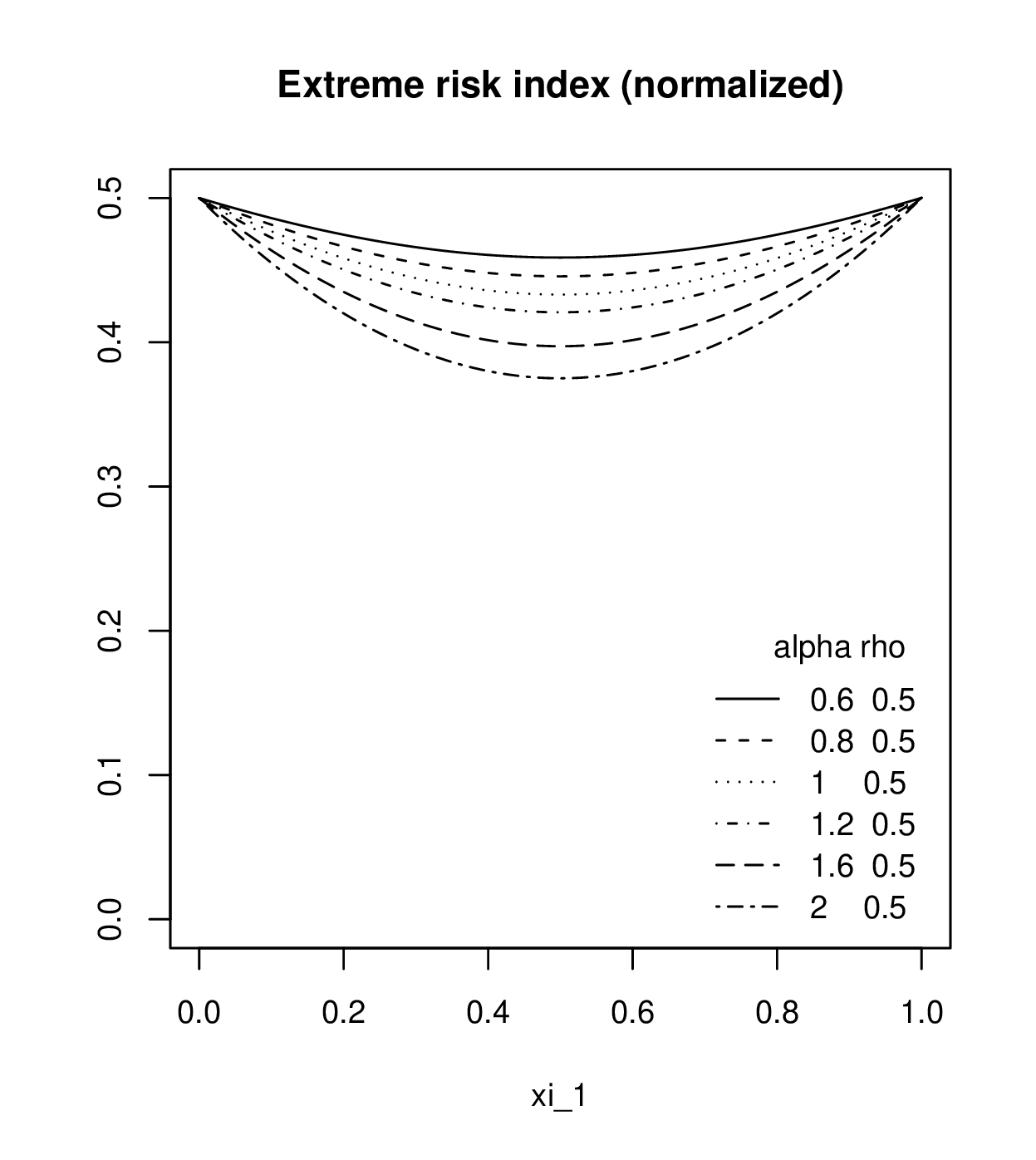}}
\subfigure[Varying $\alpha$ for $\rho<0$]
{\includegraphics[width=\mysubgraphicwidth]{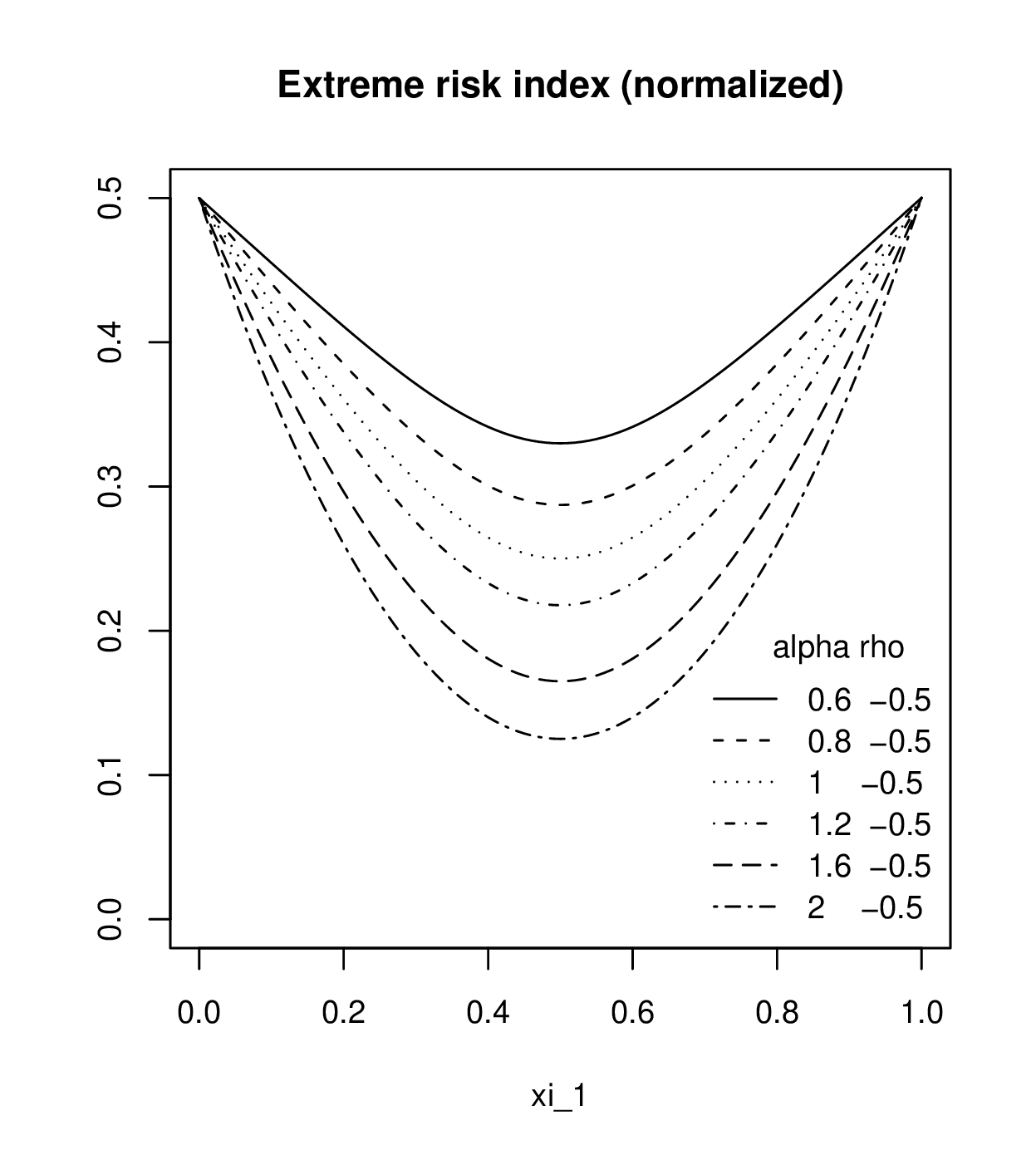}}
\\
\subfigure[Varying $\rho$ for $\alpha>1$]
{\includegraphics[width=\mysubgraphicwidth]{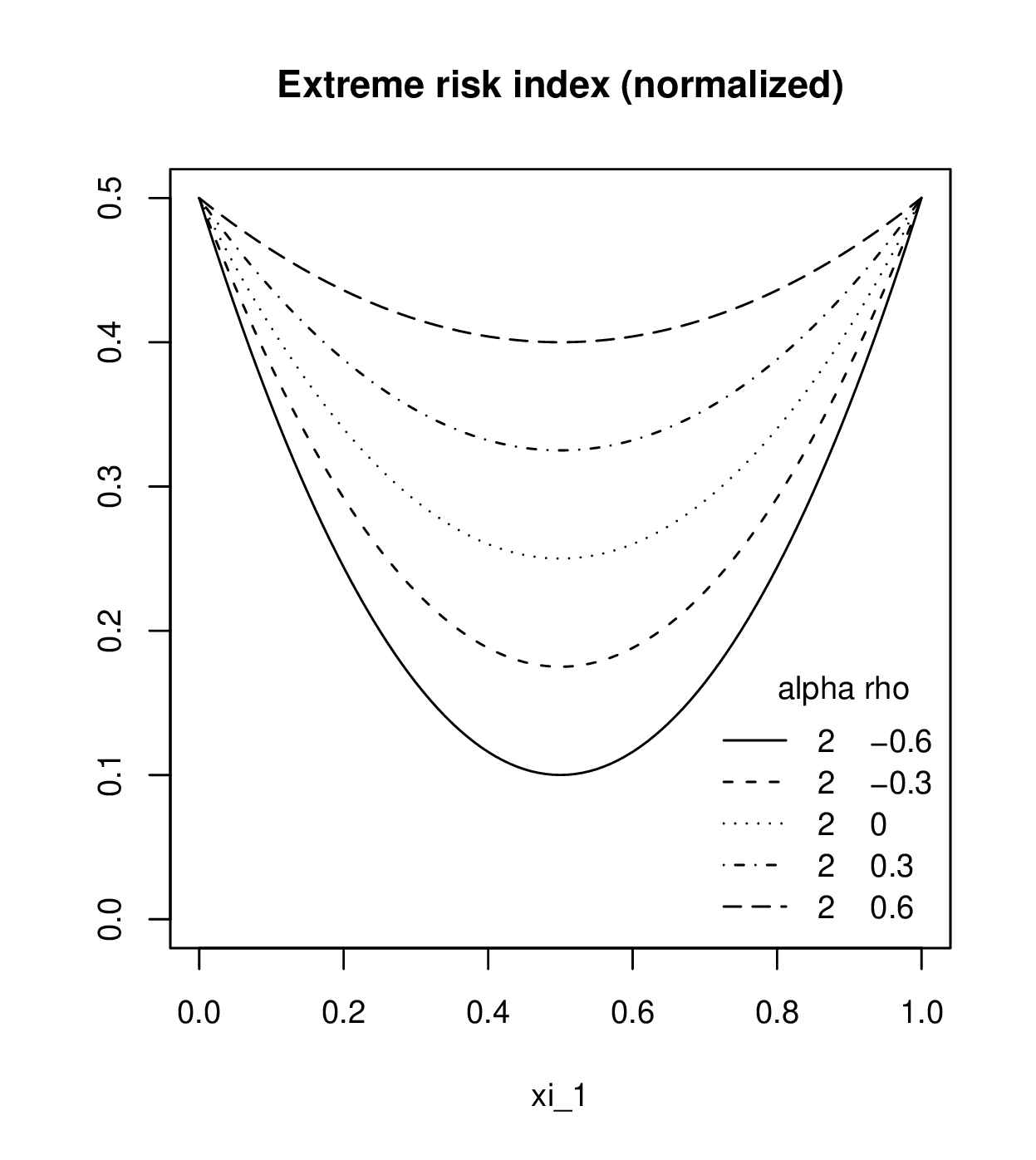}}
\subfigure[Varying $\rho$ for $\alpha<1$]
{\includegraphics[width=\mysubgraphicwidth]{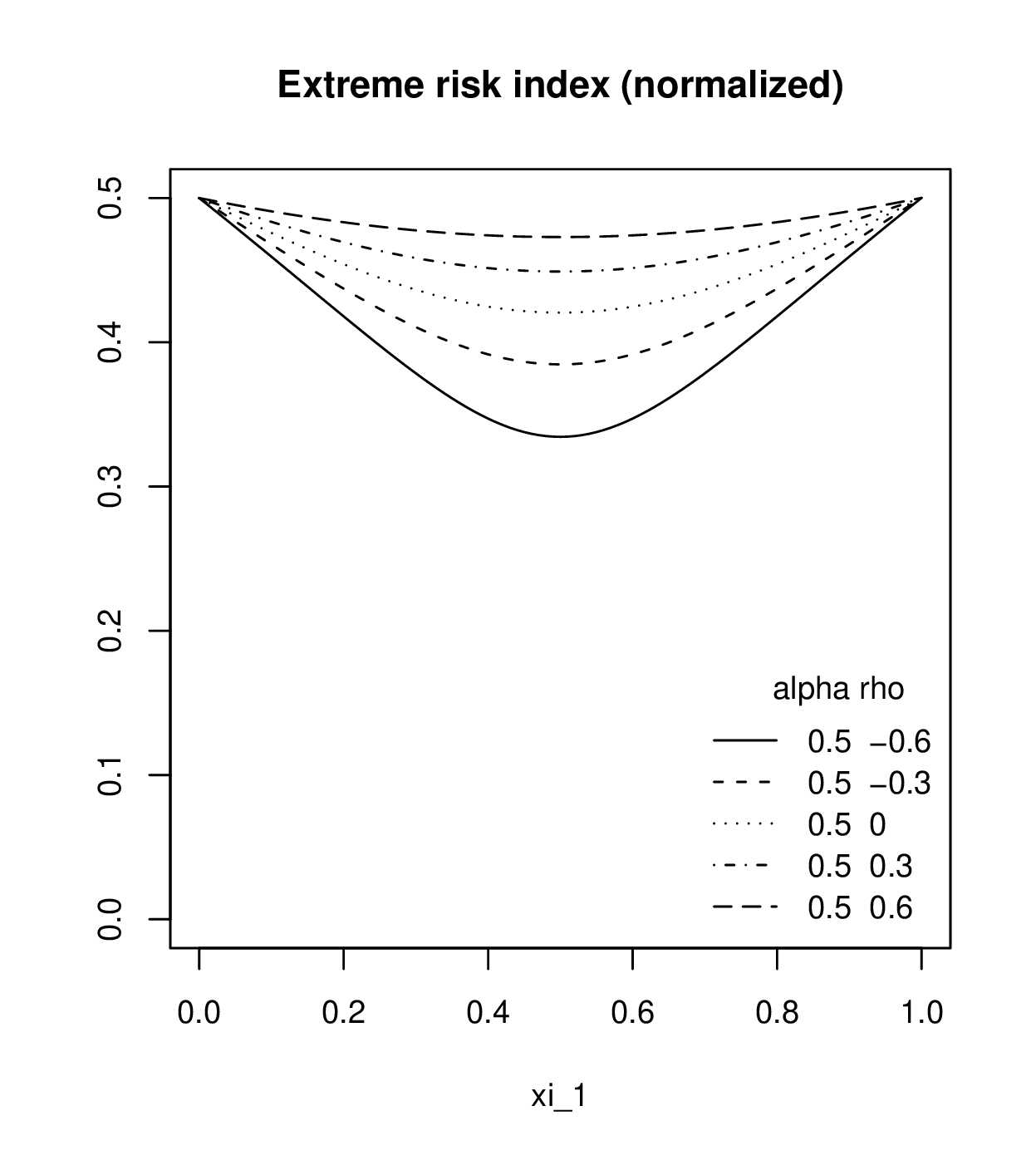}}
%
\caption{Bivariate elliptical distributions with generalized covariance matrices defined in~\eqref{eq:146}: Diversification effects represented by functions $\xi\compone\mapsto\Psiast_{\alpha,\rho}\, \gxialpha$ for selected values of $\rho$ and $\alpha$.}
\label{figure:7}
\end{figure}
\end{example}
\par
\begin{remark}
\label{rem:15}%
All examples the authors are aware of suggest that the diversification 
coefficient $\Psiast\gxialpha$ is decreasing in $\alpha$.
This means that risk diversification is stronger for lighter component tails 
than for heavier ones.
\par
However, it should be noted that the influence of the tail index $\alpha$ 
on risk aggregation is different from that. The asymptotic risk aggregation coefficient
\[
q_d := \lim_{t\toinf}\frac{\Prob\cubrfl{X\compone+\ldots+X\compd > t}}{\Prob\cubrfl{X\compone>t}}
\]
introduced by %
\citet{Wuethrich:2003} is known to be increasing in $\alpha$ when the loss components 
$X\compi$ are non-negative %
\citep[cf.][]{Barbe/Fougeres/Genest:2006}. 
It is easy to see that the restriction to non-negative $X\compi$ implies 
\[
q_d
=
\lim_{t\toinf}\frac{\Prob\cubrfl{\norm{X}_1 > t}}{\Prob\cubrfl{X\compone>t}}
=
\oneby{\gammaeone}
\ldotp
\]
Moreover, denoting the uniformly diversified portfolio by $\eta$, 
\[
\eta:=d\inv\robr{1,\ldots,1}
,
\]
one obtains 
\[
q_d 
= 
\lim_{t\toinf}\frac{\Prob\cubrfl{\eta\tr X > d\inv t}}{\Prob\cubrfl{X\compone>t}}
=
d\powalpha \frac{\gamma_\eta}{\gammaeone}
\ldotp
\]
Thus $q_d$ is a product of the factor $d\powalpha$, which is 
increasing in $\alpha$, and the ratio $\gamma_\eta/\gammaeone$, 
which is closely related to to the diversification coefficient 
$\Psiast\gxialpha$. 
\par
In particular, given equal marginal weights, i.e., 
\[
\gammaeone=\ldots=\gammaed,
\]
Proposition~\ref{prop:3.3}(\ref{item:L39.3}) yields 
\[
\frac{\gamma_\eta}{\gammaeone} = \Psiast g_{\eta,\alpha}
\ldotp
\]
As already mentioned above, the coefficients $\Psiast\gxialpha$ 
with $\xi\in\Simpd$ are decreasing in all examples considered here.
This means that the aggregation and the diversification of risks are 
influenced by the tail index $\alpha$ in different, maybe even always 
contrary ways.
\par
The question for the generality of this contrary influence is currently open. 
One can easily prove that the extreme risk index  $\gammaxi= \Psi\fxialpha$ is decreasing in $\alpha$ for $\xi\in\Simpd$. However, this result cannot be extended to $\Psiast\gxialpha$ directly  since $\Psiast\gxialpha$ is related to $\Psi\fxialpha$ by the normalizations~\eqref{eq:3.17} and~\eqref{eq:3.18}. 
%
The question 
whether $\Psiast\gxialpha$ with arbitrary $\xi\in\Simpd$  or at least 
$\Psiast g_{\eta,\alpha}$ is generally decreasing in $\alpha$ 
is an interesting subject for further research. 
\end{remark}
\section{Acknowledgements}
%
The research underlying this paper was done at the University of Freiburg. 
Georg Mainik would also like to thank RiskLab for financial support. 
%
%
\bibliography{apl-order}
\end{document}